\definecolor{dark-red}{rgb}{0.7,0.25,0.25}
\definecolor{dark-blue}{rgb}{0.15,0.15,0.55}
\definecolor{medium-blue}{rgb}{0,0,.8}
\definecolor{DarkGreen}{RGB}{0,150,0}
\definecolor{rho}{named}{red}
\theoremstyle{plain}
\newtheorem{thm}{Theorem}[section]
\newtheorem{thmalpha}{Theorem}
\newtheorem{coralpha}[thmalpha]{Corollary}
\newtheorem{cor}[thm]{Corollary}
\newtheorem{lem}[thm]{Lemma}
\newtheorem{prop}[thm]{Proposition}
\newtheorem*{thm*}{Theorem}
\theoremstyle{definition}
\newtheorem{defn}[thm]{Definition}
\newtheorem{ques}[thm]{Question}
\theoremstyle{remark}
\newtheorem{rem}[thm]{Remark}
\numberwithin{equation}{section}
\DeclareMathOperator{\End}{End}
\DeclareMathOperator{\Lie}{Lie}
\DeclareMathOperator{\Mob}{M\ddot{o}b}
\DeclareMathOperator{\supp}{supp}
\renewcommand{\Re}{\operatorname{Re}}
\renewcommand{\Im}{\operatorname{Im}}
\newcommand{\abs}[1]{\left| #1 \right|}
\newcommand{\ip}[1]{\langle #1 \rangle}
\newcommand{\norm}[1]{\left\| #1 \right\|}
\def\semicolon{;}
\def\applytolist#1{
    \expandafter\def\csname multi#1\endcsname##1{
        \def\multiack{##1}\ifx\multiack\semicolon
            \def\next{\relax}
        \else
            \csname #1\endcsname{##1}
            \def\next{\csname multi#1\endcsname}
        \fi
        \next}
    \csname multi#1\endcsname}
\def\calc#1{\expandafter\def\csname c#1\endcsname{{\mathcal #1}}}
\def\bbc#1{\expandafter\def\csname bb#1\endcsname{{\mathbb #1}}}
\def\bfc#1{\expandafter\def\csname bf#1\endcsname{{\mathbf #1}}}
\def\sfc#1{\expandafter\def\csname s#1\endcsname{{\sf #1}}}
\newcommand{\noshow}[1]{}
\title{The Bisognano-Wichmann property for non-unitary Wightman conformal field theories}
\author{James E. Tener}
\date{}
\begin{document}

\maketitle

\begin{abstract}
The Bisognano-Wichmann and Haag duality properties for algebraic quantum field theories are often studied using the powerful tools of Tomita-Takesaki modular theory for nets of operator algebras.
In this article, we study analogous properties of nets of algebras generated by smeared Wightman fields, for potentially non-unitary theories.
In light of recent work \cite{CarpiRaymondTanimotoTener25} constructing Wightman field theories for (non-unitary) M\"obius vertex algebras, we obtain a broadly applicable non-unitary version of the Bisognano-Wichmann property.
In this setting we do not have access to the traditional tools of Hilbert space functional analysis, like functional calculus. Instead, results analogous to those of Tomita-Takesaki theory are derived `by hand' from the Wightman axioms.
As an application, we demonstrate Haag duality for nets of smeared Wightman fields.
\end{abstract}

{\hypersetup{linkcolor=black}
\setcounter{tocdepth}{2}
\tableofcontents
}

\section{Introduction}
The foundational articles \cite{BisognanoWichmann75,BisognanoWichmann76} exhibited what is now called the ``Bisognano-Wichmann property'' in axiomatic quantum field theory, which links 1) the analytic continuation of the symmetry group, 2) the PCT operator, and 3) the adjoint operation on smeared fields or local observables. 
Bisognano and Wichmann worked in the context of $4$-dimensional Wightman-type QFTs on Minkowski space, but analogous results were established for $2$-dimensional conformal field theories.
This was done under Wightman-type axioms in \cite{BuchholzSchulz-Mirbach90}, and using purely operator algebraic methods in \cite{BrunettiGuidoLongo93,GabbianiFrohlich93} under Haag-Kastler axioms.
In the operator algebraic setting, a beautiful picture emerges where physical symmetries realise the Tomita-Takesaki modular automorphisms and modular conjugation corresponding to the vacuum state.

In this article, we study the Bisognano-Wichmann property for 2d chiral conformal field theories (CFTs) which are potentially \emph{non-unitary}, which is to say theories without a compatible inner product.
We are motivated by the following question: can the operator algebraic methods of algebraic conformal field theory be applied to non-unitary conformal field theories?
While a positive answer would obviously be quite desirable due to the recent surge in interest in non-unitary CFTs, it is a daunting challenge to pursue this line of thought due to the fundamental reliance of operator algebraic methods on Hilbert space techniques.
We see a potential path forward as follows.

It has long been understood that the axioms of a vertex algebra (the most common way of axiomatizing a non-unitary conformal field theory) are closely tied to the Wightman axioms \cite[\S1.2]{Kac98}, and in recent joint work we showed that there is an equivalence of categories between (potentially non-unitary) M\"obius-covariant Wightman CFTs on the unit circle $S^1$ and (potentially non-unitary) M\"obius vertex algebras which are generated by a chosen set of quasiprimary vectors \cite{CarpiRaymondTanimotoTener25,RaymondTanimotoTener22}.
Such a Wightman theory consists of a family $\cF$ of operator-valued distributions $\varphi:C^\infty(S^1) \to \End(\cD)$ on a common, invariant domain $\cD$ cyclically generated from a vacuum vector $\Omega$, along with a compatible positive-energy representation $U$ of the M\"obius group $\Mob=\operatorname{PSU}(1,1)$ of holomorphic automorphisms of the unit disk.

Following \cite{StreaterWightman64}, we may then form for each interval\footnote{i.e. open connected nonempty non-dense subset} $I \subset S^1$ the polynomial algebra $\cP(I)$ generated by all smeared fields $\varphi(f)$, where $\varphi \in \cF$ and $f \in C^\infty(S^1)$ is supported in $I$.
These polynomial algebras formally satisfy many of the same axioms as a Haag-Kastler net of operator algebras, but they are actually built from a vertex algebra.
Our philosophy therefore is to try to obtain analogous results for the polynomial nets $\cP$ as one does with Haag-Kastler nets, and in doing so apply operator algebraic methods in the non-unitary setting.
An early success in this regard was the proof of the Reeh-Schlieder property that $\Omega$ is cyclic and separating for the algebras $\cP(I)$ for non-unitary Wightman theories \cite[App. A]{CarpiRaymondTanimotoTener25}.

In practice, transferring operator algebraic methods to the non-unitary setting can be quite difficult as these methods often rely crucially on features of operators on Hilbert spaces, e.g. by applying functional calculus to produce operators with certain properties.
However, in this article we report success in this endeavor with regard to the Bisognano-Wichmann property and Haag duality.
We are guided by results proven in the setting of unitary CFTs, but often a novel approach is required due to the lack of general functional analytic theory.
For example, one cannot invoke the general Tomita-Takesaki theory, but instead the key results can be established `by hand' through careful study of the CFT axioms; the same is true with regard to applications of functional calculus.

We now summarize some of the main results.
Fix a M\"obius-covariant Wightman CFT $(\cF,\cD,U,\Omega)$ on $S^1$ as described above.
Let 
\[
v_t(z) = \frac{\cosh(t/2)z - \sinh(t/2)}{-\sinh(t/2)z + \cosh(t/2)}
\]
be the one-parameter subgroup of $\Mob$ leaving the upper- and lower-semicircles $I_{\pm}$ invariant (which uniquely determines $v_t$ up to rescaling the parameter $t$), and let $V_t = U(v_t) \in \End(\cD)$.
The analytic continuations $\widetilde V_{\pm i \pi}$ have domains $D(\widetilde V_{\pm i\pi}) \subset \cD$ given by vectors $\Phi \in \cD$ for which there exists a $\Psi \in \cD$ such that the following holds: Let $\bbS_{\pm \pi i} \subset \bbC$ be the closed strip bounded by the lines $\bbR$ and $\bbR \pm \pi i$. For every continuous (with respect to the canonical $\cF$-strong topology on $\cD$) linear functional $\lambda:\cD \to \bbC$, there exists a function $G_\lambda:\bbS_{\pm \pi i} \to \bbC$, continuous on the closed strip and holomorphic on the interior, such that 
\begin{equation}\label{eqn: intro strip function}
G_\lambda(t) = \lambda(V_t\Phi), \qquad G_\lambda(t \pm i \pi) = \lambda(V_t \Psi)
\end{equation}
for all $t \in \bbR$.
Given such a $\Phi \in D(\widetilde V_{\pm i \pi})$ with its companion $\Psi \in \cD$, we set $\widetilde V_{\pm i \pi} \Phi = \Psi$.
Note that in contrast to the unitary setting, the operators $\widetilde V_{\pm i \pi}$ are not provided by functional calculus, and any properties of the operators must be established directly from the definition.

Traditionally, the Bisognano-Wichmann property has been formulated in a unitary setting, identifying the analytic continuation of the group $V_t$ with the PCT operator and the adjoint operation on smeared fields.
On the other hand, for a non-unitary CFT with no extra structure, one may not have a PCT operator or adjoint operation.
Even in this general setting, however, we are still able to compute the analytic continuation of the group $V_t$ (Theorem~\ref{thm: nonunitary Vipi} in the main body of the article):
\begin{thmalpha}\label{thm: intro nonunitary Vipi}
    Let $(\cF,\cD,U,\Omega)$ be a potentially non-unitary Wightman CFT on $S^1$. Then $\cP(I_+)\Omega \subset D(\widetilde V_{i\pi})$. If $\varphi_1, \ldots, \varphi_k \in \cF$ with conformal dimension $d_j$ and $f_j \in C^\infty(S^1)$ with $\operatorname{supp} f_j \subset I_+$, then 
    \[
    \widetilde V_{i\pi} \varphi_1(f_1) \cdots \varphi_k(f_k)\Omega = (-1)^{\sum d_j} \varphi_k(f_k \circ z^{-1}) \cdots \varphi_1(f_1 \circ z^{-1})\Omega.
    \]
    The same holds with $I_+$ replaced by $I_-$ and $\widetilde V_{i\pi}$ replaced by $\widetilde V_{-i\pi}$.
\end{thmalpha}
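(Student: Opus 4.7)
My plan is to construct, for each $\cF$-strong continuous linear functional $\lambda$ on $\cD$, an explicit holomorphic extension $G_\lambda$ of $t\mapsto\lambda(V_t\Phi)$ to $\bbS_{i\pi}$ and to identify its boundary value at $t+i\pi$ with $\lambda$ applied to the proposed $\Psi$. By M\"obius covariance of the fields, together with $V_t\Omega=\Omega$, for real $t$ I can rewrite
\[
V_t\Phi = \varphi_1(v_t\cdot f_1)\cdots \varphi_k(v_t\cdot f_k)\,\Omega,
\]
where $v_t\cdot f$ denotes the natural action of $v_t$ on test functions (precomposition with $v_{-t}$, twisted by the conformal-weight Jacobian). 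For real $t$ the supports $v_t(\supp f_j)$ remain in $I_+$, and the problem reduces to extending this expression, paired with $\lambda$, to the strip.

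\textbf{Extending into the strip via deformed contours.} The formula for $v_t$ makes sense for all $t\in\bbC$ and defines a holomorphic family of complex M\"obius transformations. For $t$ in the open strip, $v_t$ pushes each arc $\supp f_j\subset I_+$ off of $S^1$ into a curve in $\bbC^*$. I would reinterpret $\lambda\bigl(\varphi_1(v_t\cdot f_1)\cdots \varphi_k(v_t\cdot f_k)\Omega\bigr)$ as a multiple contour integral, against the deformed contours $v_t(\supp f_j)$ with their conformal Jacobians, of the holomorphic extension of the vacuum expectation value $(z_1,\dots,z_k)\mapsto\lambda\bigl(\varphi_1(z_1)\cdots\varphi_k(z_k)\Omega\bigr)$. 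The essential analytic input, supplied by positive energy and the Wightman axioms, is that this expectation value extends from a distribution on $(S^1)^k$ to a function holomorphic on the radially ordered region $\{|z_1|>\cdots>|z_k|>0\}$. Verifying that the family of deformed contours stays inside this radially ordered domain for every $t\in\bbS_{i\pi}$ then lets me differentiate under the integral sign in $t$ and conclude that $G_\lambda$ is holomorphic on the interior of the strip and continuous up to the boundary.

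\textbf{Boundary value and main obstacle.} A direct computation gives $v_{i\pi}(z)=1/z$, so at $t=i\pi$ the deformed contours return to $S^1$ but their supports shift from $I_+$ to $I_-$ and their orientations reverse: what was outermost in the radial ordering becomes innermost. This reversal forces the fields to appear in the opposite order in the boundary formula, while the weight-$d_j$ Jacobian $v_{i\pi}'(z)=-1/z^2$ produces one factor of $-1$ per field, accounting for the prefactor $(-1)^{\sum d_j}$. What distinguishes this argument from the unitary case is that $\widetilde V_{i\pi}$ is defined purely by its strip property rather than by functional calculus, and Tomita--Takesaki theory is unavailable; so every ingredient --- the holomorphy of the pairing, the exchange of integration with analytic continuation, and the identification of the boundary value --- must be built by hand from the Wightman axioms. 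I expect the hardest step to be the geometric verification that the family of contours $v_t(\supp f_j)$ stays in the radially ordered holomorphy domain uniformly for $t\in\bbS_{i\pi}$, together with the continuity argument needed to push the contour-integral manipulations through a general $\cF$-strong continuous $\lambda$ in the absence of a normed structure on $\cD$.
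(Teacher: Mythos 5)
There is a genuine gap, in fact two. First, your central device --- rewriting $\lambda\bigl(\varphi_1(v_t\cdot f_1)\cdots\varphi_k(v_t\cdot f_k)\Omega\bigr)$ as a contour integral over the deformed contours $v_t(\supp f_j)$ --- requires the integrand to be holomorphic in each $z_j$, and hence requires each test function $f_j$ to extend holomorphically off $S^1$ into the region swept by the deformation. A general $f_j \in C^\infty(S^1)$ with $\supp f_j \subset I_+$ admits no such extension, so your argument can at best treat a dense subclass of test functions, and the proposal contains no mechanism for passing to the general case. This is not a technicality: the extension from the dense class to all of $\cP(I_+)\Omega$ is where the paper's real work lies (the $e^{\Re(\tau)^2}$-growth estimates of Lemmas~\ref{lem: growth estimate on R} and \ref{lem: growth estimate on S}, the maximum-principle argument on the strip, and the locally uniform limit in the proof of Theorem~\ref{thm: nonunitary Vipi}), precisely because in the absence of Hilbert-space tools one cannot simply take a closure. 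Second, the "essential analytic input" you identify as the hardest step is actually false as stated: all $k$ contours are carried by the \emph{same} M\"obius map $v_t$, so for $t$ in the open strip they all land on the same arc $v_t(I_+)$ (and the $\supp f_j$ may overlap arbitrarily), which is incompatible with the strict radial ordering $\abs{z_1}>\cdots>\abs{z_k}$ required for the holomorphic extension of the correlation function. Repairing this would force you to separate the contours radially and then justify holomorphy of the correlation function across near-diagonal configurations, which for a general (possibly logarithmic, infinite-weight-multiplicity) M\"obius vertex algebra is essentially the whole problem restated.

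The paper avoids both issues by deforming the test functions rather than the contours. It restricts first to $f_j = F_j|_{I_+}$ with $F_j \in \cX$, i.e.\ boundary values of functions holomorphic on $\bbO = \{\abs{z}\ge 1\}$ vanishing to all orders at $\pm 1$; then $\beta_{d}(v_\tau)F|_{I_+}(z) = X_{v_\tau}(v_{-\tau}(z))^{d-1}F(v_{-\tau}(z))$ is again a smooth function \emph{on $S^1$} for every $\tau\in\bbS$ (since $v_{-\tau}(I_+)\subset\bbO$), depending holomorphically on $\tau$ as a $C_0^\infty(I_+)$-valued map. The smearing never leaves the circle and no analytic continuation of correlation functions is needed; holomorphy of $G_\lambda$ follows from joint continuity of $\lambda$ alone. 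The boundary value at $i\pi$ is then computed not by your orientation-reversal heuristic but by the identity $\varphi(F\circ z^{-1}|_{S^1})\Omega = 0$ (a field smeared with a function holomorphic in the disk and vanishing at $0$ annihilates the vacuum) combined with locality, applied iteratively to reorder the fields; the sign bookkeeping is $(-1)^{d_j-1}$ per field from the cocycle $X_{v_{i\pi}}^{d_j-1}=(-1)^{d_j-1}$ plus $(-1)$ per field from the reordering, netting the stated $(-1)^{\sum d_j}$ --- your claim of "one factor of $-1$ per field from the Jacobian" does not by itself produce the right sign. Finally, Lemma~\ref{lem: X dense} shows the class $\cX$ is dense and the growth estimates let one pass to the limit. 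You would need to supply analogues of all three ingredients to complete your route.
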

It is not too difficult to establish the result of Theorem~\ref{thm: intro nonunitary Vipi} when the functions $f_j$ lie in a dense subspace, but in the absence of the usual Hilbert space tools,   extending this result to all of $\cP(I_\pm)\Omega$ requires a careful study of the analytic functions \eqref{eqn: intro strip function}.
We do not assume anything like finite-dimensionality of the conformal weight spaces, and so Theorem~\ref{thm: intro nonunitary Vipi} applies to a variety of interesting log CFTs, such as the $\beta\gamma$-ghost system with $c=2$ (typically studied as a vertex algebra, but realizable as a Wightman CFT by \cite{CarpiRaymondTanimotoTener25}).

In order to say more, we will now assume that our Wightman CFT possesses an invariant nondegenerate Hermitian form $\ip{ \, \cdot \, , \, \cdot \, }$ on $\cD$, along with its antiunitary PCT operator $\theta$.
Such a theory would be unitary if we assumed the form to be positive definite, but there are important non-unitary examples motivating our study, such as the non-unitary minimal models (e.g. the Yang-Lee model).
In this case we have a notion of adjoint of an operator on $\cD$ (although the adjoint is not guaranteed to exist), and the polynomial algebras become $*$-algebras.
In this setting, we consider a slightly modified version of the operators $\widetilde V_{\pm i\pi}$, which we denote $V_{\pm i\pi}$, for which we only require analytic continuations \eqref{eqn: intro strip function} to exist for linear functionals $\lambda$ of the form $\ip{ \, \cdot \, , \Phi}$.
These modified operators are self-adjoint:
\[
V_{\pm i \pi}^* = V_{\pm i\pi}
\]
in the strong sense of partially defined operators on $\cD$ (Theorem~\ref{thm: Vipi selfadjoint}).
For these operators we obtain the familiar Bisognano-Wichmann property linking the analytic continuation of $V_t$, the PCT operator, and the adjoint operation; it appears as  Theorem~\ref{thm: involutive BW} in the body of the text.

\begin{thmalpha}\label{thm: into involutive BW}
    Let $(\cF,\cD,U,\Omega)$ be a potentially non-unitary Wightman CFT on $S^1$ with invariant nondegenerate Hermitian form and antiunitary PCT operator $\theta$.
    Then $\cP(I_\pm)\Omega \subset D(V_{\pm i\pi})$ and for $x \in \cP(I_\pm)$ we have
    \[
    V_{\pm i\pi} x\Omega = \theta x^*\Omega.
    \]
\end{thmalpha}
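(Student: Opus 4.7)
The plan is to derive Theorem~\ref{thm: into involutive BW} in two moves: first, transfer the explicit formula of Theorem~\ref{thm: intro nonunitary Vipi} from the operator $\widetilde V_{i\pi}$ to the modified operator $V_{i\pi}$; second, identify the resulting expression with $\theta x^*\Omega$ via the PCT covariance rule for smeared fields. I describe the argument for $I_+$ and $V_{i\pi}$; the $I_-$ case is entirely symmetric.

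First, I would show that $D(\widetilde V_{i\pi}) \subseteq D(V_{i\pi})$ and that the two operators agree on this smaller domain. The condition defining $V_{i\pi}$ is strictly weaker than that defining $\widetilde V_{i\pi}$: one only requires strip functions for linear functionals of the form $\lambda = \ip{\,\cdot\,,\Phi'}$. Uniqueness of the companion vector $\Psi$ is the only nontrivial point, but it is immediate from nondegeneracy of the Hermitian form, since $\ip{V_t\Psi_1,\Phi'} = \ip{V_t\Psi_2,\Phi'}$ for all $\Phi'\in\cD$ and $t\in\bbR$ already forces $\Psi_1 = \Psi_2$ upon evaluation at $t = 0$. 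Combining this inclusion with Theorem~\ref{thm: intro nonunitary Vipi} yields $\cP(I_+)\Omega \subseteq D(V_{i\pi})$ together with
\[
V_{i\pi}\, x\Omega = (-1)^{\sum d_j}\,\varphi_k(f_k \circ z^{-1}) \cdots \varphi_1(f_1 \circ z^{-1})\Omega
\]
for $x = \varphi_1(f_1)\cdots\varphi_k(f_k)$ with $\supp f_j \subset I_+$.

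Next, I would match this right-hand side with $\theta x^*\Omega$. Using $x^* = \varphi_k(f_k)^*\cdots\varphi_1(f_1)^*$ together with PCT invariance of the vacuum $\theta\Omega = \Omega$, one inserts $\theta^{-1}\theta$ between consecutive adjoints and applies the PCT transformation rule for smeared quasiprimary fields. In the conventions of the paper this should read, up to signs and phases that must be cross-checked against the setup, $\theta\,\varphi(f)^*\,\theta^{-1} = (-1)^d\,\varphi(f\circ z^{-1})$ for a field of conformal dimension $d$; this reflects the fact that $z\mapsto z^{-1}$ is the restriction to $S^1$ of the continuation $v_{i\pi}$, which swaps $I_+$ and $I_-$. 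Applying this identity $k$ times reproduces exactly the expression above, yielding $V_{i\pi}x\Omega = \theta x^*\Omega$.

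The main obstacle is largely absorbed by Theorem~\ref{thm: intro nonunitary Vipi}, where the existence of the strip functions for all of $\cP(I_+)\Omega$ is established `by hand' in the absence of functional calculus. What remains for Theorem~\ref{thm: into involutive BW} is essentially bookkeeping, but some care is required in the second step: the product structure of $x$ produces a reversal of factor order under $\,*\,$ which must cancel the reversal already built into the formula from Theorem~\ref{thm: intro nonunitary Vipi}, and the $(-1)^{d_j}$ factors must arise from applying the PCT rule to the adjointed fields $\varphi_j(f_j)^*$ rather than to $\varphi_j(f_j)$ themselves. Once the PCT transformation law has been established in the paper's conventions earlier, this matching is routine.
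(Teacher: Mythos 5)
Your proposal is correct and follows essentially the same route as the paper's proof of Theorem~\ref{thm: involutive BW}: the paper likewise combines the inclusion $D(\widetilde V_{\pm i\pi})\subset D(V_{\pm i\pi})$ (with agreement of the operators, recorded as \eqref{eqn: Vtau extension}) and Theorem~\ref{thm: nonunitary Vipi} with the PCT rule $\theta\varphi_j(f_j)\theta=(-1)^{d_j}\varphi_j(f_j\circ z^{-1})^*$ from \eqref{eqn: invariant sesquilinear}. Your guessed sign convention is exactly the paper's, and the remaining bookkeeping is as you describe.
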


\noindent  We also obtain the KMS property as Corollary~\ref{cor: KMS}:
\begin{coralpha}
    Let $\phi:\cP(I_+) \to \bbC$ be the linear functional
    \[
    \phi(x) = \ip{x\Omega,\Omega}
    \]
    and let $\alpha_t = \operatorname{Ad} V_t \in \operatorname{Aut}(\cP(I_+))$.
    Let $\bbS_{2\pi i}$ be the closed strip bounded by $\bbR$ and $\bbR + 2\pi i$.
    Then for every $x,y \in \cP(I_+)$, there exists a function $F:\bbS_{2\pi i} \to \bbC$ that is continuous on the closed strip, holomorphic in the interior, and satisfies for all $t \in \bbR$
    \[
    F(t) = \phi(x\alpha_t(y)), \qquad F(t+2\pi i) = \phi(\alpha_t(y)x).
    \]
\end{coralpha}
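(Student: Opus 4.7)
The plan is to construct $F$ on $\bbS_{2\pi i}$ by gluing two holomorphic functions on strips of width $\pi$, both obtained by applying Theorem~\ref{thm: involutive BW} to $y\Omega \in D(V_{i\pi})$ with two different continuous linear functionals. First I would take $\lambda_1(\Psi) = \ip{\Psi, x^*\Omega}$ to produce a continuous function $F_1 : \bbS_{i\pi} \to \bbC$, holomorphic in the interior, with
\[
F_1(t) = \ip{V_t y\Omega, x^*\Omega}, \qquad F_1(t+i\pi) = \ip{V_t \theta y^*\Omega, x^*\Omega}.
\]
Using $V_{-t}\Omega = \Omega$ and that $\cP(I_+)$ admits adjoints, the lower boundary value rearranges to $\ip{xV_ty\Omega,\Omega} = \phi(x\alpha_t(y))$, which is the desired data on $\bbR$. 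Applying the same theorem with $\lambda_2(\Psi) = \ip{\Psi,\theta x\Omega}$ then yields $G : \bbS_{i\pi}\to\bbC$ with $G(s) = \ip{V_sy\Omega,\theta x\Omega}$ and $G(s+i\pi) = \ip{V_s\theta y^*\Omega,\theta x\Omega}$.

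The critical step is to verify the matching identity $F_1(t+i\pi) = G(t)$. Using that $V_t$ commutes with its analytic continuation $V_{i\pi}$ (so $V_t\theta y^*\Omega = V_tV_{i\pi}y\Omega = V_{i\pi}V_ty\Omega$) together with the self-adjointness $V_{i\pi}^* = V_{i\pi}$ from Theorem~\ref{thm: Vipi selfadjoint}, I compute
\[
\ip{V_t\theta y^*\Omega, x^*\Omega} = \ip{V_{i\pi}V_ty\Omega, x^*\Omega} = \ip{V_ty\Omega, V_{i\pi}x^*\Omega} = \ip{V_ty\Omega, \theta x\Omega},
\]
which is exactly $G(t)$. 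Then defining $F(z) := F_1(z)$ for $\Im z \in [0,\pi]$ and $F(z) := G(z - i\pi)$ for $\Im z \in [\pi, 2\pi]$ yields a continuous function on $\bbS_{2\pi i}$, with holomorphy across the seam $\Im z = \pi$ following from a standard Morera or edge-of-the-wedge argument.

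To identify the upper boundary I would compute $F(t+2\pi i) = G(t+i\pi) = \ip{V_t\theta y^*\Omega,\theta x\Omega}$. A direct check from the defining formula shows $v_t$ commutes with the inversion $z\mapsto z^{-1}$, so $\theta$ commutes with $V_t$; combining this with antiunitarity of $\theta$, Hermitian symmetry of the form, and M\"obius-invariance yields
\[
\ip{V_t\theta y^*\Omega,\theta x\Omega} = \overline{\ip{V_ty^*\Omega, x\Omega}} = \ip{x\Omega, V_ty^*\Omega} = \ip{yV_{-t}x\Omega,\Omega} = \phi(\alpha_t(y)x),
\]
as required. The main obstacle is the matching $F_1(t+i\pi) = G(t)$: without functional calculus, both the self-adjointness of $V_{i\pi}$ and its commutation with the real flow $V_t$ must be applied in the partial-domain sense rather than invoked as bounded-operator identities. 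Once the match is in hand, the gluing and the final PCT-based identification of the upper boundary are routine.
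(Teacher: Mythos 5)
Your proof is correct, and it uses the same overall architecture as the paper's (glue two width-$\pi$ strips, each supplied by Theorem~\ref{thm: involutive BW}), but the second strip is produced by a genuinely different mechanism. The paper obtains the piece on $\{\pi \le \Im(\tau) \le 2\pi\}$ by applying Theorem~\ref{thm: involutive BW} to the element $\theta x^*\theta \in \cP(I_-)$ and the operator $V_{-i\pi}$, yielding $F_2(t) = \ip{V_t\theta x^*\Omega, y^*\Omega}$ and $F_2(t-i\pi)=\ip{V_tx\Omega,y^*\Omega}$ on $\bbS_{-i\pi}$, and then reflecting via $G(\tau)=F_2(-\tau+i\pi)$; the seam identity $G(t+i\pi)=F_1(t+i\pi)$ then reduces to antiunitarity of $\theta$ and Hermitian symmetry of the form, with no input about $V_{i\pi}$ beyond Theorem~\ref{thm: involutive BW} itself. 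You instead keep both strips attached to the vector $y\Omega$, changing only the test vector from $x^*\Omega$ to $\theta x\Omega$, and pay for the matching with the symmetric property $\ip{V_{i\pi}\Phi,\Phi'}=\ip{\Phi,V_{i\pi}\Phi'}$ for $\Phi,\Phi'\in D(V_{i\pi})$ --- note that the unnumbered lemma preceding Theorem~\ref{thm: Vipi selfadjoint} already gives this, so the full self-adjointness statement $V_{i\pi}^*=V_{i\pi}$ is more than you need. Both routes are legitimate: the paper's trades the $I_-$/$V_{-i\pi}$ half of Theorem~\ref{thm: involutive BW} for a purely PCT-based seam computation, while yours stays entirely within $\cP(I_+)$ and $V_{i\pi}$ at the cost of invoking a separately proven (and, in this non-Hilbert-space setting, nontrivial) symmetry property. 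Your boundary identifications and the Morera-type gluing across the seam are the same as in the paper.
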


As in the motivating articles \cite{BisognanoWichmann75,BuchholzSchulz-Mirbach90,BrunettiGuidoLongo93,GabbianiFrohlich93}, it is natural to next turn to questions of Haag duality.
Let $\End_*(\cD)$ be the $*$-algebra of continuous endomorphisms of $\cD$ with a continuous adjoint.
The algebras $\cP(I)$ are a non-unitary Haag-Kastler nets of subalgebras of $\End_*(\cD)$, as defined in Section~\ref{sec: nonunitary duality}.
This means that they satisfy: (i) isotony - $\cP(I) \subset \cP(J)$ when $I \subset J$, (ii) M\"obius covariance - they transform appropriately under a representation of $\Mob$, (iii) locality - $\cP(I)$ and $\cP(J)$ commute when $I$ and $J$ are disjoint, and (iv) vacuum - there is a chosen vector $\Omega$ which is cyclic for each algebra $\cP(I)$.
Let $\cQ(I)=\cP(I)''$, where the notion $S'$ refers to the commutant in $\End_*(\cD)$.
Then $\cQ(I)$ is again a Haag-Kastler net of subalgebras of $\End_*(\cD)$, and $\cP(I) \subseteq \cQ(I)$ for all intervals $I$.
The Haag duality property of $\cQ$ asserts that $\cQ(I') = \cQ(I)'$, where $I'$ denotes the (interior of the) complementary interval. 
We establish Haag duality  in Theorem~\ref{thm: algebra haag duality}.

\begin{thmalpha}\label{thm: intro duality}
Let $(\cF,\cD,U,\Omega)$ be a potentially non-unitary Wightman CFT on $S^1$ with invariant nondegenerate Hermitian form.
Then the algebras $\cQ(I)=\cP(I)''$ are a Haag dual M\"obius-covariant net of subalgebras of $\End_*(\cD)$.
That is, $\cQ(I') = \cQ(I)'$.
\end{thmalpha}

The importance of Theorem~\ref{thm: intro duality} lies in the ubiquitous role of Haag duality in the analysis of Haag-Kastler nets of operator algebras, particularly with regard to their representation theory.
In this context, the operation of `double commutant' is quite natural, as it provides the von Neumann algebra generated by a unital $*$-subalgebra of the bounded operators on a Hilbert space.
We hope that Theorem~\ref{thm: intro duality} opens up new avenues for applying Haag duality to the representation theory of non-unitary Wightman polynomial nets mimicking operator algebraic methods.

Let $\cP(I)_{sa}$ denote the real subspace of self-adjoint elements of $\cP(I)$.
The proof of Theorem~\ref{thm: intro duality} relies on studying the family of $\cF$-strongly closed real subspaces $\cK(I):=\overline{\cP(I)_{sa}\Omega} \subseteq \cD$.
These are a non-unitary analog of a net of standard subspaces in the sense of \cite{LongoLectureNotesI}.
Even in the unitary case, this result provides new avenues to apply operator algebraic methods to study unitary vertex operator algebras.

As a final application, we consider the question of ``AQFT-locality'' of unitary M\"obius vertex algebras.
Given such a vertex algebra, we can consider the Wightman CFT $(\cF,\cD,U,\Omega)$ corresponding to the generating set of all quasiprimary vectors.
The corresponding von Neumann algebras $\cA(I)$ are defined to be the von Neumann algebras generated by operators $\varphi(f)$\footnote{That is, the von Neumann algebra generated by the polar partial isometries and spectral projections of the closures of these operators.} with $\operatorname{supp} f \subset I$.
The unitary M\"obius vertex algebra is called AQFT-local if the algebras $\cA(I)$ and $\cA(J)$ commute when $I$ and $J$ are disjoint (this is closely related to the notion of ``strong locality'' of \cite{CKLW18}, but without the assumption of energy bounds).
In this case, the algebras $\cA(I)$ form a M\"obius covariant Haag-Kastler net by \cite[Prop. 4.8]{RaymondTanimotoTener22} (or \cite{CKLW18} in the case of energy bounded examples).
Using the net of standard subspaces associated with the Wightman CFT $\cF$, we obtain the following version of \cite[Thm. 3(f)]{BisognanoWichmann75} as Corollary~\ref{cor: separating implies local}:
\begin{thmalpha}
    Let $\cV$ be a unitary M\"obius vertex algebra with vacuum vector $\Omega$, and let $\cA(I)$ be the net of von Neumann algebras described above.
    Then $\cV$ is AQFT-local if and only if for some interval $I$ the vector $\Omega$ is a separating vector for $\cA(I)$.
\end{thmalpha}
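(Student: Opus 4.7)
Assume $\cV$ is AQFT-local, so $\cA(I)$ commutes with $\cA(I')$. For $f$ real with $\operatorname{supp} f \subset I'$, the closure of $\varphi(f)$ is affiliated with $\cA(I')$, and bounded spectral truncations applied to $\Omega$ converge in norm to $\varphi(f)\Omega$; a diagonal iteration on products then gives $\cP(I')\Omega \subset \overline{\cA(I')\Omega}$. Reeh--Schlieder for the polynomial net \cite[App.~A]{CarpiRaymondTanimotoTener24ax} makes $\cP(I')\Omega$ dense in the Hilbert-space completion $\cH$ of $\cD$, so $\cA(I')\Omega$ is dense as well. Any $x \in \cA(I)$ with $x\Omega = 0$ then satisfies $xy\Omega = yx\Omega = 0$ for all $y \in \cA(I')$, so $x$ vanishes on a dense set and $\Omega$ is separating for $\cA(I)$.

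\textbf{Reverse direction.} Assume $\Omega$ is cyclic and separating for every $\cA(I)$. The plan is to prove Haag duality $\cA(I_+)' = \cA(I_-)$ on the half-circles, from which M\"obius covariance yields locality for every pair of disjoint intervals. Consider the standard subspace $H(I_+) = \overline{\cP(I_+)_{sa}\Omega}$ that appears in the proof of Theorem~\ref{thm: intro unitary Haag duality}. Theorem~\ref{thm: into involutive BW} makes the pre-Tomita map $x\Omega \mapsto x^*\Omega$ on $\cP(I_+)\Omega$ equal to $\theta V_{i\pi}$; in the unitary setting $\overline{V_{i\pi}}$ is the positive self-adjoint operator $e^{-\pi K}$ (with $K$ the self-adjoint generator of $V_t$ on $\cH$) and $\theta$ is antiunitary, so uniqueness of the polar decomposition identifies the Tomita--Takesaki data of $H(I_+)$ as $J_H = \theta$ and $\Delta_H^{1/2} = \overline{V_{i\pi}}$. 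Since $\theta$ sends smeared fields supported in $I_+$ to ones supported in $I_-$, we have $\theta \cA(I_+)\theta = \cA(I_-)$. On the other hand, cyclicity and separability provide a Tomita--Takesaki modular conjugation $J_0$ for $(\cA(I_+),\Omega)$ with $J_0\cA(I_+)J_0 = \cA(I_+)'$. It therefore suffices to show $J_0 = \theta$.

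\textbf{Main step and obstacle.} The identity $J_0 = \theta$ reduces to $S_0 = S_H$ as closed antilinear involutions on $\cH$. The inclusion $S_H \subset S_0$ is easy: $\cP(I_+)\Omega$ is a graph-norm core for $S_H$ (approximate a vector of $H(I_+) + iH(I_+)$ via its real and imaginary parts in $H(I_+)$), and $S_0$ is closed and extends the pre-Tomita map on $\cP(I_+)\Omega$ since closures of real-smeared fields are affiliated with $\cA(I_+)$. The reverse inclusion $S_0 \subset S_H$, equivalently $\cA(I_+)_{sa}\Omega \subset H(I_+)$, is the technical heart of the proof. The plan is the Borel functional calculus: for a real-smeared field with essentially self-adjoint closure $T$, every bounded Borel $g(T) \in \cA(I_+)_{sa}$ applied to $\Omega$ is a norm limit of $p_n(T)\Omega$ with $p_n$ real polynomials (via $L^2$-approximation against the spectral measure at $\Omega$), placing $g(T)\Omega \in H(I_+)$; an arbitrary $a \in \cA(I_+)_{sa}$ is then handled by strong-operator approximation through real $*$-polynomial combinations of such generators, with Kaplansky producing self-adjoint, uniformly bounded approximants $a_n$ whose action on $\Omega$ converges in norm. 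The main obstacle lies in this approximation step: establishing essential self-adjointness of real-smeared fields on a suitable core and controlling the strong-to-norm convergence in $\cA(I_+)$ applied to $\Omega$; if essential self-adjointness fails, the argument must be adapted to use the polar partial isometries and spectral projections of $|T| = (T^*T)^{1/2}$ directly.
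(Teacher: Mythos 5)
Your forward direction is essentially the paper's (the paper simply invokes the Reeh--Schlieder property of the Haag--Kastler net $\cA_\cV$, which exists once AQFT-locality is granted), and your reduction of the reverse direction to the half-circle duality $\cA(I_+)'=\cA(I_-)$ via $J_0=\theta$ is a reasonable plan. However, the step you yourself flag as the ``technical heart''---the inclusion $\cA(I_+)_{sa}\Omega\subset H(I_+)$ via polynomial approximation against the spectral measure followed by Kaplansky density---is a genuine gap, not a routine obstacle. First, essential self-adjointness of real-smeared fields is exactly what is \emph{not} available here (the paper deliberately avoids energy bounds). Second, even for a self-adjoint generator $T$, real polynomials need not be dense in $L^2$ of the spectral measure at $\Omega$ unless the associated moment problem is determinate; Nelson's example \cite{Nelson59}, cited in the paper for precisely this reason, shows that spectral data of an unbounded operator need not be approximable by polynomials in it on a common invariant domain. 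Third, Kaplansky density applies to an SOT-dense $*$-subalgebra of \emph{bounded} operators, which $\cP(I_+)$ is not, so there is no licence to approximate a general $a\in\cA(I_+)_{sa}$ by elements of $\cP(I_+)_{sa}$ in the required sense. Note also that your argument for this inclusion never uses the separating hypothesis; if it worked, it would prove AQFT-locality \emph{unconditionally}, i.e.\ resolve \cite[Conj.~8.18]{CKLW18}, which is open.

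The paper circumvents this by never comparing $S_0$ and $S_H$ for $\cA_\cV(I_+)$ directly. Instead it passes to the dual net $\cC^\cH(I)=\cA_\cV(I')'$ and its cutdown $\cC(I)$ to the vacuum subspace $\cH_\cC$. For the dual net the analogue of your hard inclusion, namely $\cC(I_+)_{sa}\Omega\subset\hat\cK_n(I_+)$, comes for free from commutation: $\cC^\cH(I_+)$ commutes with $\cP(I_-)$ by definition, so $\cC(I_+)_{sa}\Omega\subset\hat\cK_n(I_-)'=\hat\cK_n(I_+)$ using Haag duality of the standard-subspace net (Proposition~\ref{prop: unitary K weakly haag dual}). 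One then identifies the modular conjugation of $\cC(I_+)$ with $\theta|_{\cH_\cC}$ by restricting the modular data of $\hat\cK_n(I_+)$ (Theorems~\ref{thm: modular for unitary} and~\ref{thm: commutant net locality}), which yields locality and Haag duality for $\cC$. The separating hypothesis enters only at the very end: it makes $\Omega$ cyclic for $\cC^\cH(I)$, forcing $\cH_\cC=\cH$ and $\cC(I)=\cA_\cV(I)$. You should restructure your reverse direction along these lines; the direct approximation route cannot be completed with the tools available.
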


This theorem shows that in order to verify AQFT-locality, it is enough to know that \emph{enough operators} commute with $\cA(I)$, not necessarily the operators in $\cA(J)$ (with $I$ and $J$ disjoint).

\subsection*{Dedication}

This article is dedicated to Huzihiro Araki, and was prepared for the special issue of \emph{Communications in Mathematical Physics} in his honor.
We briefly explain the connection between this article and Araki's work.
Araki's proof of Haag duality for a free scalar bosonic field \cite{Araki63,Araki64}, sometimes called Haag-Araki duality, laid the foundation for later study of duality using the methods of Bisognano-Wichmann;  this article is built on that foundation.
Araki also considered  duality for free fermionic fields \cite[Rem. 4.9]{Araki70}, and this instance of Haag-Araki duality plays a special role with regard to the philosophy of this article.
Indeed, we are motivated here by the comparison between nets $\cP(I)$ generated by (typically unbounded) smeared fields, and the corresponding nets of von Neumann algebras $\cA(I)$.
The free fermion model is essentially the only example of a 2d chiral CFT whose smeared fields are bounded operators, so that the distinction between $\cP(I)$ and $\cA(I)$ is not necessary.
Our Theorem~\ref{thm: intro duality} establishing Haag duality for nets of smeared fields can therefore be interpreted as a generalization of Haag-Araki duality to non-free models.

\subsection*{Acknowledgements}

The author was supported by ARC Discovery Project DP200100067.
The author would like to thank the anonymous referee for helpful suggestions.

\section{Background}\label{sec: background}

\subsection{Conformal nets and Tomita-Takesaki theory}\label{sec: background unitary}

Our first aim is to describe the relationship between Tomita-Takesaki modular theory and algebraic conformal field theories on Hilbert spaces which provides the motivation for this work.
We begin by summarizing the main results of Tomita-Takesaki theory, which was first introduced \cite{Tomita67,Takesaki70} in the context of von Neumann algebras (see also \cite{TakesakiTOAII}).
If $\cH$ is a Hilbert space, then a von Neumann algebra $M$ is a unital $*$-subalgebra $M \subset \cB(\cH)$ that is closed in the topology of pointwise convergence.
A vector $\Omega \in \cH$ is called cyclic for $M$ if $M\Omega$ is dense in $\cH$, and called separating if the only element $x \in M$ satisfying $x\Omega = 0$ is $x=0$.
A vector $\Omega$ is cyclic for $M$ if and only if it is separating for $M'$, and in light of the double commutant theorem $M=M''$, it follows that $\Omega$ is cyclic for $M'$ if and only if it is separating for $M$\footnote{Here the commutant $S'$ of a subset $S \subset \cB(\cH)$ consists of all operators $x \in \cB(\cH)$ commuting element-wise with $S$. This is a von Neumann algebra if $S$ is closed under taking adjoints}.

Let $S^0$ be the densely-defined conjugate linear operator on $\cH$ which maps $M\Omega$ to itself via the formula $S^0x\Omega = x^*\Omega$, and let $S$ be the closure of $S^0$ (i.e. the operator obtained by taking the closure of the graph of $S^0$ in $\cH \times \cH$).
We have a polar decomposition $S=J\Delta^{1/2}$, where $J$ is antiunitary and $\Delta^{1/2}$ is a densely defined positive operator.
The main result of Tomita-Takesaki theory  says that
\[
JMJ=M', \qquad \text{and} \qquad \Delta^{it}M\Delta^{-it} = M,
\]
where the unitary operator $\Delta^{it}$ is defined via spectral theory and functional calculus.
We thus have a canonical dynamics arising from the modular automorphism group $\alpha_t := \operatorname{Ad} \Delta^{it}$ (which is uniquely determined up to inner automorphisms by the work of Connes).

Tomita-Takesaki theory has proven to be an extremely powerful tool in the study of algebraic quantum field theories, which is to say quantum field theories in the sense of Haag-Kastler nets of operator algebras \cite{HaagKastler64}.
In the context of 2d chiral conformal field theory, we have the following axiomatization:
\begin{defn}\label{def: HK}
A \textbf{M\"obius-covariant (Haag-Kastler) net} on $S^1$ is a triple $(\cA, U, \Omega)$, where
$\cA$ associates to each open non-dense nonempty connected interval $I$ of $S^1$ a von Neumann algebra $\cA(I)$ on $\cH$,
$U$ is a SOT-continuous unitary representation of $\Mob$ on $\cH$, and $\Omega \in \cH$ such that the following hold:
\begin{enumerate}[{(HK}1{)}]
\item \textbf{Isotony}: If $I_1 \subset I_2$, then $\cA(I_1) \subset \cA(I_2)$.
\item \textbf{Locality}: If $I_1 \cap I_2 = \emptyset$, then $\cA(I_1)$ and $\cA(I_2)$ commute.
\item \textbf{M\"obius covariance}: For $\gamma \in \Mob$, $U(\gamma)\Omega = \Omega$ and
$U(\gamma)\cA(I) U(\gamma)^* = \cA(\gamma I)$ for each interval $I$.
\item \textbf{Spectrum condition}: The generator $L_0$ of rotations $U(r_\theta)=e^{i\theta L_0}$ is positive.
\item \textbf{Vacuum}: $\Omega$ is the unique (up to a scalar) vector in $\cH$ that is invariant under $U$,
and it is cyclic for $\bigvee_{I\Subset S^1}\cA(I)$.
\end{enumerate}
\end{defn}
The condition that $\Omega$ is the unique $\Mob$-invariant vector is sometimes omitted, in which case a net with this uniqueness property would be called an `irreducible' net.

As a consequence of the axioms, the Reeh-Schlieder property holds: the vector $\Omega$ is cyclic and separating for each local algebra $\cA(I)$.
One can then study the closed conjugate-linear involution $S_+$ associated with the algebra $\cA(I_+)$, where $I_+ = \{z \in S^1 \, : \, \Im(z) > 0\}$ is the upper semi-circle (and likewise with the lower semi-circle $I_-$).
The Bisognano-Wichmann property of the Haag-Kastler net identifies the modular automorphism group $\Delta_+^{it}$ of this algebra (with respect to the vacuum vector $\Omega$) with a certain subgroup of the M\"obius group, and gives a covariance relation for the modular conjugation $J_+$ of the same algebra.
Specifically consider the following one-parameter subgroup $v_t \in \Mob$ of transformations fixing the intervals $I_{\pm}$
\[
    v_t(z) = \frac{\cosh(t/2)z - \sinh(t/2)}{-\sinh(t/2)z + \cosh(t/2)},
\]
and let $V_t = U(v_t)$ be the corresponding unitary operator on $\cH$.
Then the Bisognano-Wichmann property says that the modular automorphism group and modular conjugation satisfy
\begin{equation}\label{eqn: operator algebra BW}
\Delta_+^{it} = V_{-2\pi t}, \qquad J_+\cA(I_+)J_+ = \cA(I_-).
\end{equation}
The original work of Bisognano-Wichmann \cite{BisognanoWichmann75,BisognanoWichmann76} featured a computation of the analytic continuation of $V_t$ in the context of Wightman quantum field theories on $4$-dimensional Minkowski space.
Analogous results in the Wightman setting were obtained for conformal field theories by Buchholz-Schulz-Mirbach \cite{BuchholzSchulz-Mirbach90}, and purely operator algebraic proofs of \eqref{eqn: operator algebra BW} for conformal field theories were given separately in \cite{BrunettiGuidoLongo93,GabbianiFrohlich93}.

In all of the above investigations of the Bisognano-Wichmann property, one of the key aims was to study Haag duality, which is the fact that in any Haag-Kastler net we have
\[
  \cA(I)' = \cA(I').  
\]
This fact was proven again separately in \cite{BrunettiGuidoLongo93,GabbianiFrohlich93}, in both cases invoking work of Borchers \cite{Borchers92}.
However, for our purposes it will be enlightening to briefly describe an alternate approach via standard `one-particle' subspaces, due to Longo \cite{LongoLectureNotesI,LongoLectureNotesII}.

The one-particle approach relies on the approach to Tomita-Takesaki developed in \cite{RieffelVanDaele77}, in which the authors discovered that the modular operators $J$ and $\Delta$ of a von Neumann algebra $M \subset \cB(\cH)$ with cyclic and separating vector $\Omega$ could be recovered entirely from the real subspace $\cK:=\overline{M_{sa}\Omega} \subset \cH$, where $M_{sa}$ is the real subspace of self-adjoint elements.

\begin{defn}
    A \textbf{standard subspace} of a complex Hilbert space $\cH$ is a real subspace $\cK$ such that $\cK + i \cK$ is dense in $\cH$ and $\cK \cap i \cK = \{0\}$.
\end{defn}

The subspace $\overline{M_{sa}\Omega}$ is an example of a standard subspace, but many results of Tomita-Takesaki theory can be generalized to the setting of arbitrary standard subspaces (a fact which is crucial for the approach taken in this article).
Associated to $\cK$ one has a closed conjugate-linear involution $S$ defined by $S(\xi + i\eta) = \xi - i\eta$ for $\xi,\eta \in \cK$, and again one defines an antiunitary $J$ and positive operator $\Delta$ via the polar decomposition $S=J\Delta^{1/2}$.
Similar to the von Neumann setting, we have
\[
J\cK = \cK', \qquad \text{and} \qquad \Delta^{it}\cK=\cK
\]
where the \textbf{complement} $\cK'$ is defined by
\[
\cK' = \{ \xi \in \cH \, : \, \ip{\xi,\eta} \in \bbR \, \text{ for all } \, \eta \in \cK\}.
\]
Alternatively, $\cK' = (i\cK)^{\perp}$ where the orthogonal complement is taken with respect to the real Hilbert space inner product $\Re \ip{\, \cdot \, , \, \cdot \,}$ (from which it follows that $\cK'' = \cK$).
Following \cite{LongoLectureNotesI}, one has the following notion of a net of standard subspaces:
\begin{defn}\label{defn: unitary net of subspaces}
    A \textbf{M\"obius-covariant net of subspaces} of a complex Hilbert space $\cH$ is a family of real subspaces $\cK(I) \subset \cH$ indexed by open non-dense nonempty intervals $I \subset S^1$, and a unitary representation $U$ of $\Mob$ on $\cH$, satisfying the following properties.
    \begin{enumerate}[{(NS}1{)}]
\item \textbf{Isotony}: If $I_1 \subset I_2$, then $\cK(I_1) \subset \cK(I_2)$.
\item \textbf{Locality}: If $I_1 \cap I_2 = \emptyset$, then $\cK(I_1) \subset \cK(I_2)'$.
\item \textbf{M\"obius covariance}: For $\gamma \in \Mob$, we have $U(\gamma)\cK(I) = \cK(\gamma(I))$.
\item \textbf{Spectrum condition}: The generator $L_0$ of rotations $U(r_\theta)=e^{i\theta L_0}$ is positive.
\item \textbf{Cyclicity}: The complex linear span of the $\cK(I)$ is dense in $\cH$.
\end{enumerate}
\end{defn}
Given a Haag-Kastler net of algebras $\cA(I)$, there is a corresponding `one-particle' net of subspaces $\cK(I)=\overline{\cA(I)_{sa}\Omega}$.
For any net of subspaces, we again have a Reeh-Schlieder property, that each $\cK(I)$ is a cyclic subspace, and indeed each $\cK(I)$ is a standard subspace of $\cH$.
The Bisognano-Wichmann property for the net of subspaces says that we again have
\[
\Delta_+^{it} = V_{-2\pi t}, \qquad J_+\cK(I_+)=\cK(I_-),
\]
from which we again have Haag duality:
\[
\cK(I)' = \cK(I').
\]
If the net of subspaces $\cK(I)$ arises from a Haag-Kastler net of algebras of $\cA(I)$, then the Bisognano-Wichmann and Haag duality properties of $\cA$ can be derived from those of $\cK$ \cite{LongoLectureNotesII}.
In this article, we will expand on this connection in the case of non-unitary conformal field theories, which is to say conformal field theories whose state space does not possess an invariant positive-definite form. We will also construct nets of standard subspaces from Wightman theories (or, equivalently, from M\"obius vertex algebras), and use this to study the relationship between Wightman theories and Haag-Kastler nets.

\subsection{Wightman CFTs and vertex algebras}\label{sec: background wightman and va}

Our main object of study is potentially non-unitary M\"obius-covariant Wightman conformal field theories on the unit circle $S^1 \subset \bbC$.
In this context, the M\"obius group $\Mob$ refers to $\operatorname{PSU}(1,1)$, the group of holomorphic automorphisms of the unit disk.
Such theories were systematically studied in \cite{CarpiRaymondTanimotoTener25} (see also \cite{RaymondTanimotoTener22}).
In contrast to Section~\ref{sec: background unitary}, we do not limit ourselves to unitary theories (which possess a compatible inner product).
We will summarize here the basic notions, and the reader may consult the reference for further details.

A M\"obius covariant Wightman CFT on $S^1$ is given by a common domain $\cD$, which begins simply as a (generally infinite-dimensional) complex vector space, along with a family $\cF$ of operator-valued distributions $\varphi:C^\infty(S^1) \to \cL(\cD)$, where $\cL(\cD)$ denotes linear operators on $\cD$.
We will need to discuss one minor technical topological consideration for such a family $\cF$.

\begin{defn}
A linear functional $\lambda:\cD \to \bbC$ is called \textbf{compatible with $\cF$} if the multilinear maps $C^\infty(S^1)^k \to \bbC$ given by
\begin{equation*}\label{eqn: preliminary regular action}
(f_1, \ldots, f_k) \mapsto \lambda\big(\varphi_1(f_1) \cdots \varphi_k(f_k)\Phi\big)
\end{equation*}
are (jointly) continuous in the $f_j$ for all $\varphi_1, \ldots, \varphi_k \in \cF$ and $\Phi \in \cD$.
We write $\cD_\cF^*$ for the space of all linear functionals compatible with $\cF$.
We say that $\cF$ acts \textbf{regularly} on $\cD$ if $\cD_\cF^*$ separates points, which is to say that for every non-zero $\Phi \in \cD$ there exists $\lambda \in \cD_\cF^*$ such that $\lambda(\Phi)\ne0$.
The \textbf{$\cF$-strong topology} on $\cD$ is the finest locally convex topology such that expressions
$
\varphi_1(f_1) \cdots \varphi_k(f_k)\Phi
$
are jointly continuous in the functions $f_j \in C^\infty(S^1)$.
\end{defn}
We use the Fr\'echet topology induced on $C^\infty(S^1)$ by the $C^N$ norms\footnote{In \cite{RaymondTanimotoTener22,CarpiRaymondTanimotoTener25} we used the Sobolev norms $\norm{f}_{H^N}$, but the resulting topologies are equivalent due to embeddings $H^N(S^1) \subset C^{N'}(S^1)$ and $C^{M}(S^1) \subset H^{M'}(S^1)$.}
\[
\norm{f}_{C^N} = \sum_{j=0}^N \big\|f^{(j)}\big\|_\infty = \sum_{j=0}^N \big\|\tfrac{d^j}{d\theta^j} f\big\|_\infty.
\]

The continuous dual of $\cD$ (with respect to the $\cF$-strong topology) is exactly $\cD_\cF^*$, and so by the Hahn-Banach theorem the $\cF$-strong topology is Hausdorff precisely when the action of $\cF$ is regular.
For the remainder of the article, all locally convex vector topologies will be assumed Hausdorff.
Given such a family $\cF$, we will equip $\cD$ with the $\cF$-strong topology unless stated otherwise.

We now assume that in addition to the operator-valued distributions $\cF$, we have a representation $U:\Mob \to \End(\cD)$ of the M\"obius group, where $\End(\cD)$ denotes ($\cF$-strong) continuous linear operators.
For $\gamma \in \Mob$ we denote by $X_\gamma \in C^\infty(S^1)$ the function
\begin{equation}\label{eqn: X gamma}
X_\gamma(e^{i \theta}) = -i \frac{d}{d \theta} \log(\gamma(e^{i \theta})).
\end{equation}
For $f \in C^\infty(S^1)$ and $d \in \bbZ_{\ge 0}$ we denote by $\beta_d(\gamma)f \in C^\infty(S^1)$ the function
\begin{equation*}\label{eqn: beta}
(\beta_d(\gamma)f)(z) = (X_\gamma(\gamma^{-1}(z)))^{d-1} f(\gamma^{-1}(z)).
\end{equation*}
An operator-valued distribution with domain $\cD$ is called \textbf{M\"obius-covariant with conformal dimension $d$} under the representation $U$
if for every $\gamma \in \Mob$ and every $f \in C^\infty(S^1)$ we have 
\[
U(\gamma)\varphi(f)U(\gamma)^{-1} = \varphi(\beta_d(\gamma)f)
\]
as endomorphisms of $\cD$.
We say that a vector $\Phi \in \cD$ has \textbf{conformal dimension $d \in \bbZ$} if $U(r_\theta)\Phi = e^{i d\theta}\Phi$ for all rotations $r_\theta \in \Mob$.

\begin{defn}\label{def: Wightman}
Let $\cD$ be a vector space equipped with a representation $U$ of $\Mob$ and a choice of non-zero vector $\Omega \in \cD$.
Let $\mathcal{F}$ be a set of operator-valued distributions on $S^1$ acting regularly on their common domain $\mathcal{D}$.
This data forms a \textbf{M\"obius-covariant Wightman CFT} on $S^1$ if they satisfy the following axioms:
\begin{enumerate}[{(W}1{)}]
\item \textbf{M\"obius covariance}:  For each $\varphi \in \mathcal{F}$ there is $d \in \mathbb{Z}_{\ge 0}$ such that $\varphi$ is M\"obius-covariant with conformal dimension $d$ under the representation $U$.\label{itm: W Mob covariance}
\item \textbf{Locality}: If $f$ and $g$ have disjoint supports, then $\varphi_1(f)$ and $\varphi_2(g)$ commute for any pair $\varphi_1, \varphi_2 \in \mathcal{F}$.
\label{itm: W locality}
\item \textbf{Spectrum condition}: If $\Phi \in \cD$ has conformal dimension $d < 0$ then $\Phi = 0$.
\label{itm: W spec}
\item \textbf{Vacuum}: The vector $\Omega$ is invariant under $U$, and $\cD$ is algebraically spanned by vectors of the form $\varphi_1(f_1) \cdots \varphi_k(f_k)\Omega$.
\label{itm: W Vacuum}
\end{enumerate}
\end{defn}

The Wightman axioms have not been extensively studied in their own right within the context of axiomatic 2d chiral conformal field theory, but it turns out that they are exactly equivalent to M\"obius vertex algebras, along with a chosen set of quasiprimary vectors which generate the vertex algebra.
It has been long-known that this correspondence is ``morally valid'' (see e.g. \cite[\S1.2]{Kac98}), certainly in the case of unitary theories, but it was made precise in \cite{CarpiRaymondTanimotoTener25}.
In the case of vertex algebras, the M\"obius covariance condition is with respect to the complexified Lie algebra $\Lie(\Mob)_\bbC$, which is spanned by elements $\{L_{-1},L_0,L_1\}$ satisfying the familiar relation $[L_n,L_m] = (n-m)L_{n+m}$.

\begin{defn}
An ($\bbN$-graded) \textbf{M\"obius vertex algebra} consists of a vector space $\cV$ equipped with a representation $\{L_{-1},L_0,L_1\}$ of $\Lie(\Mob)_\bbC$, a state-field correspondence $Y:\cV \to \End(\cV)[[z^{\pm 1}]]$,
and a choice of non-zero vector $\Omega \in \cV$ such that the following hold:
\begin{enumerate}[{(VA}1{)}]
\item $\cV = \bigoplus_{n=0}^\infty \cV(n)$, where $\cV(n) =  \ker(L_0 - n)$.
\item $Y(\Omega,z) = \operatorname{Id}_\cV$ and $\left.Y(v,z)\Omega\right|_{z=0} = v$, i.e. $Y(v,z)\Omega$ has only non-negative powers of $z$ for all $v \in \cV$.
\item $\Omega$ is $\Lie(\Mob)$-invariant, i.e.\! $L_m \Omega = 0$ for $m=-1,0,1$.
\item $[L_m, Y(v,z)] = \sum_{j=0}^{m+1} \binom{m+1}{j}z^{m+1-j} Y(L_{j-1}v,z)$ and $Y(L_{-1}v,z) = \frac{d}{dz} Y(v,z)$ for all $v \in \cV$ and $m = -1,0,1$.
\item For each $u,v \in \cV$, there exists $N$ sufficiently large such that \newline $(z-w)^N[Y(v,z), Y(u,w)] = 0$.
\end{enumerate}
\end{defn}

For every Wightman theory $\cF$ with domain $\cD$, there exists a canonical M\"obius vertex algebra structure on the finite-energy vectors $\cV$, i.e. the subspace spanned by vectors of the form $\varphi_1(e_{j_1}) \cdots \varphi_k(e_{j_k})\Omega$, where $e_j(z) = z^j$.
The vertex algebra structure is characterized by the fact that for every $\varphi \in \cF$ with conformal dimension $d$ there is a vector $v_\varphi \in \cV(d)$ such that $\varphi(e_j) = v_{(d-j-1)}$.
The vectors $\{v_{\varphi} \, : \, \varphi \in \cF\}$ generate $\cV$ as a vertex algebra, and they are quasiprimary (i.e. $L_1 v_\varphi = 0$ and $v_\varphi$ is homogeneous for $L_0$).

In the other direction, given a M\"obius vertex algebra $\cV$ with a generating set of quasiprimary vectors, there is a canonical extension $\cD \supset \cV$ (which can be realized as a subspace of the algebraic completion of $\cV$) and M\"obius-covariant Wightman CFT on $\cD$ corresponding to $\cV$ in the sense described above.
This correspondence can be made into an equivalence of categories \cite[\S 3.3]{CarpiRaymondTanimotoTener25}.

Note that we do not require either of the conditions $\dim \cV(n) < \infty$ or $\dim \cV(0) = 1$.
The requirement that the vertex algebra $\cV$ be generated by quasiprimary vectors is not very restrictive, and the class of theories considered includes log-CFTs like the $\beta \gamma$-ghost system with $c=2$.

In both settings, we have a notion of an invariant Hermitian form, which we call an \textbf{involutive} theory in general, and a \textbf{unitary} theory if the Hermitian form is an inner product.
Some key examples of involutive but non-unitary theories include the non-unitary minimal models, such as the Yang-Lee model.
These notions will be outlined in detail in Sections~\ref{sec: sesquilinear} and \ref{sec: unitary}, respectively.
Invariant forms for a Wightman theory exactly correspond to invariant forms for the corresponding vertex algebra \cite[\S4]{CarpiRaymondTanimotoTener25}.

We close with a brief preview of the themes of the article.
Following \cite{StreaterWightman64}, given a Wightman CFT $\cF$ acting on its domain $\cD$, for every interval\footnote{The term `interval' in $S^1$ will be used exclusively for open, nonempty, proper, non-dense connected subsets of $S^1$.} $I \subset S^1$ we form the polynomial algebra $\cP(I) \subset \End(\cD)$, which is to say the unital algebra generated by operators $\varphi(f)$ with $\supp(f) \subset I$.
We obtain a net $\cP$  of algebras which formally satisfies many of the properties of a Haag-Kastler net, but is in fact encoding the data of a M\"obius vertex algebra.
From \cite[App. A]{CarpiRaymondTanimotoTener25}, this net has the Reeh-Schlieder property, which is to say that $\cP(I)\Omega$ is ($\cF$-strong) dense in $\cD$ for every interval $I$, and if $x \in \cP(I)$ and $x\Omega = 0$, then $x = 0$.
Continuing along the same line, we have the following motivating questions:

\begin{enumerate}
    \item Given that the tools of operator algebras have been so successful in studying Haag-Kastler nets, can any of the same tools be applied to the polynomial net $\cP$ of a (potentially non-unitary) Wightman theory?
    \item Given the extensive work that has been done to compare unitary vertex operator algebras and Haag-Kastler nets, can structural properties of polynomial nets be used as a bridge between these notions?
\end{enumerate}

In the remaining sections we develop some promising results in the direction of these two questions.

\section{Analytic continuation for non-unitary Wightman CFTs}\label{sec: general nonunitary}

Fix a non-unitary M\"obius-covariant Wightman conformal field theory $(\cF,\cD,U,\Omega)$ on the unit circle $S^1$ (see Definition~\ref{def: Wightman}).
This is precisely equivalent to the data of a M\"obius vertex algebra and a choice of generating family of quasiprimary states (see Section~\ref{sec: background wightman and va}).

Let $v_t \in \Mob$ be the one-parameter group
\[
    v_t(z) = \frac{\cosh(t/2)z - \sinh(t/2)}{-\sinh(t/2)z + \cosh(t/2)}.
\]
Let $I_\pm \subset S^1$ be the upper- and lower-semicircles, namely
\[
I_{\pm} = \{ z \in S^1 \, : \, \pm \Im(z) > 0\}.
\]
Then $v_t$ leaves each interval $I_\pm$ invariant, and is the unique one-parameter subgroup of $\Mob = \operatorname{PSU}(1,1)$ with this property, up to rescaling $t$.
Let $V_t = U(v_t) \in \End(\cD)$.
The M\"obius covariance axiom says that for any $\varphi \in \cF$ with conformal dimension $d$ and $f \in C^\infty(S^1)$ we have
\[
V_t \varphi(f) V_t^{-1} = \varphi(\beta_d(v_t)f)
\]
where
\[
\beta_d(\gamma)f = (X_\gamma^{d-1} \cdot f) \circ \gamma^{-1}
\]
and $X_{\gamma}$ is defined in \eqref{eqn: X gamma}.
Explicitly,
\[
X_{v_t}(v_{-t}(z)) = \cosh(t) + \Re(z)\sinh(t),
\]
which extends to an entire function of $t$.

For an interval $I \subset S^1$ we let $\cP(I) \subset \End(\cD)$ be the unital algebra generated by operators $\varphi(f)$ with $\varphi \in \cF$ and $\supp f \subset I$.
For $\tau \in \bbC$, let $\bbS_\tau$ be the closed strip bounded by the lines $\bbR$ and $\bbR + \tau$.
Our first goal is to investigate a certain weak form of analytic continuation of $V_t$ to the closed strip
\[
\bbS := \bbS_{i\pi} = \{\tau \in \bbC \, : \, 0 \le \Im(\tau) \le \pi\}.
\]
In this section, we consider the analytic continuation of $V_t$ when the CFT is not assumed to have any additional structure; in particular, for the moment we do not assume the existence of an invariant nondegenerate Hermitian form.
A key example of this type is the $\beta \gamma$-ghost system with central charge $c=2$ (along with other ``$A$-graded'' log-CFTs).
Recall (Section~\ref{sec: background wightman and va}) that we equip $\cD$ with its $\cF$-strong topology, and write $\cD_\cF^*$ for the continuous dual, consisting of linear functionals $\lambda:\cD \to \bbC$ such that the maps
\[
(f_1, \ldots, f_k) \mapsto \lambda\big(\varphi_1(f_1) \cdots \varphi_k(f_k)\Phi\big)
\]
are (jointly) continuous in the $f_j \in C^\infty(S^1)$, for every $\Phi \in \cD$.

Our goal is to compute the analytic continuation of $V_t$ on the domains $\cP(I_\pm)\Omega$.
We will generally focus on the domain $\cP(I_+)\Omega$, with the case of $\cP(I_-)\Omega$ being similar.
There are numerous references carrying out analogous computations in different contexts (e.g. \cite{BisognanoWichmann75} or \cite{BuchholzSchulz-Mirbach90}), but these generally employ methods which are not suitable for our context, in particular due to use of functional calculus and spectral theory for unbounded operators.
The argument presented here is adapted from the one presented in \cite{Henriques2014CFTNotes} for WZW models, although significant modifications are still necessary to account for the lack of unitarity.

We  now define the key object of study in Section~\ref{sec: general nonunitary}, the analytically continued operators $\widetilde V_\tau$ for $\tau \in \bbC$.
These will be partially defined operators\footnote{We say `partially defined operator' instead of 'densely defined operator' at this point because we have not yet demonstrated that the operators are densely defined.} on $\cD$, meaning they consist of a pair: a domain $D(\widetilde V_\tau)$ and an operator $\widetilde V_\tau:D(\widetilde V_\tau) \to \cD$.
\begin{defn}\label{def: Vs}
For $\tau \in \bbC$, the operator $\widetilde V_\tau$ has domain $D(\widetilde V_\tau) \subset \cD$ consisting of vectors $\Phi \in \cD$ such that for every $\lambda \in \cD_\cF^*$ the function $\bbR \to \bbC$ given by
\[
t \mapsto \lambda\big( V_t \Phi\big)
\]
extends to a (necessarily unique) function $G_\lambda$ on the closed strip $\bbS_\tau$ bounded by $\bbR$ and $\bbR + \tau$ such that $G_{\lambda}$ is continuous on $\bbS_\tau$ and holomorphic in the interior $\mathring{\bbS_\tau}$. Moreover, we require that there exist $\Psi \in \cD$ such that $G_{\lambda}(\tau+t) = \lambda\big(V_t\Psi\big)$ for all $t \in \bbR$ and $\lambda \in \cD_\cF^*$.
The operator $\widetilde V_\tau$ is defined on $\Phi \in D(\widetilde V_\tau)$ by $\widetilde V_\tau\Phi = \Psi$.
\end{defn}
Note that such a vector $\Psi$ is uniquely determined by $\Phi$ (i.e. $\widetilde V_\tau$ is well-defined) by the regularity property of the Wightman CFT, which asserts that $\cD_\cF^*$ separates points. 

If $\tau=t \in \bbR$ then $D(\widetilde V_t) = \cD$ and $\widetilde V_t = V_t$.
The tilde in the notation $\widetilde V_\tau$ and $D(\widetilde V_\tau)$ is present to distinguish from the domain $D(V_\tau)$ considered in Section~\ref{sec: nonunitary duality} in the context of CFTs with an invariant Hermitian form.
If $\tau \in \bbC$ and $t \in \bbR$ we have $\widetilde V_{\tau+t} = \widetilde V_\tau V_t = V_t \widetilde V_\tau$ as partially defined operators\footnote{This means that the operators in question have the same domain, where the domain of the composite of partially defined operators $X_1X_2$ consists of all vectors $\Phi$ such that $\Phi \in D(X_2)$ and $X_2\Phi \in D(X_1)$.}.
More generally for $\tau_1,\tau_2 \in \bbC$ the operator $\widetilde V_{\tau_1+\tau_2}$ is an extension of $\widetilde V_{\tau_1}\widetilde V_{\tau_2}$.
It is clear that $\Omega \in D(\widetilde V_\tau)$, but it is not immediately obvious that $D(\widetilde V_\tau)$ is any larger.
Thus our first step is to identify a `large' subspace of $D(\widetilde V_\tau)$.

Let $C_0^\infty(I_+)$ and $C_0^\infty(I_-)$ be smooth functions on $I_+$ (resp. $I_-$) that vanish to all orders at the endpoints.
We will sometimes embed these spaces into $C^\infty(S^1)$ by extending with the value $0$, so that $C^\infty_0(I_\pm)$ may be identified with the kernel of the restriction map $C_0^\infty(S^1) \to C_0^\infty(I_\mp)$, where $C_0^\infty(S^1)$ is the space of smooth functions on $S^1$ that vanish to all orders at $\pm 1$.
Let 
\[
\bbO = \{z \in \bbC \, : \, \abs{z} \ge 1\} \cup \{\infty\}.
\]
The following class of functions will feature in our calculation below.
\begin{defn}\label{def: cX}
    We denote by $\cX$ the space of functions $F:\bbO \to \bbC$ which are smooth on $\bbO$, holomorphic in the interior $\mathring{\bbO}$, vanish at infinity, and vanish to all orders at $\pm 1$.
\end{defn}
We give $\cX$ the topology inherited from its embedding into $C^\infty(S^1)$ via restriction, namely the topology of uniform convergence of all derivatives on $S^1$, which we call the $C^\infty$ topology.
If $F \in \cX$ we will frequently consider $F|_{I_{+}} \in C_0^\infty(I_{+})$ for the restriction to the upper semicircle, and similarly for $I_-$.
We will typically extend $F|_{I_+}$ to an element of $C^\infty(S^1)$ as described above, by assigning the value $0$ to the interval $I_-$.
Thus we have
\[
F|_{S^1} = F|_{I_+} + F|_{I_-}.
\]

\begin{defn}\label{def: cPcX}
    Let $\cP_\cX(I_+) \subset \cP(I_+)$ be the unital subalgebra generated by operators $\varphi(F|_{I_+})$ with $\varphi \in \cF$ and $F \in \cX$.
    The subalgebra $\cP_\cX(I_-) \subset \cP(I_-)$ is defined similarly.
\end{defn}

We will show that $\cP_\cX(I_+)\Omega$ is a dense subspace of $\cD$ and is contained in $D(\widetilde V_{i\pi})$.
A key technical observation is the following, whose proof is adapted from \cite[p.41]{Henriques2014CFTNotes}:
\begin{lem}\label{lem: X dense}
The space $\{F|_{I_+} \, : \, F \in \cX\}$ is dense in $C^\infty_0(I_+)$, and similarly for $I_-$.
\end{lem}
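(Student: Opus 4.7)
The plan is to argue by Hahn-Banach duality: I will show that any continuous linear functional $T$ on $C_0^\infty(I_+)$ that annihilates every $F|_{I_+}$ with $F \in \cX$ must itself be zero. First, extend $T$ by zero to obtain a distribution $\tilde T$ on $S^1$ supported in $\overline{I_+}$. Since elements of $\cX$ vanish to all orders at $\pm 1$, the pairing $\langle \tilde T, F\rangle = \langle T, F|_{I_+}\rangle$ is well-defined and equals $0$ for every $F \in \cX$, independently of the chosen extension.

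The key structural point I will exploit is that $\cX$ coincides with $H^- \cap C_0^\infty(S^1)$, where $H^-$ denotes smooth functions on $S^1$ whose Fourier coefficients vanish at all non-negative indices (equivalently, boundary values of functions holomorphic on $|z|>1$ and vanishing at $\infty$), and that $\cX$ is a multiplicative ideal in the algebra $H^- \cap C^\infty(S^1)$. Consequently, for any $\phi \in \cX$ and any $g \in H^- \cap C^\infty(S^1)$, the product $\phi g$ still lies in $\cX$, giving $\langle \phi \tilde T, g\rangle = \langle \tilde T, \phi g\rangle = 0$. Hence $\phi \tilde T$ has vanishing Fourier coefficients at all negative indices, so it is the distributional boundary value of a holomorphic function on $|z|<1$. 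Combined with $\supp(\phi \tilde T) \subset \overline{I_+} \subsetneq S^1$, the distributional F.\ and M.\ Riesz theorem (a holomorphic function on the disk whose distributional boundary values vanish on an open arc must vanish identically) forces $\phi \tilde T = 0$.

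Varying $\phi \in \cX$, the support of $\tilde T$ is thus contained in $\bigcap_{\phi \in \cX}\{\phi = 0\}$, and it remains to show that this common zero set on $S^1$ equals $\{\pm 1\}$, i.e.\ that $\cX$ separates points of $S^1 \setminus \{\pm 1\}$. Via the Fourier identification $\phi \leftrightarrow (c_n)_{n\geq 1}$ in $s(\bbN_{\geq 1})$, the defining conditions on $\cX$ become the moment vanishings $\sum c_n n^k = 0 = \sum(-1)^n c_n n^k$ for all $k \geq 0$, while $\phi(e^{i\theta_0}) = \sum c_n e^{-in\theta_0}$. A Hahn-Banach argument shows that for $\theta_0 \notin \{0,\pi\}$ the bounded oscillating functional $(e^{-in\theta_0})$ is not in the weak-$*$ closed span of the polynomial moment functionals in $s'(\bbN)$, yielding a $\phi \in \cX$ with $\phi(e^{i\theta_0}) \neq 0$. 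Thus $\tilde T$ is supported in $\{\pm 1\}$, so is a finite linear combination of derivatives of $\delta_{\pm 1}$; but every $f \in C_0^\infty(I_+)$ vanishes to all orders at $\pm 1$, forcing $T(f) = \tilde T(f) = 0$. The main obstacle will be this separation step: while $\cX$ is easily seen to be nonzero via a moment-matching (Borel-type) construction, rigorously ruling out a larger common zero set requires a careful weak-$*$ comparison of oscillating versus polynomially growing sequences in $s'(\bbN)$.
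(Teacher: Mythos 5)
Your duality strategy is genuinely different from the paper's, which is constructive: there one sets $F_N|_{I_+} = \tfrac{N}{\sqrt{2\pi}}\int e^{-(Ns)^2/2}\, f\circ v_{-s}\,ds$, notes that this Gaussian convolution along the M\"obius flow converges to $f$ in $C^\infty$, and then checks by hand (via the coordinate $\alpha(z)=\log(-i\tfrac{z-1}{z+1})$ and the estimate $\lim_{t\to 0^+}t^{-k}e^{-(\log t)^2}=0$) that $F_N$ extends to an element of $\cX$. Much of your scaffolding is sound: $\cX$ is indeed the intersection of $C_0^\infty(S^1)$ with boundary values of functions holomorphic on $\abs{z}>1$ vanishing at $\infty$; it is a multiplicative ideal there; the one-sided Fourier spectrum of $\phi\tilde T$ together with its vanishing on the open arc $I_-$ does force $\phi\tilde T=0$ by unique continuation for distributional boundary values of holomorphic functions; and a distribution supported in $\{\pm 1\}$ annihilates $C_0^\infty(I_+)$. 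Weak density of a subspace in the Fr\'echet space $C_0^\infty(I_+)$ also does give density, so the Hahn--Banach framing is legitimate.

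There are, however, two gaps. The minor one: ``extend $T$ by zero'' is not an available operation, since there is no projection $C^\infty(S^1)\to C_0^\infty(I_+)$. What you actually need is that every continuous functional on $C_0^\infty(I_+)$ is represented by a distribution supported in $\overline{I_+}$, equivalently that every distribution on $S^1$ splits as a sum of ones supported in $\overline{I_+}$ and $\overline{I_-}$. This is true (distributions on $S^1$ have finite order and can be split across the two points $\pm 1$ by a Hadamard-type regularization), but it requires proof, and without the support property the identity $\langle\tilde T,F|_{S^1}\rangle=\langle T,F|_{I_+}\rangle$ fails because of the uncontrolled term $\langle\tilde T,F|_{I_-}\rangle$. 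The major gap is the one you flag yourself: the separation step. Your reduction to a weak-$*$ statement in $s'(\bbN)$ is essentially a reformulation rather than a proof --- by the bipolar theorem it is equivalent to exhibiting $(c_n)$ rapidly decreasing, satisfying all the moment conditions, with $\sum c_n e^{-in\theta_0}\neq 0$, i.e.\ to exhibiting the very element of $\cX$ you are trying to produce, so nothing has been gained. The honest way to close this is an explicit construction, e.g.\ $\phi(z)=z^{-1}\exp(-(1-z^{-1})^{-1/2})\exp(-(1+z^{-1})^{-1/2})$ with principal branches: since $\Re(1-w)\ge 0$ on the closed unit disk one has $\Re\big((1-w)^{-1/2}\big)\ge \abs{1-w}^{-1/2}/\sqrt{2}$, so $\phi\in\cX$ and $\phi$ is nonvanishing on $S^1\setminus\{\pm 1\}$, which finishes your argument in one stroke and makes the sequence-space detour unnecessary. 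As written, though, the proposal is incomplete at its central point.
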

\begin{proof}
By precomposing with the map $\tfrac{1}{z}$, we can consider functions holomorphic on the closed unit disk $\bbD=\{\abs{z} \le 1\}$ rather than $\bbO$, and without loss of generality we consider only the case of $I_+$.
Our problem is then: fix $f \in C_0^\infty(I_+)$, and find a sequence of functions $F_N$ on the closed unit disk that satisfy: (i) $F_N$ is smooth on $\bbD$ and $F_N(0)=0$, (ii) $F_N$ is holomorphic on the open unit disk $\mathring{\bbD}$, (iii) $F_N|_{I_+} \in C^\infty_0(I_+)$, and (iv) $F_N|_{I_+} \to f$ in the $C^\infty$ topology.
Without loss of generality we may assume that $f$ is identically zero in a neighborhood of $\{\pm 1\}$, as such functions are dense in $C^\infty_0(I_+)$.
We may also ignore the requirement $F_N(0)=0$, as if we approximate $z^{-1}f$ by $F_N$, then $f$ is approximated by $z F_N$.

Following \cite{Henriques2014CFTNotes} we first define $F_N \in C^\infty_0(I_+)$ as a convolution 
\begin{equation}\label{eqn: FN interval}
F_N|_{I_+} := \frac{N}{\sqrt{2\pi}} \int_{-\infty}^\infty e^{-\tfrac{(Ns)^2}{2}} \cdot f \circ v_{-s} \, ds.
\end{equation}
We have $F_N|_{I_+} \in C^\infty(I_+)$ and $F_N|_{I_+} \to f$ in the $C^\infty$ topology, which demonstrates points (iii) and (iv) provided we can show that $F_N$ extends to a (necessarily unique) function on the closed unit disk satisfying (i) and (ii).
We will do this in two steps.
In Step 1, we will show that $F_N$ extends from its original domain $I_+$ to a function holomorphic in a neighborhood of $\bbD \setminus \{\pm 1\}$.
In Step 2, we will show that $F_N$ extends further to take the value zero at $z=\pm 1$, and that the resulting function is smooth on the closed disk $\bbD$.
These two steps demonstrate the existence of the required sequence $F_N$, at which point the proof will be complete.

We begin with Step 1, and construct an analytic extension of $F_N$ to a neighborhood of $\bbD \setminus \{\pm 1\}$ as follows.
As before, let $\bbS=\bbS_{i\pi}$ be the strip 
\[
\bbS = \{ z \in \bbC \, : \, 0 \le \Im(z) \le \pi\}
\]
and let $\alpha:\bbD \setminus \{\pm 1\} \to \bbS$ be the holomorphic map
\[
\alpha(z) = \log\left(-i \frac{z-1}{z+1}\right)
\]
with the branch of $\log$ on the upper half-plane chosen so that the image of $\alpha$ is $\bbS$.
This map satisfies $\alpha(\mathring{I_+}) = \bbR$ and $\alpha(\mathring{I_-}) = \bbR + i \pi$.
In fact, for $z \in \mathring{I_+}$ we have $v_{\alpha(z)}(i) = z$, so that 
\begin{align*}
 F_N(z) &= \frac{N}{\sqrt{2\pi}} \int_{-\infty}^\infty e^{-\tfrac{(Ns)^2}{2}} \, f(v_{-s}(z))\,ds\\
  &= \frac{N}{\sqrt{2\pi}} \int_{-\infty}^\infty e^{-\tfrac{(Ns)^2}{2}} \, f(v_{-s+\alpha(z)}(i))\,ds\\
 &= \frac{N}{\sqrt{2\pi}} \int_{-\infty}^\infty e^{-\tfrac{(N(s+\alpha(z)))^2}{2}} \, f(v_{-s}(i))\,ds.
\end{align*}
Let $\tilde f:\bbR \to \bbC$ be the smooth function
$
\tilde f(s) = f(v_{-s}(i))
$
and note that $\tilde f$ has compact support since $f$ was assumed to vanish in a neighborhood of $\pm 1$.
Hence for $z \in \mathring I_+$ we have
\begin{equation}\label{eqn FN}
F_N(z) =  \frac{N}{\sqrt{2\pi}} \int_{-\infty}^\infty e^{-\tfrac{(N(s+\alpha(z)))^2}{2}} \, \tilde f(s)\,ds = \frac{N}{\sqrt{2\pi}} \int_{-M}^M e^{-\tfrac{(N(s+\alpha(z)))^2}{2}} \, \tilde f(s)\,ds,
\end{equation}
for some $M$ sufficiently large that the interval $[-M,M]$ contains the support of $\tilde f$.
The formula \eqref{eqn FN} now visibly defines a holomorphic function of $z$ in a neighborhood of $\bbD \setminus \{\pm 1\}$ which extends the original definition of $F_N$ on $\mathring I_+$, and we denote this extension again by $F_N$.
This completes Step 1.

In Step 2, we will show that $F_N$ extends further to a smooth function on the closed disk $\bbD$.
We do this by showing that the holomorphic function $F_N|_{\mathring{\bbD}}$ and all of its derivatives extend continuously to $\partial \bbD = S^1$.
From Step 1 we know that there exists such an extension to a neighborhood of $\bbD \setminus \{\pm 1\}$, so it remains to consider the behavior near $z=\pm 1$. We first argue that the integrand of \eqref{eqn FN} is bounded  as a function of  $(s,z) \in [-M,M] \times (\bbD \setminus \{\pm 1\})$.
By inspection, it suffices to show that the function $\operatorname{Re}\big[ (s+\alpha(z))^2 \big]$ is bounded below on the same domain.
And indeed,
\[
\operatorname{Re}\big[ (s+\alpha(z))^2 \big] = -\operatorname{Im}[\alpha(z)]^2 + (s+\operatorname{Re}[\alpha(z)])^2
\]
is visibly bounded below, since $\alpha$ takes values in the strip $\bbS$.

Since the integrand of \eqref{eqn FN} is bounded, we may apply the dominated convergence theorem to obtain
\[
\lim_{z \to \pm 1} F_N(z) = 
\frac{N}{\sqrt{2\pi}} \int_{-M}^M \lim_{z \to \pm 1} e^{-\tfrac{(N(s+\alpha(z)))^2}{2}} \, \tilde f(s)\,ds =
0,
\]
with the limit taken over $z \in \bbD \setminus \{\pm 1\}$.
It follows that $F_N$ extends to a continuous function on all of $\bbD$, vanishing at $\pm 1$.

We next wish to show that the complex derivatives $F_N^{(k)}$, a priori defined on $\mathring{\bbD}$, also extend to continuous functions on $\bbD$ vanishing at $\pm 1$.
We use a similar argument.
Since the integrand of \eqref{eqn FN} is smooth on $[-M,M] \times (\bbD \setminus \{\pm 1\})$, we may differentiate under the integral and see that the iterated $z$-derivatives of $F_N$ are of the form
\begin{equation} \label{eqn: FNk}
F^{(k)}_N(z) =\frac{N}{\sqrt{2\pi}} \int_{-M}^M r(z) e^{-\tfrac{(N(s+\alpha(z)))^2}{2}} \, \tilde f(s)\,ds,
\end{equation}
for holomorphic functions $r(z)$ that are polynomials in $z$, $(z\pm1)^{-1}$ and $\alpha(z)$.

It follows that $F_N^{(k)}$ extends to a continuous function on $\bbD \setminus \{\pm 1\}$, and it remains to analyze the points $z=\pm 1$. Without loss of generality we only consider $z=1$.

Near $z=1$, straightforward computation yields that when $-M \le s \le M$ we have
\[
e^{-\tfrac{(N(s+\alpha(z)))^2}{2}} \le Ce^{-k \Re[\alpha(z)]^2} \le C e^{-k(\log\abs{z-1})^2}.
\]
As $\lim_{t \to 0^+} t^{-k} e^{-(\log t)^2} = \lim_{t \to 0^+} t^{-k-\log t} = 0$ for any $k > 0$,
we see that
\[
\lim_{z\to 1} r(z)e^{-\tfrac{(N(s+\alpha(z)))^2}{2}} = 0,
\]
with the limit taken over $z \in \bbD \setminus \{\pm 1\}$.
An application of the dominated convergence theorem to \eqref{eqn: FNk} now yields that 
\[
\lim_{z \to 1} F_N^{(k)}(z) = 0,
\]
with the limit again taken over $z \in \bbD \setminus \{\pm 1\}$.
The same holds at $z=-1$, and we conclude that each $F_N^{(k)}$ extends to a continuous function on the closed disk $\bbD$, vanishing at $\pm 1$.
It follows that $F_N$ extends to a smooth function on $\bbD$ with the required behavior, completing the proof.

\end{proof}

We will now study the action of analytically continued operators $\widetilde V_\tau$ on $\cP_\cX(I_+)\Omega$, which is a dense subspace $\cD$ by Lemma~\ref{lem: X dense} and the Reeh-Schlieder property \cite[App. A]{CarpiRaymondTanimotoTener25}.
\begin{defn}
If $F \in \cX$ and $\tau \in \bbS := \{0 \le \Im(\tau) \le \pi\}$, we write $\beta_d(v_\tau)F|_{I_+}$ for the function
\begin{align}\label{eqn: beta d F on I}
(\beta_d(v_\tau)F)(z) &= X_{v_\tau}(v_{-\tau}(z))^{d-1} F(v_{-\tau}(z)) \\
&= (\cosh(\tau) + \Re(z)\sinh(\tau))^{d-1}F(v_{-\tau}(z)), & z \in I_+. \nonumber
\end{align}
\end{defn}
The requirement that $\tau \in \bbS$ ensures that $v_{-\tau}(I_+) \subset \bbO$, which is the domain of definition of $F$.
If $f = F|_{I_+} \in C^\infty(S^1)$ and $t \in \bbR$ then $\beta_d(v_t)F|_{I_+} = \beta_d(v_t)f$, which is to say that the above notation is consistent with our usual definition of $\beta_d(v_t)$.
Since $F$ vanishes to all orders at $\pm 1$ we have $\beta_d(v_\tau)F|_{I_+} \in C_0^\infty(I_+)$, which as previously mentioned we typically extend to a function in $C^\infty_0(S^1)$ by extending to be the zero function on $I_-$.
It follows from the definition that
\begin{equation}\label{eqn: compatibility beta d tau and t}
\beta_d(v_t)\big[\beta_d(v_\tau)F|_{I_+}\big] = \beta_d(v_{\tau+t})F|_{I_+},
\end{equation}
giving an analytic continuation of the more standard identity that holds when $\tau \in \bbR$.

\begin{lem}\label{lem: vector valued holo map on strip}
Let $\bbS = \{ \tau \in \bbC \, : \, 0 \le \Im(\tau) \le \pi\}$.
\begin{enumerate}[(i)]
\item Let $F \in \cX$. Then the map $\bbS \to C_0^\infty(I_+)$ given by $\tau \mapsto \beta_d(v_\tau)F|_{I_+}$ is continuous and holomorphic in the interior.

\item Let $F_1, \ldots, F_k \in \cX$ and let $\varphi_1, \ldots \varphi_k \in \cF$.
Then the map $\bbS \to \cD$ given by
\[
\tau \mapsto \varphi_1(\beta_{d_1}(v_\tau)F_1|_{I_+}) \cdots \varphi_k(\beta_{d_k}(v_\tau)F_k|_{I_+})\Omega
\]
is continuous and holomorphic in the interior.
\end{enumerate}
\end{lem}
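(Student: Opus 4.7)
For part (i), I would work directly from the explicit formula
\[
(\beta_d(v_\tau)F)(z) \;=\; (\cosh\tau + \Re(z)\sinh\tau)^{d-1}\,F(v_{-\tau}(z)), \qquad z\in I_+,\ \tau\in\bbS,
\]
which for each $\tau$ is smooth in $z$ on $\mathring{I_+}\setminus\{\pm 1\}$ and, for each fixed such $z$, is visibly holomorphic in $\tau$: it is a product of a polynomial in $\cosh\tau$, $\sinh\tau$, $\Re(z)$ with $F$ composed with the holomorphic map $\tau\mapsto v_{-\tau}(z)\in\bbO$, noting that $F$ is holomorphic on $\mathring{\bbO}$. The first task is to verify that $\beta_d(v_\tau)F|_{I_+}\in C_0^\infty(I_+)$. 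The points $\pm 1$ are fixed by every $v_{-\tau}$, with Taylor expansion $v_{-\tau}(z)\mp 1 = (z\mp 1)\,h_{\pm}(\tau,z)$ for smooth, nonvanishing $h_\pm$. Since $F\in\cX$ and all of its derivatives vanish to \emph{all} orders at $\pm 1$, an application of Fa\`a di Bruno shows that every $z$-derivative of $F\circ v_{-\tau}$ vanishes faster than any polynomial at $\pm 1$, uniformly in $\tau$ on compact subsets of $\bbS$.

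For continuity of the map $\bbS\to C_0^\infty(I_+)$, I would estimate each $C^N$-seminorm by differentiating the explicit formula in $z$ and combining the uniform vanishing just described with continuity of the derivatives of $v_{-\tau}$ in $\tau$; this yields continuity of each seminorm on compact subsets of $\bbS$. For holomorphy in $\mathring{\bbS}$, the combination of pointwise holomorphy of $\tau\mapsto(\beta_d(v_\tau)F)(z)$ (already noted above) with continuity into the Fr\'echet space $C_0^\infty(I_+)$ suffices: either check weak holomorphy directly against continuous linear functionals (compactly supported distributions, so Morera or the vector-valued Cauchy integral formula permits one to exchange contour integrals with distributional pairings), or invoke the general principle that a continuous, weakly holomorphic map into a Fr\'echet space is strongly holomorphic.

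Part (ii) is then a formal consequence. The defining property of the $\cF$-strong topology on $\cD$ makes the multilinear map $(g_1,\ldots,g_k)\mapsto\varphi_1(g_1)\cdots\varphi_k(g_k)\Omega$ jointly continuous from $C^\infty(S^1)^k$ to $\cD$, and extension by zero gives a continuous embedding $C_0^\infty(I_+)\hookrightarrow C^\infty(S^1)$. Composing with the maps $\tau\mapsto\beta_{d_j}(v_\tau)F_j|_{I_+}$ from (i) yields continuity on $\bbS$ and, since a jointly continuous multilinear map carries holomorphic tuples to holomorphic maps, holomorphy on $\mathring{\bbS}$.

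The principal obstacle is entirely in part (i): obtaining uniform-in-$\tau$ control of the $C^N$-seminorms of $\beta_d(v_\tau)F|_{I_+}$ near the fixed points $\pm 1$. Although $v_{-\tau}$ fixes $\pm 1$ for every $\tau$, its higher derivatives depend on $\tau$ and do not vanish there, so the all-orders vanishing of $F\in\cX$ at $\pm 1$ must be leveraged via a careful Fa\`a di Bruno argument to obtain locally uniform bounds on the $\tau$-dependence. Once those estimates are in hand, the upgrade from pointwise to Fr\'echet-valued holomorphy is routine and part (ii) is formal.
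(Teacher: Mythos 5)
Your proposal is correct and follows essentially the same route as the paper: both parts rest on direct inspection of the explicit formula for $(\beta_d(v_\tau)F)(z)$ — joint smoothness in $(\tau,z)$ together with holomorphy in $\tau$ — and part (ii) is deduced formally from the joint continuity of the multilinear map defining the $\cF$-strong topology plus the product rule. The paper's proof is terser (it asserts smoothness of the map into $C_0^\infty(I_+)$ by inspection and checks $\partial_{\overline{\tau}}=0$ where you invoke weak-plus-continuous implies strong holomorphy); your extra attention to uniform control of the all-orders vanishing at $\pm 1$ just fills in a detail the paper leaves implicit.
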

\begin{proof}
From inspection of the formula \eqref{eqn: beta d F on I} we can see that the map $\bbS \times I_+ \to \bbC$ given by $(\tau,z) \mapsto (\beta_d(v_\tau)F)(z)$ is smooth.
Hence the associated map $\bbS \to C^\infty_0(I_+)$ is smooth, and again from the formula \eqref{eqn: beta d F on I} we see that the derivative $\partial_{\overline{\tau}}$ vanishes on $\mathring{\bbS}$, establishing claim (i).

By the definition of the topology on $\cD$, expressions $\varphi_1(f_1) \cdots \varphi_k(f_k)\Omega$ are jointly continuous in the $f_j$, and combined with part (i) we have that 
\[
\varphi_1(\beta_{d_1}(v_\tau)F_1|_{I_+}) \cdots \varphi_k(\beta_{d_k}(v_\tau)F_k|_{I_+})\Omega
\]
depends continuously on $\tau$.
The usual proof of the product rule shows that this dependence is holomorphic on $\mathring{\bbS}$.
\end{proof}

As a straightforward consequence of Lemma~\ref{lem: vector valued holo map on strip} we have the following, which implies that for all $\tau \in \bbS$ we have that $D(\widetilde V_\tau)$ contains $\cP_{\cX}(I_+)\Omega$, and in particular is densely defined (by Lemma~\ref{lem: X dense}).
The proof of the following lemma is adapted from \cite[p.42-43]{Henriques2014CFTNotes}.
\begin{lem}\label{lem: Vtau action}
Let $\varphi_1,\ldots, \varphi_k \in \cF$, let $F_1, \ldots, F_k \in \cX$, and let $\bbS = \{ 0 \le \Im(\tau) \le \pi\}$.
Let 
$
\Phi = \varphi_1(F_1|_{I_+}) \cdots \varphi_k(F_k|_{I_+})\Omega.
$
Then we have the following.
\begin{enumerate}[(i)]
\item For any $\tau \in \bbS$ we have $\Phi \in D(\widetilde V_\tau)$ and
\begin{equation}\label{eqn: Vtau action}
\widetilde V_\tau\Phi = \varphi_1(\beta_{d_1}(v_\tau)F_1|_{I_+}) \cdots \varphi_k(\beta_{d_k}(v_\tau)F_k|_{I_+})\Omega.
\end{equation}
\item If $d=\sum d_j$ then
\begin{equation}\label{eqn: Vipi action}
\widetilde V_{i\pi}\Phi =  (-1)^d \varphi_k(F_k|_{I_+} \circ z^{-1}) \cdots \varphi_1(F_1|_{I_+} \circ z^{-1})\Omega. 
\end{equation}
\end{enumerate}
\end{lem}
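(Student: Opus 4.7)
The plan is to verify the defining property of $\widetilde V_\tau$ directly by producing a candidate vector-valued holomorphic family on the strip $\bbS$ whose boundary values match the claimed formulas, and then to extract part (ii) by explicit evaluation at $\tau = i\pi$ combined with a locality-based reordering.

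For part (i), I would take as the candidate for $\widetilde V_\tau \Phi$ the vector
\[
\Psi_\tau \; := \; \varphi_1(\beta_{d_1}(v_\tau)F_1|_{I_+}) \cdots \varphi_k(\beta_{d_k}(v_\tau)F_k|_{I_+})\Omega, \qquad \tau \in \bbS.
\]
By Lemma~\ref{lem: vector valued holo map on strip}(ii), the map $\tau \mapsto \Psi_\tau$ is continuous on $\bbS$ and holomorphic on the interior. For $t \in \bbR$ the definition of $\beta_d(v_t)F|_{I_+}$ agrees with the usual one on $F|_{I_+} \in C^\infty(S^1)$, so iterating M\"obius covariance together with $V_t\Omega = \Omega$ yields $V_t\Phi = \Psi_t$. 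Using the compatibility \eqref{eqn: compatibility beta d tau and t} the same covariance argument gives $\Psi_{\tau + t} = V_t \Psi_\tau$ for $t \in \bbR$. Applying $\lambda \in \cD_\cF^*$ produces a scalar function $G_\lambda(\sigma) := \lambda(\Psi_\sigma)$ that is continuous on $\bbS_\tau \subset \bbS$, holomorphic on the interior, and verifies $G_\lambda(t) = \lambda(V_t\Phi)$ and $G_\lambda(\tau + t) = \lambda(V_t \Psi_\tau)$. Hahn--Banach forces uniqueness and hence $\widetilde V_\tau \Phi = \Psi_\tau$.

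For part (ii), specialize $\tau = i\pi$: since $v_{-i\pi}(z) = 1/z$ on $S^1$ and $\cosh(i\pi)+\Re(z)\sinh(i\pi) = -1$, the $\beta$-factor yields
\[
\beta_{d_j}(v_{i\pi})F_j|_{I_+}(z) \; = \; (-1)^{d_j - 1} F_j(1/z), \qquad z \in I_+,
\]
extended by zero to $I_-$. The factors $(-1)^{d_j - 1}$ accumulate to $(-1)^{d-k}$. The remaining task is to rewrite this product in the reversed-order form of (ii), where the smearings are supported in $I_-$. I would realize this match by introducing a \emph{second} vector-valued holomorphic family $\widetilde\Psi_\tau$ on $\bbS$ whose boundary value at $\tau = 0$ is $\Phi$ (expressed via the $\cX$-analytic extensions using the opposite semicircle and locality to reorder the commuting factors between $I_+$-supported and $I_-$-supported fields) and whose boundary value at $\tau = i\pi$ is manifestly the right-hand side of (ii). Equality $\widetilde\Psi_\tau = \Psi_\tau$ then follows from uniqueness of the analytic continuation in Definition~\ref{def: Vs}. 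The overall sign $(-1)^d$ is obtained by combining the $(-1)^{d-k}$ from the $\beta$-factors with a $(-1)^k$ coming from the reversal of operator order.

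The hard part will be part (ii): producing the second holomorphic family and verifying that it indeed lands on the $I_-$-supported, reversed-order product at $\tau = i\pi$. In the unitary setting this identification is a consequence of the existence of the PCT operator and Tomita--Takesaki theory, but here both must be replaced by hands-on analytic continuation arguments that exploit the global holomorphicity of $F_j \in \cX$ on $\bbO$ and the locality axiom, without invoking functional calculus, adjoints, or a Hermitian form.
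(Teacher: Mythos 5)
Part (i) of your proposal is correct and is essentially the paper's argument: take $\Psi_\tau$ as the candidate, invoke Lemma~\ref{lem: vector valued holo map on strip}, use \eqref{eqn: compatibility beta d tau and t} for the boundary value at $\bbR+\tau$, and conclude via Definition~\ref{def: Vs}.

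Part (ii), however, has a genuine gap, and the route you sketch runs into an obstruction. First, the ``second holomorphic family'' $\widetilde\Psi_\tau$ built from $I_-$-supported smearings $\beta_{d_j}(v_\tau)F_j|_{I_-}$ is not defined for $\tau$ in the open upper strip: the definition of $\beta_d(v_\tau)F|_{I_\pm}$ requires $v_{-\tau}(I_\pm)\subset\bbO$, and for $0<\Im(\tau)<\pi$ the map $v_{-\tau}$ sends $I_-$ \emph{into} the open unit disk, where $F\in\cX$ is not defined. So no second analytic continuation on $\bbS$ is available, and none is needed. Second, and more importantly, you never identify the mechanism that converts an $I_+$-supported smearing into an $I_-$-supported one; locality alone cannot do this, since after evaluating at $\tau=i\pi$ all the factors $\varphi_j\bigl(F_j\circ z^{-1}|_{I_+}\bigr)$ are supported in the \emph{same} interval and need not commute with one another. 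The missing ingredient is purely algebraic and happens entirely at the endpoint $\tau=i\pi$: since $F\circ z^{-1}$ is holomorphic on the closed unit disk and vanishes at the origin, the full-circle smearing $\varphi\bigl(F\circ z^{-1}|_{S^1}\bigr)$ strictly lowers conformal dimension and therefore annihilates $\Omega$, whence
\[
\varphi\bigl(F\circ z^{-1}|_{I_+}\bigr)\Omega \;=\; -\,\varphi\bigl(F\circ z^{-1}|_{I_-}\bigr)\Omega .
\]
Applying this to the rightmost factor (which acts on the vacuum), then using locality to commute the resulting $I_-$-supported operator leftward past the remaining $I_+$-supported ones, and iterating $k$ times, produces the reversed-order product with an extra sign $(-1)^k$; combined with your $(-1)^{d-k}$ from the $\beta$-factors this gives $(-1)^d$. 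Without this vacuum-annihilation identity your plan for (ii) cannot be completed.
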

\begin{proof}
Fix $\tau_0 \in \bbS$ and $\lambda \in \cD_\cF^*$.
By Lemma~\ref{lem: vector valued holo map on strip}, the map 
\[
G_{\lambda}(\tau) := \lambda\big(\varphi_1(\beta_{d_1}(v_\tau)F_1|_{I_+}) \cdots \varphi_k(\beta_{d_k}(v_\tau)F_k|_{I_+})\Omega\big)
\]
is continuous on the closed strip bounded by $\bbR$ and $\tau_0 + \bbR$ and holomorphic on the interior, with
\[
G_{\lambda}(t) = \lambda\big(V_t \varphi_1(F_1|_{I_+}) \cdots \varphi_k(F_k|_{I_+})\Omega\big), \qquad t \in \bbR.
\]
In light of \eqref{eqn: compatibility beta d tau and t}, for $t \in \bbR$ we have
\[
G_{\lambda}(t+\tau_0) = \lambda\big(V_t \varphi_1(\beta_{d_1}(v_{\tau_0})F_1|_{I_+}) \cdots \varphi_k(\beta_{d_k}(v_{\tau_0})F_k|_{I_+})\Omega\big), \qquad t \in \bbR.
\]
Hence by Definition~\ref{def: Vs}, $\varphi_1(F_1|_{I_+}) \cdots \varphi_k(F_k|_{I_+})\Omega \in D(\widetilde V_{\tau_0})$ and 
\[
\widetilde V_{\tau_0}\big[\varphi_1(F_1|_{I_+}) \cdots \varphi_k(F_k|_{I_+})\Omega\big] = \varphi_1(\beta_{d_1}(v_{\tau_0})F_1|_{I_+}) \cdots \varphi_k(\beta_{d_k}(v_{\tau_0})F_k|_{I_+})\Omega.
\]
This completes the first part of the proof.

It remains to show that the action of $\widetilde V_{i \pi}$ agrees with \eqref{eqn: Vipi action}.
To that end, note that $v_{i\pi}(z) = z^{-1}$, and so by definition  for $F \in \cX$ we have 
\[
\beta_{d_j}(v_{i\pi})F|_{I_+} = (-1)^{d_j-1} \cdot (F \circ z^{-1})|_{I_+} = (-1)^{d_j-1} \cdot F|_{I_-} \circ z^{-1}
\]
Hence from the formula for $\widetilde V_{\tau}$ above we have
\begin{equation}\label{eqn: vipi intermediate step}
\widetilde V_{i\pi} \Phi = (-1)^{d-k} \varphi_1(F_1 \circ z^{-1}|_{I_+}) \cdots \varphi_k(F_k \circ z^{-1}|_{I_+})\Omega.
\end{equation}

We next claim that if $F \in \cX$ then we have
\begin{equation}\label{eqn: claim value on vacuum}
\varphi(F \circ z^{-1}|_{I_+})\Omega = - \varphi(F \circ z^{-1}|_{I_-})\Omega.
\end{equation}
Indeed $F \circ z^{-1}$ is holomorphic on the disk $\mathring{\bbD}$ and vanishes at the origin, and hence its power series expansion at the origin contains only (strictly) positive powers of $z$. 
It follows that $\varphi(F \circ z^{-1}|_{S^1})$ strictly lowers conformal dimension.
In particular,
\[
\varphi(F \circ z^{-1}|_{S^1})\Omega = 0.
\]
Since $\varphi(F \circ z^{-1}|_{S^1}) = \varphi(F \circ z^{-1}|_{I_+}) + \varphi(F \circ z^{-1}|_{I_-})$, the claim \eqref{eqn: claim value on vacuum} follows.

Following \cite[p.43]{Henriques2014CFTNotes}, we set $g_j = F_j \circ z^{-1}|_{I_+}$ and $h_j = F_j \circ z^{-1}|_{I_-} = F_j|_{I_+} \circ z^{-1}$ and compute
\begin{align*}
\varphi_1(g_1) \cdots \varphi_k(g_k)\Omega & = 
 - \varphi_1(g_1) \cdots \varphi_{k-1}(g_{k-1})\varphi_k(h_k)\Omega\\
 & = -\varphi_k(h_k)\varphi_1(g_1) \cdots \varphi_{k-1}(g_{k-1})\Omega, 
\end{align*}
where the first equality is \eqref{eqn: claim value on vacuum} and the second equality is due to locality.
Repeating this argument yields
\[
\varphi_1(g_1) \cdots \varphi_k(g_k)\Omega 
= (-1)^k\varphi_k(h_k) \cdots \varphi_1(h_1)\Omega.
\]
Combining this identity with \eqref{eqn: vipi intermediate step} yields the desired expression \eqref{eqn: Vipi action}.
\end{proof}

We have thus shown that for $\tau \in \bbS$ the domain $D(\widetilde V_\tau)$ contains $\cP_\cX(I_+)\Omega$, where we recall $\cP_\cX(I_+) \subset \cP(I_+)$ is generated by operators $\varphi(F|_{I_+})$ with $F \in \cX$.
We would like to extend this result to $\cP(I_+)\Omega$ via a limiting argument, approximating $f \in C^\infty_0(I_+)$ by functions of the form $F|_{I_+}$.
We will begin by examining the growth rate of the function $\lambda(V_t\Phi)$.

\begin{lem}\label{lem: growth estimate on R}
Let $\varphi_1, \ldots, \varphi_k \in \cF$ and let $\lambda \in \cD_\cF^*$.
Then there exists a positive integer $N$ and function $C:\bbR \to \bbR_{> 0}$ such that:
\begin{enumerate}[(i)]
\item  there exist constants $k_1,k_2>0$ such that $C(t) \le k_1 e^{k_2\abs{t}}$ for all $t \in \bbR$
\item for all $f_j,g_j \in C^\infty(S^1)$ and all $t \in \bbR$ we have
\begin{align*}
&\abs{\lambda\big(\big[V_t\varphi_1(f_1) \cdots \varphi_k(f_k) \Omega\big]-\big[V_t\varphi_1(g_1) \cdots \varphi_k(g_k) \Omega\big]\big)} \le\\
&\qquad C(t)\left[\prod_{j=1}^k \big(1+\norm{f_j}_{C^N}+\norm{g_j}_{C^N}\big) \right]\max_{1 \le j \le k} \norm{f_j-g_j}_{C^N},
\end{align*}
where $\norm{f}_{C^N} = \sum_{j=0}^N \big\|\tfrac{d^j}{d\theta^j} f\big\|_\infty$.
\end{enumerate}
\end{lem}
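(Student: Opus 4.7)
The plan is to reduce the estimate to a joint continuity statement for a single multilinear expression applied to transformed functions $\beta_{d_j}(v_t) f_j$, and then control the norms of these transformed functions in terms of $\|f_j\|_{C^N}$ with growth at most exponential in $|t|$.

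First I would use M\"obius covariance to move $V_t$ past the smeared fields. Since $V_t\Omega = \Omega$ and $V_t \varphi_j(f_j) V_t^{-1} = \varphi_j(\beta_{d_j}(v_t) f_j)$, we have
\[
V_t \varphi_1(f_1) \cdots \varphi_k(f_k)\Omega = \varphi_1(\beta_{d_1}(v_t) f_1) \cdots \varphi_k(\beta_{d_k}(v_t) f_k)\Omega,
\]
and similarly for the $g_j$. Thus the quantity to be estimated equals $T(\beta_{d_1}(v_t)f_1, \ldots, \beta_{d_k}(v_t)f_k) - T(\beta_{d_1}(v_t)g_1, \ldots, \beta_{d_k}(v_t)g_k)$, where
\[
T(h_1, \ldots, h_k) := \lambda(\varphi_1(h_1) \cdots \varphi_k(h_k)\Omega).
\]

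Second, by definition of $\cD_\cF^*$, the multilinear map $T: C^\infty(S^1)^k \to \bbC$ is jointly continuous. Since $C^\infty(S^1)$ is a Fr\'echet space whose topology is generated by the seminorms $\|\cdot\|_{C^N}$, joint continuity provides an $N$ and a constant $K>0$ with $|T(h_1,\ldots,h_k)| \le K \prod_j \|h_j\|_{C^N}$. A standard telescoping identity then yields
\[
|T(h_1,\ldots,h_k) - T(h'_1,\ldots,h'_k)| \le K \biggl[\prod_{j=1}^k \bigl(1 + \|h_j\|_{C^N} + \|h'_j\|_{C^N}\bigr)\biggr] \max_{1 \le j \le k} \|h_j - h'_j\|_{C^N}.
\]

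Third, and this is the main technical step, I would establish an estimate of the form $\|\beta_d(v_t)f\|_{C^N} \le c_{d,N}(t) \|f\|_{C^N}$ where $c_{d,N}(t) \le k_1 e^{k_2 |t|}$. Using the explicit formula $\beta_d(v_t)f(z) = (\cosh(t) + \Re(z)\sinh(t))^{d-1} f(v_{-t}(z))$, the factor $(\cosh(t) + \Re(z)\sinh(t))^{d-1}$ and all its $\theta$-derivatives are uniformly bounded on $S^1$ by $O(e^{(d-1)|t|})$. For the composition $f \circ v_{-t}$, repeated application of the chain rule (or Fa\`a di Bruno) expresses the $N$-th $\theta$-derivative as a polynomial in the first $N$ derivatives of the angle function $\theta \mapsto \arg(v_{-t}(e^{i\theta}))$, multiplied by derivatives of $f$ evaluated at $v_{-t}(e^{i\theta})$. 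Since the first derivative equals $X_{v_{-t}} = 1/X_{v_t}(v_{-t}(\cdot)) = (\cosh t - \Re(v_{-t}(z)) \sinh t)^{-1}$, which is globally bounded by $e^{|t|}$, and higher derivatives of $\arg(v_{-t}(\cdot))$ are polynomials in this quantity, each term is bounded by a polynomial in $e^{|t|}$, giving the desired exponential bound.

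Applying the third step inside the bound from the second step with $h_j = \beta_{d_j}(v_t)f_j$ and $h'_j = \beta_{d_j}(v_t)g_j$ yields the conclusion, with $C(t)$ of the form $K \prod_j (1 + c_{d_j,N}(t)) \cdot \max_j c_{d_j,N}(t)$, which grows at most exponentially in $|t|$. The main obstacle is the $C^N$ estimate in step three: it is conceptually routine but requires careful bookkeeping through Fa\`a di Bruno to ensure that all the factors arising from the chain rule on $v_{-t}$ contribute only exponential (not worse than exponential) growth, and that the dependence on $f$ is linear in $\|f\|_{C^N}$ with no hidden multiplicative dependence on the other $f_i$.
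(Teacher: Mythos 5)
Your proposal is correct and follows essentially the same route as the paper's proof: M\"obius covariance to replace $V_t$ by the transformed test functions, joint continuity of the multilinear map to extract $N$ and a product bound, the telescoping identity for differences of products, and finally the estimate $\norm{\beta_d(v_t)f}_{C^N} \le C(t)\norm{f}_{C^N}$ with $C$ of at most exponential growth via the explicit formula for $\beta_d(v_t)$. The only difference is cosmetic (you organize the chain-rule bookkeeping via Fa\`a di Bruno, while the paper describes the derivatives as linear combinations of products of controlled factors), so no further comment is needed.
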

\begin{proof}

The map
\[
(h_1, \ldots, h_k) \mapsto \lambda\big(\varphi_1(h_1) \cdots \varphi_k(h_k)\Omega\big)
\]
is jointly continuous on $C^\infty(S^1)^k$.
Thus by standard results (cf. the beginning of \cite[\S41]{Treves67}) there exists a positive integer $N$ such that
\[
\abs{\lambda\big(\varphi_1(h_1) \cdots \varphi_k(h_k)\Omega\big)} \le \norm{h_1}_{C^N} \cdots \norm{h_k}_{C^N}, \qquad h_j \in C^\infty(S^1),
\]
after potentially rescaling $\lambda$ for notational convenience.
Observe that
\begin{align*}
\varphi_1(f_1) \cdots \varphi_k(f_k) - \varphi_1(g_1) \cdots \varphi_k(g_k) = 
\sum_{j=1}^k \left[ \prod_{\ell=1}^{j-1} \varphi_\ell(f_\ell) \right ]\varphi_j(f_j-g_j) \left[ \prod_{\ell=j+1}^{k} \varphi_\ell(g_\ell) \right ],
\end{align*}
where we take the convention that the (non-commuting) products are ordered left-to-right, and the empty products (when $j=1$ or $j=k$) are omitted.
Hence
\begin{align*}
&\abs{\lambda\big(\big[V_t\varphi_1(f_1) \cdots \varphi_k(f_k) \Omega\big]-\big[V_t\varphi_1(g_1) \cdots \varphi_k(g_k) \Omega\big]\big)} \le\\
&\qquad \le \sum_{j=1}^k \abs{\lambda\left( \left[ \prod_{\ell=1}^{j-1} \varphi_\ell(\beta_{d_\ell}(v_t) f_\ell) \right ]\varphi_j(\beta_{d_j}(v_t)[f_j-g_j]) \left[ \prod_{\ell=j+1}^{k} \varphi_\ell(\beta_{d_\ell}(v_t)g_\ell) \right]\Omega\right)} \le\\
&\qquad \le k \left[\prod_{j=1}^k \big(1+\norm{\beta_{d_j}(v_t)f_j}_{C^N}+\norm{\beta_{d_j}(v_t)g_j}_{C^N}\big) \right]\max_{1 \le j \le k} \norm{\beta_{d_j}(v_t)[f_j-g_j]}_{C^N}.
\end{align*}
Thus it suffices to show that for any $f \in C^\infty(S^1)$ the function $\norm{\beta_d(v_t)f}_{C^N}$ satisfies
\begin{equation}\label{eqn: exp growth bound CN norm}
\norm{\beta_d(v_t)f}_{C^N} \le C(t) \norm{f}_{C^N}
\end{equation}
for a function $C:\bbR \to \bbR_{> 0}$ satisfying the growth condition (i).
We say that a function $\bbR \to \bbR_{> 0}$ satisfying (i) has `at most exponential growth,' and we note that the class of such functions is closed under addition and products.

We first consider $\norm{\beta_d(v_t)f}_\infty$, i.e. the case $N=0$.
Recall that
\[
(\beta_d(v_t)f)(z) = (\cosh(t) + \Re(z) \sinh(t))^{d-1} \cdot f(v_{-t}(z)).
\]
Hence
\[
\norm{\beta_d(v_t)f}_\infty \le \left( \sup_{z \in S^1} \abs{\cosh(t) + \Re(z) \sinh(t)} \right)^{d-1} \norm{f}_\infty
\]
and the function $t \mapsto \sup_{z \in S^1} \abs{\cosh(t) + \Re(z) \sinh(t)}$ evidently has at most exponential growth (it is, in fact, exactly $e^{\abs{t}}$).
This establishes the necessary estimate in the case $N=0$.

More generally, the derivative $\tfrac{d^j}{d\theta^j} \beta_d(v_t)f$ is a linear combination of products of functions of the form 
\[
\tfrac{d^k}{d\theta^k}\big[\cosh(t) + \Re(z) \sinh(t)\big], \quad \tfrac{d^k}{d\theta^k} v_{-t}(z), \quad f^{(k)}(v_{-t}(z)), \qquad 0 \le k \le j,
\]
with exactly one term of the last type in each product.
Explicit calculation shows that the sup norms of functions of the first two types are bounded by functions of $t$ with at most exponential growth as above, and we conclude that \eqref{eqn: exp growth bound CN norm} holds for all $N$ with a function $C$ with at most exponential growth.
\end{proof}

The next lemma establishes an analogous growth bound on the analytic continuation $\lambda(\widetilde V_\tau\Phi)$ on the strip $\bbS$ when $\Phi \in \cP_{\cX}(I_+)\Omega$.
Of particular note is that the estimate only involves data related to the vector $\Phi$ (and the function $V_t\Phi$ on one boundary of the strip), and does not reference $\widetilde V_{i\pi}\Phi$.

\begin{lem}\label{lem: growth estimate on S}
Let $\varphi_1, \ldots, \varphi_k \in \cF$ and let $\lambda \in \cD_\cF^*$.
Then there exists a positive integer $N$ and positive constant $M$ with the following property.
Let $F_j,G_j \in \cX$ and let $f_j = F_j|_{I_+}$ and $g_j = G_j|_{I_+}$.
Then for all $\tau \in \bbS = \{0 \le \Im(\tau) \le \pi\}$ we have
\begin{align*}
&\abs{\lambda\big(\big[\widetilde V_\tau\varphi_1(f_1) \cdots \varphi_k(f_k) \Omega\big]-\big[\widetilde V_\tau\varphi_1(g_1) \cdots \varphi_k(g_k) \Omega\big]\big)} \le\\
&\qquad M e^{\Re(\tau)^2} \left[\prod_{j=1}^k \big(1+\norm{f_j}_{C^N}+\norm{g_j}_{C^N}\big) \right]\max_{1 \le j \le k} \norm{f_j-g_j}_{C^N}.
\end{align*}
\end{lem}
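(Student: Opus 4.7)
The plan is to apply a Phragmén--Lindelöf argument, after multiplying by a Gaussian factor, to the scalar-valued function
\[
H(\tau) := \lambda\bigl(\widetilde V_\tau\varphi_1(f_1)\cdots\varphi_k(f_k)\Omega - \widetilde V_\tau\varphi_1(g_1)\cdots\varphi_k(g_k)\Omega\bigr),
\]
which by Lemma~\ref{lem: vector valued holo map on strip} and Lemma~\ref{lem: Vtau action}(i) is continuous on $\bbS$ and holomorphic on $\mathring{\bbS}$. Write $\tau = t+is$ with $s\in[0,\pi]$ and let $E := \bigl[\prod_{j=1}^k(1+\norm{f_j}_{C^N}+\norm{g_j}_{C^N})\bigr]\max_{j}\norm{f_j-g_j}_{C^N}$ denote the target norms factor.

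First, I bound $\abs{H}$ on $\partial\bbS$. On the bottom line, Lemma~\ref{lem: growth estimate on R} applies directly, yielding some $N$ and constants $k_1,k_2>0$ (depending only on $\lambda$ and the $\varphi_j$) with $\abs{H(t)}\le k_1 e^{k_2\abs{t}}E$. For the top line, use the identity $\widetilde V_{t+i\pi} = V_t\widetilde V_{i\pi}$ together with Lemma~\ref{lem: Vtau action}(ii) to rewrite
\[
\widetilde V_{t+i\pi}\varphi_1(f_1)\cdots\varphi_k(f_k)\Omega = (-1)^d V_t\,\varphi_k(f_k\circ z^{-1})\cdots\varphi_1(f_1\circ z^{-1})\Omega,
\]
and similarly with $g_j$ in place of $f_j$. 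A second application of Lemma~\ref{lem: growth estimate on R} to this reversed-order expression, using that $z\mapsto z^{-1}$ is a smooth diffeomorphism of $S^1$ so that $C^N$ norms are preserved up to multiplicative constants, gives the same form of bound on the top line (after possibly enlarging $N$ and $k_1$).

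Next, set $\widetilde H(\tau) := H(\tau)e^{-\tau^2}$. Since $\abs{e^{-\tau^2}} = e^{-t^2+s^2}\le e^{-t^2+\pi^2}$ on $\bbS$, the boundary bound gives $\abs{\widetilde H(\tau)}\le k_1 e^{\pi^2}\,e^{k_2\abs{t}-t^2}\,E \le M_0 E$ on $\partial\bbS$, with $M_0 := k_1 e^{\pi^2}\sup_{t\in\bbR} e^{k_2\abs{t}-t^2}<\infty$ depending only on $\lambda$ and the $\varphi_j$. A crude interior bound, obtained from Lemma~\ref{lem: Vtau action}(i), continuity of $\lambda$, and the pointwise formula $(\beta_d(v_\tau)F)(z) = (\cosh\tau+\Re(z)\sinh\tau)^{d-1}F(v_{-\tau}(z))$ together with the at-most-exponential growth of $\abs{\cosh\tau},\abs{\sinh\tau}$ in $\abs{t}$ on $\bbS$ and the boundedness of $F$ and its derivatives on $\bbO$, shows that $\abs{H(\tau)}\le K_{F,G}\,e^{c\abs{t}}$ for constants depending on the chosen $F_j,G_j$; consequently $\widetilde H$ is continuous on $\bbS$, holomorphic on $\mathring{\bbS}$, and decays to zero as $\abs{t}\to\infty$ uniformly in $s$. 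The maximum modulus principle applied to $\widetilde H$ on the rectangles $[-R,R]\times[0,\pi]$ and passage to the limit $R\to\infty$ then give $\abs{\widetilde H(\tau)}\le M_0 E$ throughout $\bbS$, and unwinding the Gaussian weight yields
\[
\abs{H(\tau)} = \abs{\widetilde H(\tau)}\,e^{t^2-s^2}\le M_0 E\,e^{t^2} = M\,e^{\Re(\tau)^2}\,E
\]
with $M := M_0$, as required.

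The main obstacle, and the reason the argument needs the Gaussian detour rather than a direct computation, is that the target bound involves only the $C^N$ norms of the restrictions $f_j = F_j|_{I_+}$ and $g_j = G_j|_{I_+}$, whereas the formula for $\widetilde V_\tau\Phi$ from Lemma~\ref{lem: Vtau action}(i) inevitably evaluates $F_j,G_j$ at points of $\mathring{\bbO}$ that are not controlled by the boundary data on $I_+$ alone. The Phragmén--Lindelöf step propagates the boundary estimate --- which does have the right form, since on both boundaries only values on $S^1$ are involved, either directly or via $z\mapsto z^{-1}$ --- into the interior, absorbing the crude $F_j,G_j$-dependent interior growth into the fast-decaying Gaussian. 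The convenient-but-wasteful majorant $e^{\Re(\tau)^2}$ is precisely $\abs{e^{-\tau^2}}^{-1} = e^{t^2-s^2}$.
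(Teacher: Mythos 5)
Your proposal is correct and follows essentially the same route as the paper: the same Phragm\'en--Lindel\"of argument for $e^{-\tau^2}H(\tau)$, with the boundary bounds coming from Lemma~\ref{lem: growth estimate on R} (the top line via Lemma~\ref{lem: Vtau action}(ii) and the fact that $z\mapsto z^{-1}$ preserves $C^N$ norms), and a crude $F_j,G_j$-dependent interior bound used only to guarantee decay at infinity. The only cosmetic difference is how that crude interior bound is obtained --- the paper applies the real-line estimate to the deformed functions $\beta_{d_j}(v_{is})F_j|_{I_+}$ and takes a supremum over $s\in[0,\pi]$, while you estimate directly from the formula for $\beta_d(v_\tau)F$ --- which changes nothing of substance.
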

\begin{proof}
Let us temporarily fix $F_j,G_j \in \cX$ along with the corresponding $f_j=F_j|_{I_+}$ and $g_j=G_j|_{I_+}$. 
Let $\mathfrak{F},\mathfrak{G}:\bbS \to \bbC$ be the functions
\[
\mathfrak{F}(\tau) = \lambda\big(\widetilde V_\tau\varphi_1(f_1) \cdots \varphi_k(f_k) \Omega\big), \quad \mathfrak{G}(\tau) = \lambda\big(\widetilde V_\tau\varphi_1(g_1) \cdots \varphi_k(g_k) \Omega\big)
\]
and let $\mathfrak{H} = \mathfrak{F}-\mathfrak{G}$.
By Lemma~\ref{lem: vector valued holo map on strip}, the functions $\mathfrak{F},\mathfrak{G}$ and $\mathfrak{H}$ are continuous on the closed strip $\bbS = \{0 \le \Im(\tau) \le \pi\}$ and holomorphic on the interior.
Write $\tau = t + i s$ with $0 \le s \le \pi$, and let 
\begin{equation}\label{eqn: fjs}
f_{j,s} = \beta_{d_j}(v_{is})F_j|_{I_+}, \qquad g_{j,s} = \beta_{d_j}(v_{is})G_j|_{I_+},
\end{equation}
so that 
\[
\widetilde V_{is} \varphi_1(f_1)\cdots\varphi_k(f_k)\Omega = \varphi_1(f_{1,s}) \cdots \varphi_k(f_{k,s})\Omega
\]
by Lemma~\ref{lem: Vtau action}(i), and similarly for the $g_j$.
By Lemma~\ref{lem: growth estimate on R} applied to the functions $f_{j,s}$ and $g_{j,s}$ (along with the fact that $\widetilde V_{t+is} = V_t \widetilde V_{is}$) we have 
\begin{align*}
\abs{\mathfrak{H}(t+is)} \le C(t)\left[\prod_{j=1}^k \big(1+\norm{f_{j,s}}_{C^N}+\norm{g_{j,s}}_{C^N}\big) \right]\max_{1 \le j \le k} \norm{f_{j,s}-g_{j,s}}_{C^N}
\end{align*}
for some positive integer $N$ and some function $C$ with at most exponential growth that does not depend on the functions $F_j$ and $G_j$.
Since $F_j$ and $G_j$ are smooth on $\bbO$, the maximum
\[
\max_{0 \le s \le \pi} \left[\prod_{j=1}^k \big(1+\norm{f_{j,s}}_{C^N}+\norm{g_{j,s}}_{C^N}\big) \right]\max_{1 \le j \le k} \norm{f_{j,s}-g_{j,s}}_{C^N}
\]
is achieved at some $s_0 \in [0,\pi]$.
Hence there is a constant $K > 0$ (depending on the $F_j$ and $G_j$) such that
\[
\abs{\mathfrak{H}(\tau)} \le K \, C(\Re \tau).
\]
Since $\big|e^{-\tau^2}\big| \le e^{\pi^2} e^{-\Re(\tau)^2}$ on $\bbS$ and $C$ has exponential growth, we see that $e^{-\tau^2} \mathfrak{H}(\tau)$ is bounded, and in fact vanishes at $\infty$.
A straightforward application of the (strong) maximum principle yields that $\big|e^{-\tau^2}\mathfrak{H}(\tau) \big|$ achieves its maximum on $\partial \bbS = \bbR \cup \bbR + i\pi$.

We first investigate the dependence of this maximum on the functions $F_j$ and $G_j$, although we still suppress this dependence from the notation $\mathfrak{H}(\tau)$.
For $t \in \bbR$ we have
\begin{equation}\label{eqn: H bound on R}
\big| e^{-t^2}\mathfrak{H}(t)\big| \le M \left[\prod_{j=1}^k \big(1+\norm{f_{j}}_{C^N}+\norm{g_{j}}_{C^N}\big) \right]\max_{1 \le j \le k} \norm{f_{j}-g_{j}}_{C^N}
\end{equation}
where $M = \max_t e^{-t^2} C(t)$ does not depend on the functions $F_j$,$G_j$.

We now consider the behavior of $e^{-\tau^2} \mathfrak{H}(\tau)$ on $\bbR + i \pi$.
Recall from Lemma~\ref{lem: Vtau action} that
\[
\widetilde V_{i\pi}\varphi_1(f_1) \cdots \varphi_k(f_k)\Omega = (-1)^d \varphi_k(\hat f_k) \cdots \varphi_1(\hat f_1)\Omega
\]
where $\hat f_j = f_j \circ z^{-1}$ and $d = \sum d_j$.
Crucially, $\|\hat f_j\|_{C^N} = \norm{f_j}_{C^N}$.
We have similar computations for the functions $g_j$, so by Lemma~\ref{lem: growth estimate on R}
\begin{align*}
\abs{\mathfrak{H}(t+i\pi)} &= \abs{\lambda \big( \big[ \widetilde V_{t+i\pi}\varphi_1(f_1) \cdots \varphi_k(f_k)\Omega\big] - \big[\widetilde V_{t+i\pi}\varphi_1(g_1) \cdots \varphi_k(g_k)\Omega\big]\big)}\\
&= \abs{\lambda\big(\big[V_{t}\varphi_k(\hat f_k) \cdots \varphi_1(\hat f_1)\Omega\big] - \big[V_{t}\varphi_k(\hat g_k) \cdots \varphi_1(\hat g_1)\Omega\big]\big)}\\
&\le D(t) \left[\prod_{j=1}^k \big(1+\|\hat f_j\|_{C^{N'}}+\norm{\hat g_j}_{C^{N'}}\big) \right]\max_{1 \le j \le k} \|\hat f_j- \hat g_j\|_{C^{N'}}\\
&= D(t) \left[\prod_{j=1}^k \big(1+\norm{f_j}_{C^{N'}}+\norm{g_j}_{C^{N'}}\big) \right]\max_{1 \le j \le k} \norm{f_j- g_j}_{C^{N'}}.
\end{align*}
The function $D(t)$ and integer $N'$ may differ from the previous computation because of the reordering of the fields $\varphi_j$, but by replacing $N$ or $N'$ with $\max \{N,N'\}$ we may assume without loss of generality that $N=N'$.
Arguing as before we have
\[
\abs{e^{-(t+i\pi)^2} \mathfrak{H}(t+i\pi)} \le M' \left[\prod_{j=1}^k \big(1+\norm{f_{j}}_{C^N}+\norm{g_{j}}_{C^N}\big) \right]\max_{1 \le j \le k} \norm{f_{j}-g_{j}}_{C^N}
\]
where $M' = \max e^{\pi^2} e^{-t^2}D(t)$ does not depend on the functions.
Combined with \eqref{eqn: H bound on R}, we have for all $\tau \in \partial\bbS = \bbR \cup \bbR + i \pi$
\[
\abs{e^{-\tau^2} \mathfrak{H}(\tau)} \le M'' \left[\prod_{j=1}^k \big(1+\norm{f_{j}}_{C^N}+\norm{g_{j}}_{C^N}\big) \right]\max_{1 \le j \le k} \norm{f_{j}-g_{j}}_{C^N}
\]
where $M''=\max\{M,M'\}$.
As discussed previously, $\abs{\mathfrak{H}(\tau)}$ is maximized on $\partial \bbS$, so as required we have for all $\tau \in \bbS$
\begin{align*}
\abs{\mathfrak{H}(\tau)} \le M'' \big| e^{\tau^2}\big| \left[\prod_{j=1}^k \big(1+\norm{f_{j}}_{C^N}+\norm{g_{j}}_{C^N}\big) \right]\max_{1 \le j \le k} \norm{f_{j}-g_{j}}_{C^N}\\
\le M'' e^{\Re(\tau)^2} \left[\prod_{j=1}^k \big(1+\norm{f_{j}}_{C^N}+\norm{g_{j}}_{C^N}\big) \right]\max_{1 \le j \le k} \norm{f_{j}-g_{j}}_{C^N}.
\end{align*}
\end{proof}

We can now give the main result of Section~\ref{sec: general nonunitary}, computing $\widetilde V_{\pm i\pi} x\Omega$ for $x \in \cP(I_\pm)$.

\begin{thm}\label{thm: nonunitary Vipi}
Let $(\cF,\cD,U,\Omega)$ be a non-unitary Wightman CFT on $S^1$.
Let $\cP(I_+) \subset \End(\cD)$ be the algebra generated by operators $\varphi(f)$ with $\varphi \in \cF$ and $\supp(f) \subset I_+$.
Let $\widetilde V_{\pm i\pi}$ be the densely defined operators from Definition~\ref{def: Vs} with domains $D(\widetilde V_{\pm i\pi})$.
\begin{enumerate}[(i)]
    \item $\cP(I_+)\Omega \subset D(\widetilde V_{i\pi})$ and $\cP(I_-)\Omega \subset D(\widetilde V_{-i\pi})$
    \item Let $\varphi_1, \ldots, \varphi_k \in \cF$ with conformal dimensions $d_j$, and let $f_1, \ldots, f_k \in C^\infty(S^1)$ with $\operatorname{supp} f_j \subset I_+$. Let $d= \sum d_j$. Then we have
    \[
    \widetilde V_{i\pi}\varphi_1(f_1)\cdots \varphi_k(f_k)\Omega = (-1)^d \varphi_k(f_k \circ z^{-1}) \cdots \varphi_1(f_1 \circ z^{-1})\Omega.
    \]
    \item The analogous formula holds for $\widetilde V_{-i\pi}$ and $\operatorname{supp} f_j \subset I_-$.
\end{enumerate}
\end{thm}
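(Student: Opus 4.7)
The strategy is to extend Lemma~\ref{lem: Vtau action}, which already establishes the formula on the dense subspace $\cP_\cX(I_+)\Omega$, to all of $\cP(I_+)\Omega$ via an approximation argument powered by Lemmas~\ref{lem: X dense} and~\ref{lem: growth estimate on S}. Part (iii) is proved identically with $I_+$ replaced by $I_-$. Since $D(\widetilde V_{i\pi})$ is a linear subspace and $\widetilde V_{i\pi}$ is linear, it suffices to treat a single monomial $\Phi = \varphi_1(f_1)\cdots\varphi_k(f_k)\Omega$ with $\operatorname{supp} f_j \subset I_+$.

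Using Lemma~\ref{lem: X dense}, pick $F_j^{(n)} \in \cX$ with $F_j^{(n)}|_{I_+} \to f_j$ in every $C^N$ seminorm. Set $\Phi_n = \varphi_1(F_1^{(n)}|_{I_+})\cdots\varphi_k(F_k^{(n)}|_{I_+})\Omega$ and let $\Psi_n = \widetilde V_{i\pi}\Phi_n$, computed explicitly by Lemma~\ref{lem: Vtau action}(ii). Since $\|h \circ z^{-1}\|_{C^N} = \|h\|_{C^N}$ and the maps $(h_1,\ldots,h_k) \mapsto \varphi_{j_1}(h_1)\cdots\varphi_{j_k}(h_k)\Omega$ are jointly continuous from $C^\infty(S^1)^k$ to $\cD$ (by definition of the $\cF$-strong topology), one obtains
\[
\Phi_n \to \Phi \quad \text{and} \quad \Psi_n \to \Psi := (-1)^d \varphi_k(f_k \circ z^{-1})\cdots \varphi_1(f_1 \circ z^{-1})\Omega
\]
in the $\cF$-strong topology. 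It remains only to verify that $\Phi \in D(\widetilde V_{i\pi})$ with $\widetilde V_{i\pi}\Phi = \Psi$.

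Fix $\lambda \in \cD_\cF^*$ and set $G_\lambda^{(n)}(\tau) = \lambda(\widetilde V_\tau \Phi_n)$, which is continuous on $\bbS$, holomorphic on $\mathring{\bbS}$, and has boundary values $\lambda(V_t\Phi_n)$ and $\lambda(V_t\Psi_n)$. Applying Lemma~\ref{lem: growth estimate on S} to the differences yields, for some $N$ and $M>0$ that are independent of $n,m,\tau$,
\[
|G_\lambda^{(n)}(\tau) - G_\lambda^{(m)}(\tau)| \le M e^{\Re(\tau)^2}\left[\prod_{j=1}^k\bigl(1+\|F_j^{(n)}|_{I_+}\|_{C^N}+\|F_j^{(m)}|_{I_+}\|_{C^N}\bigr)\right]\max_{j}\|F_j^{(n)}|_{I_+}-F_j^{(m)}|_{I_+}\|_{C^N}.
\]
The bracketed product is uniformly bounded in $n,m$ (the norms converge), while the maximum tends to $0$. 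Hence $G_\lambda^{(n)}$ is uniformly Cauchy on each vertical strip $\{|\Re\tau| \le R\} \cap \bbS$, and so converges locally uniformly on $\bbS$ to a function $G_\lambda$ that is continuous on $\bbS$ and holomorphic on $\mathring{\bbS}$ (Weierstrass/Morera). Continuity of $V_t \in \End(\cD)$ and of $\lambda$ identify the boundary values as $G_\lambda(t) = \lambda(V_t\Phi)$ and $G_\lambda(t+i\pi) = \lambda(V_t\Psi)$. By Definition~\ref{def: Vs}, $\Phi \in D(\widetilde V_{i\pi})$ and $\widetilde V_{i\pi}\Phi = \Psi$, which is exactly (i) and (ii).

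The principal obstacle is the factor $e^{\Re(\tau)^2}$ in Lemma~\ref{lem: growth estimate on S}: it prevents uniform Cauchy convergence on all of $\bbS$, so the argument must rely on locally uniform convergence on compacta together with the fact that both continuity on the closed strip and holomorphy on the interior are preserved under such convergence. A second subtlety is that Lemma~\ref{lem: growth estimate on S} applies only when both comparison vectors lie in $\cP_\cX(I_+)\Omega$; this is precisely why one must approximate inside $\cX$, using Lemma~\ref{lem: X dense}, rather than merely inside $C_0^\infty(I_+)$.
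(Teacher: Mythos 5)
Your proposal is correct and follows essentially the same route as the paper: establish the formula on $\cP_\cX(I_+)\Omega$ via Lemma~\ref{lem: Vtau action}, approximate general $f_j$ by $F_j^{(n)}|_{I_+}$ with $F_j^{(n)}\in\cX$ using Lemma~\ref{lem: X dense}, and use the uniform-in-$n,m$ estimate of Lemma~\ref{lem: growth estimate on S} to show the functions $\lambda(\widetilde V_\tau\Phi_n)$ are locally uniformly Cauchy on $\bbS$, hence converge to a limit with the right boundary values. Your remarks on the role of the $e^{\Re(\tau)^2}$ factor and on why the approximation must take place inside $\cX$ match the considerations implicit in the paper's argument.
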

\begin{proof}
Without loss of generality, we only consider $I_+$.
If for all $j$ we have $f_j = F_j|_{I_+}$ for some $F_j \in \cX$ then by Lemma~\ref{lem: Vtau action} we have
\[
\widetilde V_{i\pi}x\Omega = (-1)^d \varphi_k(F_k|_{I_+} \circ z^{-1}) \cdots \varphi_1(F_1|_{I_+} \circ z^{-1})\Omega
\]
as required.
We now consider general $f_j \in C^\infty(S^1)$ with $\supp f_j \subset I_+$.
By Lemma~\ref{lem: X dense}, for each $j=1,\ldots, k$ there exists a sequence $F_{j,n} \in \cX$ such that $\lim_{n \to \infty} F_{j,n}|_{I_+} = f_j$ in the $C^\infty$ topology.
Let $f_{j,n} = F_{j,n}|_{I_+}$ and fix $\lambda \in \cD_\cF^*$.
Note that $\norm{f_{j,n}}_{C^N}$ is bounded for every positive integer $N$, so by Lemma~\ref{lem: growth estimate on S} we have for all $\tau \in \bbS=\{0 \le \Im(\tau) \le \pi\}$ 
\begin{align*}
\big|\lambda\big(\big[\widetilde V_\tau &\varphi_1(f_{1,n}) \cdots \varphi_k(f_{k,n})\Omega\big] - \big[\widetilde V_\tau \varphi_1(f_{1,m}) \cdots \varphi_k(f_{k,m})\Omega\big]\big)\big| \\
&\le Me^{\Re(\tau)^2} \max_{1 \le j \le k} \norm{f_{j,n}-f_{j,m}}_{C^N}.
\end{align*}
for some $M>0$ that does not depend on $m$ and $n$.
Thus we see that the sequence of holomorphic functions
\[
G_{\lambda,n}(\tau) = \lambda\big(\widetilde V_\tau \varphi_1(f_{1,n}) \cdots \varphi_k(f_{k,n})\Omega\big)
\]
is locally uniformly Cauchy on $\bbS$, and thus converges locally uniformly to a function $G_{\lambda}$.
Since each $G_{\lambda,n}$ is continuous on $\bbS$ and holomorphic on the interior (by Lemma~\ref{lem: vector valued holo map on strip}), so is $G_{\lambda}$.
Taking pointwise limits, we see that for $t \in \bbR$
\[
G_{\lambda}(t) = \lambda\big(V_t \varphi_1(f_1) \cdots \varphi_k(f_k)\Omega\big).
\]
Again invoking the formula for $\widetilde V_{i\pi}$ from Lemma~\ref{lem: Vtau action} and the fact that $\widetilde V_{t+i\pi} = V_t \widetilde V_{i\pi}$, we have
\begin{align*}
G_{\lambda}(t+i\pi) & = \lim_{n \to \infty} (-1)^d \lambda\big(V_t \varphi_k(f_{k,n} \circ z^{-1}) \cdots \varphi_1(f_{1,n} \circ z^{-1})\Omega\big)\\
& = (-1)^d \lambda\big(V_t \varphi_k(f_k \circ z^{-1}) \cdots \varphi_1(f_1 \circ z^{-1})\Omega\big).
\end{align*}
Thus by Definition~\ref{def: Vs} we have $\varphi_1(f_1) \cdots \varphi_k(f_k)\Omega \in D(\widetilde V_{i\pi})$ and 
\[
\widetilde V_{i\pi}\varphi_1(f_1) \cdots \varphi_k(f_k)\Omega = (-1)^d \varphi_k(f_k \circ z^{-1}) \cdots \varphi_1(f_1 \circ z^{-1})\Omega. \qedhere
\] 
\end{proof}

\section{Invariant Hermitian forms, Bisognano-Wichmann, and the KMS condition}\label{sec: sesquilinear}

In this section, we  assume that our Wightman CFT $(\cF,\cD,U,\Omega)$ has an invariant nondegenerate sesquilinear form $\ip{\, \cdot \, , \, \cdot \,}$, along with an antiunitary PCT involution $\theta:\cD \to \cD$ which fixes $\Omega$.
Such a theory is called \textbf{involutive} (see \cite[\S4]{CarpiRaymondTanimotoTener25} for additional background and the connection with vertex algebras).
The invariance of the form $\ip{ \, \cdot \, , \, \cdot \,}$ means, firstly, that
\begin{equation}\label{eqn: invariant sesquilinear}
\ip{\varphi(f)\Phi,\Psi} = \ip{\Phi,\varphi^\dagger(\overline{f})\Psi}, \qquad \Phi,\Psi \in \cD, \, \varphi \in \cF, \, f \in C^\infty(S^1)
\end{equation}
where $\varphi^\dagger$ is the operator-valued distribution 
\[
\varphi^\dagger(f) =(-1)^{d_\varphi} \theta \varphi (\overline{f} \circ z^{-1}) \theta,
\]
and moreover we have $(-1)^{d_\varphi} \varphi^\dagger \in \cF$.
Secondly, an invariant form must also satisfy
\begin{equation}\label{eqn: involutive mobius unitary}
\ip{U(\gamma)\Phi,U(\gamma)\Psi} = \ip{\Phi,\Psi}, \qquad \Phi,\Psi \in \cD, \, \gamma \in \Mob.
\end{equation}
The sesquilinear form is assumed to be jointly $\cF$-strong continuous and is automatically Hermitian symmetric, meaning $\ip{\Phi,\Psi}=\overline{\ip{\Psi,\Phi}}$ for all $\Phi,\Psi \in \cD$.
Given $\theta$, the existence of a sesquilinear form on $\cD$ with these properties is equivalent to $\ip{ \, \cdot \, , \theta \, \cdot \,}$ being a nondegenerate invariant bilinear form on the M\"obius vertex algebra underlying $\cF$.
The theory is called \textbf{unitary} if the Hermitian form is an inner product (normalized so that $\norm{\Omega} = 1$).
In terms of the correspondence between Wightman CFTs and vertex algebras, an involutive Wightman CFT corresponds to a M\"obius vertex algebra with a sesquilinear form and antiunitary operator $\theta$ such that $\ip{\, \cdot \, \theta \, \cdot \,}$ is an invariant bilinear form in the sense of \cite{FHL93,Li94}.
See \cite[Def. 4.9]{CarpiRaymondTanimotoTener25} and the surrounding discussion for more detail.

Recall that $\End(\cD)$ denotes the collection of $\cF$-strong continuous linear operators on $\cD$. 
Given $x \in \End(\cD)$, we say that $x$ has an adjoint if there exists $x^* \in \End(\cD)$ such that $\ip{x\Phi,\Psi}=\ip{\Phi,x^*\Psi}$ for all $\Phi,\Psi \in \cD$\footnote{See also the discussion following Corollary~\ref{cor: KMS} regarding the partially defined adjoint of a partially defined (and not necessarily continuous) linear map.}, and write $\End_*(\cD)$ for the collection of operators $x \in \End(\cD)$ that have an adjoint $x^* \in \End(\cD)$.
In light of \eqref{eqn: invariant sesquilinear} we see that $\varphi(f) \in \End_*(\cD)$, and moreover the algebras $\cP(I)$ are $*$-algebras, i.e., they are closed under taking adjoints.
We have $\theta \cP(I) \theta = \cP(\overline{I})$, where $\overline{I}$ is the complex conjugate interval, and $\theta$ commutes with $V_t$.
If $\gamma \in \Mob$, then \eqref{eqn: involutive mobius unitary} implies that $U(\gamma) \in \End_*(\cD)$, with $U(\gamma)^* = U(\gamma^{-1})$.

Since the Hermitian form is continuous, each map $\ip{ \, \cdot \, , \Phi}$ lies in $\cD_\cF^*$. 
The collection of such functionals separates points in $\cD$ because the form is nondegenerate, although it is generally a proper subspace of $\cD_\cF^*$.
In Definition~\ref{def: Vs}, we defined the domain of $\widetilde V_\tau$ to consist of vectors $\Phi \in \cD$ such that the functions $\lambda(V_t\Phi)$ have appropriate analytic continuations.
In the context of an involutive theory, where we are interested in the adjoints of operators, it is more appropriate to work on an (a priori) larger domain where we only require analytic continuation of the functions $\ip{V_t\Phi,\Phi'}$.

\begin{defn}\label{defn: Vs involutive}
For $\tau \in \bbC$, the operator $V_\tau$ has domain $D(V_\tau) \subset \cD$ consisting of vectors $\Phi \in \cD$ such that for every $\Phi' \in \cD$ the function $\bbR \to \bbC$ given by
\[
t \mapsto \ip{V_t\Phi,\Phi'}
\]
extends to a (necessarily unique) function $G_{\Phi'}$ on the closed strip $\bbS_\tau$ bounded by $\bbR$ and $\bbR + \tau$ such that $G_{\Phi'}$ is continuous on $\bbS_\tau$ and holomorphic in the interior $\mathring{\bbS_\tau}$. Moreover, we require that there exist $\Psi \in \cD$ such that $G_{\Phi'}(\tau+t) = \ip{V_t\Psi,\Phi'}$ for all $t \in \bbR$ and $\Phi' \in \cD$.
The operator $V_\tau$ is defined on $\Phi \in D(V_\tau)$ by $V_\tau\Phi = \Psi$.
\end{defn}

It is immediate that $D(\widetilde V_\tau) \subset D(V_\tau)$ and 
\begin{equation}\label{eqn: Vtau extension}
V_\tau|_{D(\widetilde V_\tau)} = \widetilde V_\tau.
\end{equation}
As a corollary of Theorem~\ref{thm: nonunitary Vipi} we obtain the Bisognano-Wichmann property \cite[Thm. 1(d)]{BisognanoWichmann75} for involutive Wightman CFTs on $S^1$.

\begin{thm}[Bisognano-Wichmann property]\label{thm: involutive BW}
Let $(\cF,\cD,U,\Omega)$ be an involutive Wightman CFT on $S^1$ with PCT involution $\theta$.
Let $\cP(I_+) \subset \End_*(\cD)$ be the unital algebra generated by operators $\varphi(f)$ with $\varphi \in \cF$ and $\supp f  \subset I_+$.
Let $V_{\pm i\pi}$ be the densely defined operators from Definition~\ref{defn: Vs involutive} with domains $D(V_{\pm i\pi})$.
\begin{enumerate}[(i)]
    \item We have $\cP(I_+)\Omega \subset D(V_{i\pi})$  and if $x \in \cP(I_+)$ then
    \[V_{i\pi}x\Omega = \theta x^* \Omega.\]
    \item We have $\cP(I_-)\Omega \subset D(V_{-i\pi})$  and if $x \in \cP(I_-)$ then
    \[V_{-i\pi}x\Omega = \theta x^* \Omega.\]
\end{enumerate}
\end{thm}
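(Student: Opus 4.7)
The plan is to deduce this as a direct corollary of Theorem~\ref{thm: nonunitary Vipi} combined with the explicit formula for the adjoint of a smeared field. I will focus on case (i); case (ii) is identical with $I_+, i\pi$ replaced by $I_-, -i\pi$ throughout.

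First, I observe that by \eqref{eqn: Vtau extension}, $D(\widetilde V_{i\pi}) \subset D(V_{i\pi})$ and $V_{i\pi}$ restricts to $\widetilde V_{i\pi}$ on this smaller domain. Theorem~\ref{thm: nonunitary Vipi} already gives $\cP(I_+)\Omega \subset D(\widetilde V_{i\pi})$, so the inclusion $\cP(I_+)\Omega \subset D(V_{i\pi})$ is immediate. Moreover, for any monomial $x = \varphi_1(f_1)\cdots\varphi_k(f_k)$ with $\varphi_j \in \cF$ of conformal dimension $d_j$ and $\supp f_j \subset I_+$, Theorem~\ref{thm: nonunitary Vipi} gives
\[
V_{i\pi} x\Omega = \widetilde V_{i\pi} x\Omega = (-1)^{d} \varphi_k(f_k \circ z^{-1}) \cdots \varphi_1(f_1 \circ z^{-1})\Omega,
\]
where $d = \sum_j d_j$. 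It therefore remains only to check that the right-hand side equals $\theta x^*\Omega$.

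For the adjoint calculation, the invariance relation \eqref{eqn: invariant sesquilinear} together with the definition of $\varphi^\dagger$ gives
\[
\varphi(f)^* = \varphi^\dagger(\overline{f}) = (-1)^{d_\varphi} \theta\, \varphi(f \circ z^{-1})\, \theta
\]
(the two conjugations cancel). Applying this to $x = \varphi_1(f_1)\cdots\varphi_k(f_k)$ and reversing the order of the factors,
\[
x^* = \prod_{j=k}^{1} (-1)^{d_j}\, \theta\,\varphi_j(f_j \circ z^{-1})\,\theta = (-1)^{d}\, \theta\, \varphi_k(f_k \circ z^{-1})\cdots \varphi_1(f_1 \circ z^{-1})\,\theta,
\]
where the internal $\theta\theta = I$ pairs telescope. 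Using $\theta^2 = I$ and $\theta\Omega = \Omega$ now yields
\[
\theta x^*\Omega = (-1)^{d}\, \varphi_k(f_k \circ z^{-1})\cdots \varphi_1(f_1 \circ z^{-1})\Omega,
\]
which agrees exactly with the expression for $V_{i\pi} x\Omega$ above. So the identity $V_{i\pi}x\Omega = \theta x^*\Omega$ holds on all monomials.

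Finally, since $D(V_{i\pi})$ is a complex subspace of $\cD$ and $V_{i\pi}$ is linear on its domain (both immediate from Definition~\ref{defn: Vs involutive}, using uniqueness of the analytic continuation and of the companion vector $\Psi$), and since $\theta$ and the $*$-operation behave linearly/conjugate-linearly as appropriate so that $x \mapsto \theta x^*\Omega$ is complex linear in $x$, the identity extends from monomials to arbitrary $x \in \cP(I_+)$ by linearity. There is essentially no obstacle: the real work was already carried out in Section~\ref{sec: general nonunitary}, and the only thing that could go wrong here is a sign mismatch in the adjoint formula, which is handled by the factor $(-1)^{d_\varphi}$ in the definition of $\varphi^\dagger$ combining cleanly with the $(-1)^d$ from Theorem~\ref{thm: nonunitary Vipi}.
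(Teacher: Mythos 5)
Your proposal is correct and follows essentially the same route as the paper's proof: reduce to monomials via Theorem~\ref{thm: nonunitary Vipi} together with the containment $D(\widetilde V_{i\pi}) \subset D(V_{i\pi})$, then match the resulting expression with $\theta x^*\Omega$ using the adjoint formula $\varphi(f)^* = (-1)^{d_\varphi}\theta\varphi(f\circ z^{-1})\theta$ coming from \eqref{eqn: invariant sesquilinear}. Your explicit remarks on the telescoping of the $\theta$'s and the linearity extension are only spelled out more fully than in the paper, where they are left implicit.
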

\begin{proof}
We only consider the first item and the second is similar.
By Theorem~\ref{thm: nonunitary Vipi} and \eqref{eqn: Vtau extension} we have $\cP(I_+)\Omega \subset D(V_{i\pi})$.
Consider $x = \varphi_1(f_1) \cdots \varphi_k(f_k)$, where $\varphi_j \in \cF$ with conformal dimension $d_j$ and $\operatorname{supp} f_j \subset I_+$.
 By the same Theorem, we have
\[
V_{i\pi}x\Omega =V_{i \pi} \varphi_1(f_1) \cdots \varphi_k(f_k)\Omega = (-1)^d \varphi_k(f_k \circ z^{-1}) \cdots \varphi_1(f_1 \circ z^{-1})\Omega
\]
where $d = \sum d_j$.
On the other hand, $\theta \Omega = \Omega$ and by \eqref{eqn: invariant sesquilinear} we have
\[
\theta \varphi_j(f_j) \theta = (-1)^{d_j} \varphi_j(f_j \circ z^{-1})^*.
\]
Hence 
\[
\theta x^*\Omega = (-1)^d \varphi_k(f_k \circ z^{-1}) \cdots \varphi_1(f_1 \circ z^{-1})\Omega = V_{i\pi} x \Omega
\]
as required.
\end{proof}

\begin{cor}[KMS condition] \label{cor: KMS}
Let $\phi:\cP(I_+) \to \bbC$ be the linear functional 
\[
\phi(x)=\ip{x\Omega,\Omega}
\]
and let $\alpha_t = \operatorname{Ad} V_t \in \operatorname{Aut}(\cP(I_+))$.
Let $\bbS_{2\pi i} = \{\tau \in \bbC \, : \, 0 \le \Im(\tau) \le 2\pi\}$.
Then for every $x,y \in \cP(I_+)$, there exists a function $F:\bbS_{2\pi i} \to \bbC$ that is continuous on $\bbS_{2\pi i}$, holomorphic on $\mathring{\bbS}_{2 \pi i}$, and satisfies for all $t \in \bbR$
\[
F(t) = \phi(x\alpha_t(y)), \qquad F(t+2\pi i) = \phi(\alpha_t(y)x).
\]
\end{cor}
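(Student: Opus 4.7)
The plan is to split the target strip $\bbS_{2\pi i}$ along the middle line $\Im\tau = \pi$ into two sub-strips of width $\pi$, construct a holomorphic function on each using the Bisognano-Wichmann property (Theorem~\ref{thm: involutive BW}) together with Definition~\ref{defn: Vs involutive}, and glue. On the lower sub-strip I will analytically continue $V_\tau y\Omega$ and pair with the continuous functional $\ip{\,\cdot\,, x^*\Omega}$; on the upper sub-strip I will analytically continue $V_\sigma x\Omega$, pair with $\ip{\,\cdot\,, y^*\Omega}$, and reparametrize via the entire involution $\tau \mapsto 2\pi i - \tau$, which is what injects the strip of width $\pi$ into the top half of $\bbS_{2\pi i}$.

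Concretely, since $y\Omega \in D(V_{i\pi})$ with $V_{i\pi}y\Omega = \theta y^*\Omega$, Definition~\ref{defn: Vs involutive} produces a continuous function $G_1:\bbS_{i\pi}\to\bbC$, holomorphic on the interior, with $G_1(t) = \ip{V_t y\Omega, x^*\Omega}$ and $G_1(t+i\pi) = \ip{V_t\theta y^*\Omega, x^*\Omega}$. I set $F(\tau) := G_1(\tau)$ on $0 \le \Im\tau \le \pi$; invariance $V_{-t}\Omega=\Omega$ and the adjoint property of $x \in \cP(I_+)$ give $F(t) = \ip{xV_ty\Omega, \Omega} = \phi(x\alpha_t(y))$. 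Similarly, since $x\Omega \in D(V_{i\pi})$ with $V_{i\pi}x\Omega = \theta x^*\Omega$, I obtain a continuous $G_2:\bbS_{i\pi}\to\bbC$, holomorphic on the interior, with $G_2(s) = \ip{V_s x\Omega, y^*\Omega}$ and $G_2(s+i\pi) = \ip{V_s\theta x^*\Omega, y^*\Omega}$. I set $F(\tau) := G_2(2\pi i - \tau)$ on $\pi \le \Im\tau \le 2\pi$; the reparametrization is entire, so $F$ is holomorphic in $\tau$ on this sub-strip, and at the top edge $F(t+2\pi i) = G_2(-t) = \ip{V_{-t}x\Omega, y^*\Omega}$, which unpacks via $V_t$-unitarity and the adjoint of $y$ to $\phi(\alpha_t(y)x)$.

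The crux is the matching on the middle line: $G_1(t+i\pi) = \ip{V_t\theta y^*\Omega, x^*\Omega}$ must agree with $G_2(-t+i\pi) = \ip{V_{-t}\theta x^*\Omega, y^*\Omega}$. This I verify by a short chain using, in order, the unitarity $\ip{V_t\alpha,\beta} = \ip{\alpha, V_{-t}\beta}$, Hermitian symmetry, the antiunitarity identity $\overline{\ip{\alpha,\beta}} = \ip{\theta\alpha, \theta\beta}$ with $\theta^2 = 1$, and finally the commutation $\theta V_t = V_t\theta$ — all properties of the invariant Hermitian form and PCT operator already recorded in the paper. Once the matching holds, the two pieces glue into a function $F$ continuous on $\bbS_{2\pi i}$ and holomorphic on each of the two open sub-strips, and holomorphy across the line $\Im\tau = \pi$ follows from a standard Morera argument on triangles straddling the line. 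The main obstacle is the bookkeeping in this matching: the conjugate-linearity of $\ip{\,\cdot\,,\,\cdot\,}$ in the second slot is what dictates the seemingly asymmetric choice to pair against $x^*\Omega$ on the lower half and $y^*\Omega$ on the upper half, and without the antiunitarity of $\theta$ and its commutation with $V_t$ the two boundary values would not coincide.
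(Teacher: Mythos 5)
Your proof is correct and follows essentially the same route as the paper: the paper likewise builds the lower piece from $y\Omega\in D(V_{i\pi})$ paired against $x^*\Omega$ and the upper piece from the Bisognano--Wichmann property applied to $x$ (phrased there via $\theta x^*\theta\in\cP(I_-)$ and $V_{-i\pi}$, which after the translation $\sigma\mapsto\sigma+i\pi$ is exactly your $G_2$), then glues along $\Im\tau=\pi$ using the reflection $\tau\mapsto 2\pi i-\tau$. Your explicit verification of the matching on the middle line via unitarity, Hermitian symmetry, antiunitarity of $\theta$, and $\theta V_t=V_t\theta$ correctly supplies a step the paper only asserts.
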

\begin{proof}
    Let $x,y \in \cP(I_+)$.
    By Theorem~\ref{thm: involutive BW}, there is a function $F_1:\bbS_{i\pi} \to \bbC$ that is continuous and holomorphic on the interior such that
    \[
    F_1(t) = \ip{V_t y\Omega, x^*\Omega}, \qquad F_1(t+i\pi) = \ip{V_tV_{i\pi}y\Omega, x^*\Omega} = \ip{V_t\theta y^*\Omega, x^*\Omega}.
    \]
    Similarly, since $\theta x^* \theta \in \cP(I_-)$, there exists a function $F_2:\bbS_{-i\pi} \to \bbC$ such that
    \[
F_2(t) = \ip{V_t \theta x^* \Omega, y^*\Omega}, \qquad 
    F_2(t-i\pi) = \ip{V_t x\Omega, y^*\Omega}.
\]
Let $G:\{\pi \le \Im(\tau) \le 2\pi\} \to \bbC$ be the function $G(\tau) = F_2(-\tau+i\pi)$. 
We can compute
\[
G(t+i\pi) = F_2(-t) = \ip{V_{-t}\theta x^*\Omega, y^*\Omega} = \ip{\theta V_t y^*\Omega, x^*\Omega} = F_1(t+i\pi)
\]
for all $t \in \bbR$, where we have used the fact that $V_t^* = V_{-t}$ in the third equality and the fact that $\theta$ and $V_t$ commute in the final equality.
Hence $F_1$ and $G$ assemble to a function $F$ continuous on $\bbS_{2\pi i}$ and holomorphic in the interior such that
\[
F(t) = F_1(t) = \phi(x\alpha_t(y)), \qquad F(t + 2\pi i) = F_2(-t-\pi i) = \phi(\alpha_t(y)x),
\]
as required.
\end{proof}

We briefly digress on the subject of partially defined operators in the present context.
For the most part, the definitions and results are identical to the situation of unbounded operators on a Hilbert space.
We give a brief summary of the aspects of the theory which persist into the present context.

A partially defined operator $X$ on $\cD$ consists of a subspace $D(X) \subset \cD$ and a linear map $X:D(X) \to \cD$.
We typically require that $D(X)$ be dense.
We say that $X_1$ is an extension of $X_2$ if $D(X_2) \subseteq D(X_1)$ and $X_1|_{D(X_2)} = X_2$, in which case we write $X_2 \subset X_1$.
If $X$ is densely defined, then the adjoint $X^*$ has domain consisting of vectors $\Psi \in \cD$ such that there exists $\Psi' \in \cD$ for which 
\[
\ip{X\Phi, \Psi} = \ip{\Phi,\Psi'}, \qquad \Phi \in D(X).
\]
In this case $X^*\Psi = \Psi'$.
If $X_2 \subset X_1$ then $X_1^* \subset X_2^*$.
A densely defined operator is called symmetric if $X \subset X^*$, and self-adjoint if $X=X^*$ as partially defined operators (i.e. including having the same domain).
A densely defined operator is called closed if its graph is a closed subset of $\cD \times \cD$, and called closable if the closure of its graph is the graph of a (closed) linear map.
We write $\overline{X}$ for the closure of a closable operator $X$.
A core for a closed operator $X$ is a subspace $W \subset D(X)$ such that $\overline{X|_W} = X$.
The adjoint of a densely defined operator is necessarily closed, and if $X$ has a densely defined adjoint, then it is closable.

It is at this point that we encounter a key difference from the theory of unbounded operators on Hilbert spaces.
If $X$ is closable, then $\overline{X} \subseteq X^{**}$, but equality may fail.
In order to understand this phenomenon, one should recall that the graph of $X^*$ is essentially the orthogonal complement of the graph of $X$.
Thus this is a manifestation of another fundamental departure, which is that if $W \subset \cD$ then $\overline{W} \subset W^{\perp\perp}$, but again equality may fail.
This is a consequence of the fact that $W^\perp$ may be `surprisingly small,' even when the Hermitian form is positive definite\footnote{Consider the case where $\cD$ is the subspace of $L^2[0,1]$ spanned by $L^2[0,\tfrac12]$ and polynomials. The orthogonal complement of $W=L^2[0,\tfrac12]$ in $\cD$ consists of only the zero function.}.

The challenges described above can make it difficult to study duality phenomena in the present context, which in turn increases the difficulty in precisely identifying the domains of various operators.
For example, a question that we do not resolve in this article is the following.

\begin{ques}\label{ques: core}
    Is $\cP(I_+)\Omega$ a core for $V_{i\pi}$?
\end{ques}

In light of the above discussion, it is particularly satisfying that we will now be able to establish (`by hand') that the operators $V_{\pm i\pi}$ are self-adjoint on their given domain.
We first give the following straightforward lemma, which shows that they are symmetric.
\begin{lem}\label{lem: formal adjoint of Vtau}
If $\Phi \in D(V_\tau)$ and $\Phi' \in D(V_{-\overline{\tau}})$, then
\[
\ip{V_\tau\Phi,\Phi'} = \ip{\Phi,V_{-\overline{\tau}}\Phi'}.
\]
\end{lem}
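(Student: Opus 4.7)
The plan is to use the invariance of the Hermitian form under $V_t$ for real $t$ to relate the two analytic continuations. From $\ip{V_t\Phi,V_t\Psi}=\ip{\Phi,\Psi}$ we obtain $\ip{V_t\Phi,\Phi'}=\ip{\Phi,V_{-t}\Phi'}$ for all real $t$, i.e.\ $V_t^* = V_{-t}$. This identity is the bridge between the two defining analytic continuations.

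By Definition~\ref{defn: Vs involutive}, associated to $\Phi \in D(V_\tau)$ there is a continuous function $G_{\Phi'}:\bbS_\tau \to \bbC$, holomorphic in $\mathring{\bbS_\tau}$, with $G_{\Phi'}(t)=\ip{V_t\Phi,\Phi'}$ and $G_{\Phi'}(t+\tau) = \ip{V_t V_\tau\Phi,\Phi'}$. Dually, associated to $\Phi'\in D(V_{-\overline{\tau}})$ there is a continuous function $H_\Phi:\bbS_{-\overline{\tau}} \to \bbC$, holomorphic in the interior, with $H_\Phi(t)=\ip{V_t\Phi',\Phi}$ and $H_\Phi(t-\overline{\tau}) = \ip{V_tV_{-\overline{\tau}}\Phi',\Phi}$. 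Since $\bbS_\tau$ and $\bbS_{-\overline{\tau}}$ have the same imaginary range (both are strips between $\bbR$ and $\bbR + i\,\Im(\tau)$, say with $\Im(\tau)\ge 0$ WLOG; the case $\tau \in \bbR$ is trivial), the assignment
\[
\tilde H(\tau') := \overline{H_\Phi(-\overline{\tau'})}
\]
defines a function on $\bbS_\tau$. As $\tau' \mapsto -\overline{\tau'}$ is antiholomorphic and $H_\Phi$ is holomorphic, $\tilde H$ is continuous on $\bbS_\tau$ and holomorphic in the interior.

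Next I would check that $G_{\Phi'}$ and $\tilde H$ agree on the real axis. For $t \in \bbR$,
\[
\tilde H(t) = \overline{H_\Phi(-t)} = \overline{\ip{V_{-t}\Phi',\Phi}} = \ip{\Phi,V_{-t}\Phi'} = \ip{V_t\Phi,\Phi'} = G_{\Phi'}(t),
\]
using Hermitian symmetry of the form and $V_t^* = V_{-t}$. Thus $F := G_{\Phi'} - \tilde H$ is continuous on $\bbS_\tau$, holomorphic in the interior, and vanishes identically on $\bbR$.

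To conclude that $F \equiv 0$ on the entire closed strip $\bbS_\tau$, I would invoke Schwarz reflection: since $F$ takes real values (namely $0$) on $\bbR$, the formula $\tilde F(z) = \overline{F(\overline{z})}$ extends $F$ to a holomorphic function on the open strip $\{-\Im(\tau) < \Im(z) < \Im(\tau)\}$ containing $\bbR$, and this extension vanishes on $\bbR$. The identity theorem then forces $\tilde F \equiv 0$, so $F \equiv 0$ on $\mathring{\bbS_\tau}$, and by continuity on all of $\bbS_\tau$. Evaluating at $\tau' = \tau$ yields
\[
\ip{V_\tau\Phi,\Phi'} = G_{\Phi'}(\tau) = \tilde H(\tau) = \overline{H_\Phi(-\overline{\tau})} = \overline{\ip{V_{-\overline{\tau}}\Phi',\Phi}} = \ip{\Phi,V_{-\overline{\tau}}\Phi'},
\]
as required. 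The only non-bookkeeping step is the Schwarz reflection/identity argument, but it is standard since we have vanishing on one full boundary line of the strip; the rest of the proof is just assembling the correct conjugations and reflections to match the two definitions.
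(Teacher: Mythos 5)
Your proof is correct and follows essentially the same route as the paper: both reflect the analytic continuation attached to $\Phi'$ via $\tau' \mapsto \overline{H_\Phi(-\overline{\tau'})}$, match it with $G_{\Phi'}$ on $\bbR$ using $V_t^*=V_{-t}$, and conclude the two functions coincide on the strip before evaluating at $\tau$. The paper simply asserts the coincidence of the two functions, whereas you spell out the Schwarz reflection and identity-theorem step; this is the right justification and fills in the detail the paper leaves implicit.
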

\begin{proof}
Fix $\tau_0 \in \bbC$.
For notational clarity, we assume without loss of generality that $\Im(\tau_0) \ge 0$.
Fix $\Phi \in D(V_{\tau_0})$ and $\Phi' \in D(V_{-\overline{\tau_0}})$.
As $\Phi \in D(V_\tau)$, there is a map $G$ continuous on the strip $\{0 \le \Im(\tau) \le \Im(\tau_0)\}$ and holomorphic on the interior, satisfying
\[
G(t) = \ip{V_t\Phi,\Phi'}, \qquad G(t+\tau_0) = \ip{V_t V_{\tau_0}\Phi,\Phi'} = \ip{V_{\tau_0}\Phi,V_{-t}\Phi'},
\]
where the last equality follows from \eqref{eqn: involutive mobius unitary}.
Similarly, since $\Phi' \in D(V_{-\overline{\tau_0}})$ we have a map $H$ on the same strip (continuous on the closure and holomorphic on the interior) such that
\[
H(t) = \ip{V_t\Phi',\Phi}, \qquad H(t-\overline{\tau_0}) = \ip{V_{-\overline{\tau_0}}\Phi',V_{-t}\Phi}.
\]
By inspection, $\tau \mapsto \overline{H(-\overline{\tau})}$ is again continuous on the strip, holomorphic on the interior, and agrees with $G(\tau)$ on $\bbR$. 
Hence the two functions agree identically, and evaluating at $\tau_0$ yields the desired identity.
\end{proof}

We now give a careful analysis of the domain $D(V_{\pm i\pi}^*)$ in order to show that these operators are self-adjoint.

\begin{thm}\label{thm: Vipi selfadjoint}
Let $(\cF,\cD,U,\Omega)$ be an involutive Wightman CFT on $S^1$, and let
 $V_{\pm i\pi}$ be the densely defined operators from Definition~\ref{defn: Vs involutive} with domains $D(V_{\pm i\pi})$.
Then we have $V_{\pm i\pi}^* = V_{\pm i\pi}$.
Moreover, the adjoints of the restrictions $V_{\pm i\pi}|_{\cP(I_\pm)\Omega}$ are again given by $\big( V_{\pm i\pi}|_{\cP(I_\pm)\Omega}\big)^* = V_{\pm i\pi}$, where $\cP(I_{\pm})$ are the unital algebras generated by operators $\varphi(f)$ with $\varphi \in \cF$ and $\supp f  \subset I_\pm$.

\end{thm}
\begin{proof}
We only consider $V_{i\pi}$.
Observe that we have a chain of extensions of densely defined operators
\[
V_{i\pi}|_{\cP(I_+)\Omega} \subseteq V_{i\pi} \subseteq V_{i\pi}^* \subseteq (V_{i\pi}|_{\cP(I_+)\Omega})^*,
\]
where the first inclusion is immediate, the second inclusion is Lemma~\ref{lem: formal adjoint of Vtau}, and the third follows by taking adjoints in the first inclusion.
To prove the theorem, it therefore suffices to show that $(V_{i\pi}|_{\cP(I_+)\Omega})^* \subset V_{i\pi}$.
In fact, since we know that these two operators agree on $D(V_{i\pi})$ we need only show the corresponding inclusion of domains:
\[
D\big((V_{i\pi}|_{\cP(I_+)\Omega})^*\big) \subseteq D(V_{i\pi}).
\]

Fix $\Phi' \in D\big((V_{i\pi}|_{\cP(I_+)\Omega})^*\big)$, and we must show that $\Phi' \in D(V_{i\pi})$. 
Let $\Phi \in \cP(I_+)\Omega$.
Since $\cP(I_+)\Omega \subset D(V_{i\pi})$, there exists a function $G:\bbS \to \bbC$ which is continuous, holomorphic in the interior $\mathring \bbS$, and satisfies for all $t \in \bbR$ 
\[
G(t) = \ip{V_t\Phi,\Phi'}, \qquad G(t+i\pi) = \ip{V_t V_{i\pi}\Phi, \Phi'},
\]
where as usual $\bbS=\{ \tau \in \bbC \, : \, 0 \le \Im(\tau) \le \pi\}$.
Let $\Phi'' = (V_{i\pi}|_{\cP(I_+)\Omega})^*\Phi'$.
Note that $V_t$ leaves $\cP(I_+)\Omega$ invariant and commutes with $V_{i\pi}$, so for $t \in \bbR$ we have
\[
G(t+i\pi)=\ip{V_{i\pi}V_{t}\Phi,\Phi'} = \ip{V_t\Phi, \Phi''} = \ip{\Phi, V_{-t}\Phi''}.
\]
Define $H_\Phi:\bbS \to \bbC$ by $H_\Phi(\tau) = \overline{G(-\overline{\tau})}$, and note that $H_\Phi$ is again continuous on $\bbS$ and holomorphic in the interior.
We have for $t \in \bbR$
\begin{equation}\label{eqn: H boundary}
H_\Phi(t) = \ip{V_t \Phi', \Phi}, \qquad H_\Phi(t+i\pi) = \ip{V_t \Phi'',\Phi}.
\end{equation}
In order to show that $\Phi' \in D(V_{i\pi})$, we must establish the existence of such a function $H_\Phi$ for all $\Phi \in \cD$, and we have done this so far when $\Phi \in \cP(I_+)\Omega$.

By Lemma~\ref{lem: X dense}, the closure of $\cP_\cX(I_+)\Omega$ contains $\cP(I_+)\Omega$, and thus $\cP_\cX(I_+)\Omega$ is dense in $\cD$ by the Reeh-Schlieder theorem \cite[App. A]{CarpiRaymondTanimotoTener25}.
Given arbitrary $\Phi \in \cD$, choose a net $\Phi_n \in \cP_\cX(I_+)\Omega$ such that $\Phi_n \to \Phi$.
We will show that the net $H_{\Phi_n}$ converges locally uniformly on $\bbS$ to the required function $H_\Phi$.
Our first step is to show that this net is locally uniformly Cauchy.

Note that $\overline{H_{\Phi_n}(-\overline{\tau})}$ is a function of the form $G$ considered above, and so by Lemma~\ref{lem: growth estimate on S} (and the fact that $\Phi_n \in \cP_\cX(I_+)\Omega$) we can infer that $e^{-2\tau^2}H_{\Phi_n}$ vanishes at infinity.
It follows that $e^{-2\tau^2}H_{\Phi_n}(\tau) - e^{-2\tau^2}H_{\Phi_m}(\tau)$ vanishes at infinity as well, and 
so the magnitude of
\begin{equation}\label{eqn: H Cauchy}
e^{-2\tau^2}H_{\Phi_n}(\tau) - e^{-2\tau^2}H_{\Phi_m}(\tau)
\end{equation}
is maximized on $\partial \bbS = \bbR \cup i\pi + \bbR$.
From \eqref{eqn: H boundary}, we see that the restrictions of \eqref{eqn: H Cauchy} to the components of $\partial \bbS$ are given by
\[
t \mapsto e^{-2t^2} \ip{V_t\Phi',\Phi_n-\Phi_m}, \qquad \mbox{and} \qquad t + i\pi \mapsto e^{2\pi^2 - 4\pi i t} e^{-2t^2}  \ip{V_t\Phi'',\Phi_n-\Phi_m},
\]
respectively.
Thus to show that \eqref{eqn: H Cauchy} converges uniformly to zero on $\bbS$, it suffices to show that for any $\Psi \in \cD$, functions of the form
\[
t \mapsto e^{-2t^2} \ip{V_t\Psi,\Phi_n-\Phi_m}
\]
converge uniformly to zero on $\bbR$ as $n,m \to \infty$.

It suffices to consider $\Psi=\varphi_1(f_1) \cdots \varphi_k(f_k)\Omega$ for $f_j \in C^\infty(S^1)$.
We have
\[
e^{-2t^2} V_t\Psi = \varphi_1(e^{-2t^2/k}\beta_{d_1}(v_t)f_1) \cdots \varphi_k(e^{-2t^2/k}\beta_{d_k}(v_t)f_k)\Omega.
\]
From the proof of Lemma~\ref{lem: growth estimate on R} (specifically \eqref{eqn: exp growth bound CN norm}), we have $e^{-2t^2/k}\beta_d(v_t)f \to 0$ in $C^\infty(S^1)$ as $t \to \pm \infty$.
Hence $e^{-2t^2}V_t\Psi \to 0$ in $\cD$, or in other words $e^{-2t^2}V_t \Psi$ extends to a continuous function $[-\infty,\infty] \to \cD$ which vanishes at infinity.
Since $[-\infty,\infty]$ is compact, the image of this map is a bounded subset of $\cD$, which means that for any open neighborhood $U$ of $0$ in $\cD$ there exists a scalar $M > 0$ such that $e^{-2t^2}V_t \Psi \in M U$ for all $t$.

Since the Hermitian form is jointly continuous, given $\epsilon > 0 $ we may choose a neighborhood $U_\epsilon$ of $0$ in $\cD$ such that $\abs{\ip{\Phi_1,\Phi_2}} < \epsilon$ when $\Phi_j \in U_\epsilon$.
Choose $M > 0$ as above so that $e^{-2t^2} V_t\Psi \in MU_\epsilon$ for all $t$, and note that $\abs{\ip{\Phi_1,\Phi_2}} < \epsilon$ when $\Phi_1 \in MU_\epsilon$ and $\Phi_2 \in M^{-1} U_\epsilon$.
Since $\Phi_n \to \Phi$, we may choose $n_0$ such that when $n,m \ge n_0$ we have $\Phi_n-\Phi_m \in M^{-1} U_\epsilon$.
We then have for all $t \in \bbR$
\[
\abs{e^{-2t^2} \ip{V_t\Psi,\Phi_n-\Phi_m} } < \epsilon,
\]
and so the net of functions $e^{-2t^2} \ip{V_t\Psi,\Phi_n-\Phi_m}$ converges uniformly to zero, as claimed.
As described above, it follows that $e^{-2\tau^2}(H_{\Phi_n} - H_{\Phi_m})$ converges uniformly to zero.
Hence $H_{\Phi_n}$ is a locally uniformly Cauchy net of functions, which therefore converges locally uniformly to a function $H_{\Phi}$ that is continuous on $\bbS$ and holomorphic in the interior.
Taking pointwise limits on the boundary we see that $H_{\Phi}$ satisfies \eqref{eqn: H boundary}.
Since $\Phi$ was arbitrary, we conclude that $\Phi' \in D(V_{i \pi})$, as claimed.
\end{proof}

As a consequence of Theorem~\ref{thm: Vipi selfadjoint}, we have $\big( V_{\pm i\pi}|_{\cP(I_\pm)\Omega}\big)^{**} = V_{\pm i\pi}$.
In the context of Hilbert spaces, the double adjoint coincides with the closure, and so the `moreover' of the Theorem may be thought of as a positive answer to a weakening of Question~\ref{ques: core}. 

As an aside, we remark that the above proof of Theorem~\ref{thm: Vipi selfadjoint} does not go through verbatim for the operator $\widetilde V_{i\pi}$, with the key obstruction being that linear functionals of the form $\ip{\, \cdot \, , \Phi}$ may not be dense in $\cD_\cF^*$ in the strong dual topology (i.e. the topology of uniform convergence on bounded sets), although they are dense in the weak-$*$ topology (i.e. the topology of pointwise convergence).

\section{Haag duality for Wightman CFTs with invariant Hermitian forms}\label{sec: nonunitary duality}

In this section we continue the notation of Section~\ref{sec: sesquilinear}, fixing an involutive Wightman CFT. 
We will now study the notion of `Haag duality' for nets of algebras, which is a key application of Tomita-Takesaki theory in the unitary setting. Our first step will be to make contact with the theory of standard subspaces, which have proven to be very powerful in analyzing duality phenomena in algebraic quantum field theory \cite{LongoLectureNotesI,LongoLectureNotesII} (see also Section~\ref{sec: background unitary} for the unitary case).
The main result of this section is the Haag duality of the double commutant net $\cP(I)''$.
The results of this section will also be applied in Section~\ref{sec: unitary} to gain insight into unitary models.

We first require a brief digression on topological notions.

\begin{defn}\label{defn: standard topologies}
    Let $\cD$ be a locally convex space with a nondegenerate Hermitian form.
    Then we refer to the given topology on $\cD$ as the \textbf{strong} topology, the weak topology on $\cD$ induced by the continuous dual $\cD^*$ as the \textbf{weak} topology, and the weak topology induced by the functionals $\{\ip{\, \cdot \, , \Phi} \, : \, \Phi \in \cD\}$ as the \textbf{$\cD$-weak} topology.
    We refer to these topologies as the \textbf{standard topologies} on $\cD$.
\end{defn}

In the context of an involutive Wightman theory, the strong topology is also known as the $\cF$-strong topology and the weak topology is also known as the $\cF$-weak topology.
The strong topology is stronger than the weak topology, which is stronger than the $\cD$-weak topology.
By the Hahn-Banach theorem, a convex subset of $\cD$ is strong closed if and only if it is weakly closed, but a strongly closed subspace of $\cD$ may not be $\cD$-weakly closed.
However, the results we will obtain below hold for any choice of standard topology.

\begin{defn}
    Let $\cD$ be a locally convex space with nondegenerate Hermitian form, equipped with one of its standard topologies.
    A closed real subspace $K \subset \cD$ is called \textbf{cyclic} (with respect to the topology) if $K + iK$ is dense in $\cD$, and called \textbf{separating} if $K \cap iK =\{0\}$.
    A closed real subspace is called \textbf{standard} if it is both cyclic and separating.
\end{defn}

To a standard subspace $K$, we can associate a densely defined conjugate-linear involution $S_K$ with domain $D(S_K) = K+iK$ given by $S_K(\Phi_1+i\Phi_2) = \Phi_1 - i \Phi_2$ for $\Phi_j \in K$.
The term involution here means that $S_K^2 = \mathrm{Id}_{D(S_K)}$.
The subspace $K$ is recovered as the fixed points of $S_K$.
The following is standard, although we include a proof to emphasize that unitarity is not required.

\begin{lem}\label{lem: std subspace vs involution}
    Let $\cD$ be a locally convex space with nondegenerate Hermitian form, equipped with one of its standard topologies.
Then the correspondence $K \leftrightarrow S_K$ induces a bijection between i) standard subspaces $K$ of $\cD$, and ii) densely-defined closed conjugate-linear involutions $S$.
\end{lem}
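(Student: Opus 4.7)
The plan is to explicitly construct the inverse correspondence and verify it is well-defined. Given a standard subspace $K$, the operator $S_K$ has dense domain $K+iK$ by cyclicity, and is single-valued: if $\Phi_1+i\Phi_2=\Phi_1'+i\Phi_2'$ with $\Phi_j,\Phi_j'\in K$, then $\Phi_1-\Phi_1' = i(\Phi_2'-\Phi_2) \in K \cap iK = \{0\}$ by the separating property, so $\Phi_1=\Phi_1'$ and $\Phi_2=\Phi_2'$. Conjugate-linearity and $S_K^2=\id$ on $D(S_K)$ are then immediate. To see $S_K$ is closed, suppose $\Phi_1^n+i\Phi_2^n \to \Phi$ and $\Phi_1^n-i\Phi_2^n \to \Psi$ with $\Phi_j^n \in K$. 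Taking linear combinations (valid in any topological vector space) gives $\Phi_1^n \to \tfrac{1}{2}(\Phi+\Psi)$ and $\Phi_2^n \to \tfrac{1}{2i}(\Phi-\Psi)$. Since $K$ is closed, both limits lie in $K$, so $\Phi \in D(S_K)$ with $S_K\Phi = \Psi$.

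For the reverse direction, given a closed conjugate-linear involution $S$, set $K_S = \{\Phi \in D(S) : S\Phi=\Phi\}$. Because $S^2=\id_{D(S)}$, for each $\Phi \in D(S)$ we also have $S\Phi \in D(S)$, and the identity
\[
\Phi = \tfrac{1}{2}(\Phi+S\Phi) + \tfrac{1}{2}(\Phi-S\Phi)
\]
exhibits $\Phi$ as an element of $K_S + iK_S$ (the second summand lies in $iK_S$ by conjugate-linearity of $S$). The decomposition is direct since $K_S \cap iK_S = \{0\}$: an element lying in both is simultaneously fixed and negated by $S$. In particular $K_S + iK_S = D(S)$ is dense, giving cyclicity. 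Closedness of the real subspace $K_S$ is inherited from closedness of $S$: if $\Phi_n \in K_S$ converges to $\Phi$, then $S\Phi_n = \Phi_n$ converges to $\Phi$ as well, and closedness of $S$ yields $\Phi \in D(S)$ with $S\Phi = \Phi$. Thus $K_S$ is standard.

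Verifying that the two constructions are mutually inverse is routine. On one side, $K_{S_K}=K$: if $S_K\Phi = \Phi$ with $\Phi = \Phi_1+i\Phi_2$ for $\Phi_j \in K$, then $\Phi_1-i\Phi_2 = \Phi_1+i\Phi_2$ forces $\Phi_2=0$ by well-definedness, so $\Phi = \Phi_1 \in K$, and conversely $K \subseteq K_{S_K}$ is immediate. On the other side, $S_{K_S} = S$ as partially defined operators because $D(S_{K_S}) = K_S + iK_S = D(S)$ and the two operators agree there. The only place I expect to have to think is the algebraic splitting $D(S) = K_S \oplus iK_S$, which in the Hilbert-space treatment is usually obtained from an orthogonal projection; here the identity $S^2=\id_{D(S)}$ supplies the splitting purely algebraically, so no projection or functional calculus is needed. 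All remaining arguments rely only on continuity of the vector-space operations and the closedness hypotheses on $K$ or $S$, and therefore go through uniformly for any of the three standard topologies of Definition~\ref{defn: standard topologies}.
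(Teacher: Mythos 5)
Your proof is correct and follows essentially the same route as the paper: the closedness of $S_K$ via the linear combinations $\tfrac12(\Phi_n \pm S_K\Phi_n)$, and the algebraic splitting $\Phi = \tfrac12(\Phi + S\Phi) + \tfrac12(\Phi - S\Phi)$ into the $\pm 1$ eigenspaces of $S$ in the reverse direction. You are slightly more thorough than the paper in spelling out the well-definedness of $S_K$, the closedness of $K_S$, and the mutual-inverse check, but these are exactly the routine verifications the paper elides.
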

\begin{proof}
We need to check that $S_K$ is closed, and that every closed conjugate linear involution $S$ is of the form $S_K$.
To the first end, suppose that $\Phi_n \in D(S_K)=K+iK$, that $\Phi_n \to \Phi$ in $\cD$, and that $S_K\Phi_n \to \Psi$.
Let $\Phi_{1,n} = \frac{\Phi_n + S_K\Phi_n}{2}$ and $\Phi_{2,n} = \frac{\Phi_n - S_K \Phi_n}{2i}$, and note that $\Phi_n = \Phi_{1,n} + i \Phi_{2,n}$.
We then have $S_K \Phi_{j,n} = \Phi_{j,n}$, so that $\Phi_{j,n} \in K$.
Moreover $\Phi_{1,n} \to \frac{\Phi + \Psi}{2}$ and $\Phi_{2,n} \to \frac{\Phi - \Psi}{2i}$, and these vectors lie in $K$ since $K$ is closed.
Hence 
\[
\Phi = \lim \Phi_{1,n} + i \lim \Phi_{2,n} \in K + i K = D(S_K),
\]
and 
\[
S_K\Phi = \Psi = \lim \Phi_{1,n} - i \lim \Phi_{2,n} = \lim S_K \Phi_n.
\]
We conclude that $S_K$ is closed.

To complete the proof, we must show that given a closed densely-defined conjugate-linear involution $S$, the fixed points $K=\operatorname{Ker}(S-1)$ are a standard subspace such that $D(S)=K+iK$.
The argument is routine.
If $\Phi \in K \cap iK$, then applying $S$ yields $\Phi = - \Phi$ and so $\Phi = 0$ and $K$ is separating.
Moreover if $\Phi \in D(S)$ then $\frac{\Phi+S\Phi}{2}, \frac{\Phi-S\Phi}{2i} \in K$.
Hence $\Phi \in K + iK$, so $D(S) \subset K+iK$.
Since $D(S)$ is assumed dense, $K$ is cyclic.
Moreover $K$, and thus $K+iK$ are contained in $D(S)$ by construction, so we evidently have $D(S)=K+iK$ as required.
\end{proof}

\begin{defn}
    If $K \subset \cD$ is a real subspace, the \textbf{complementary subspace} $K'$ is defined by
    \[
    K' = \{ \Phi \in \cD \, : \ip{\Phi,\Psi} \in \bbR \, \text{ for all } \Psi \in K\}.
    \]
    This is the same as the perpendicular subspace to $iK$ with respect to the real form $\Re \ip{\, \cdot \, , \, \cdot \,}$.
\end{defn}

As described before Question~\ref{ques: core}, taking complements does not behave the same as in the Hilbert space case, as even when the form on $\cD$ is positive definite, the complement could be `unexpectedly small.'
For this reason, even when $K$ is standard the space $K'$ may fail to be cyclic, and even if $K'$ is standard we may have $K \subsetneq K''$.
We can, however, establish some basic results below in Lemma~\ref{lem: SK adjoint}.
First, the reader is reminded that the relation between a conjugate-linear operator $X$ and its adjoint $X^*$ is
\[
\ip{X\Phi,\Psi} = \ip{X^*\Psi,\Phi},
\]
and the definition of the domain $D(X^*)$ is adjusted accordingly.

\begin{lem}\label{lem: SK adjoint}
Let $\cD$ be a locally convex space with nondegenerate Hermitian form, equipped with one of its standard topologies.
Let $K \subset \cD$ be a closed real subspace.
\begin{enumerate}[(i)]
\item If $K$ is cyclic then $K'$ is separating, and if $K'$ is cyclic then $K$ is separating.
\item If $K$ is standard, then $K'$ is standard if and only if $S_K^*$ is densely defined, in which case $S_{K'} = S_K^*$.
\end{enumerate}
\end{lem}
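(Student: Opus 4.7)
For part (i), the plan is to exploit the continuity of the Hermitian form together with nondegeneracy. Suppose $K$ is cyclic, and let $\Phi \in K' \cap iK'$. The definition of $K'$ combined with the observation that $\Phi \in iK'$ iff $-i\Phi \in K'$ gives $\ip{\Phi,\Psi} \in \bbR \cap i\bbR = \{0\}$ for all $\Psi \in K$. Using (conjugate-)linearity in the second slot yields $\ip{\Phi,\Psi_1 + i\Psi_2} = 0$ for $\Psi_j \in K$, i.e.\ $\ip{\Phi,\cdot\,}$ vanishes on the dense subspace $K+iK$. Since the functional $\ip{\Phi,\cdot}$ is continuous with respect to any of the standard topologies and the form is nondegenerate, this forces $\Phi=0$. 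The other implication in (i) is symmetric: if $K'$ is cyclic and $\Phi \in K\cap iK$, a parallel argument shows $\ip{\Psi,\Phi}=0$ for all $\Psi \in K'+iK'$, hence $\Phi=0$.

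For part (ii), I will split into the two implications. For the direction ``$K'$ standard $\Rightarrow S_K^*$ densely defined,'' I will verify directly that $S_{K'} \subseteq S_K^*$. Concretely, for $\Xi=\Phi'+i\Psi'$ with $\Phi',\Psi' \in K'$ and $\eta = \Phi + i\Psi$ with $\Phi,\Psi\in K$, I will expand both $\ip{S_K\eta,\Xi}$ and $\ip{S_{K'}\Xi,\eta}$ using (anti)linearity and then use the fundamental identity $\ip{\alpha,\beta}=\ip{\beta,\alpha}\in\bbR$ whenever $\alpha\in K$ and $\beta\in K'$ (which follows from Hermitian symmetry and the definition of $K'$). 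The four cross-terms match, so $S_K^*$ agrees with $S_{K'}$ on $K'+iK'$ and is therefore densely defined.

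To upgrade ``$S_K^* \supseteq S_{K'}$'' to equality, and for the reverse direction ``$S_K^*$ densely defined $\Rightarrow K'$ standard,'' I will establish the abstract fact that $(S_K^*)^2 = \mathrm{Id}_{D(S_K^*)}$ whenever $S_K^2 = \mathrm{Id}_{D(S_K)}$. For $\Xi \in D(S_K^*)$ and $\eta \in D(S_K)$, substituting $\eta = S_K\xi$ (permitted because $S_K(D(S_K))=D(S_K)$) into $\ip{S_K\eta,\Xi}=\ip{S_K^*\Xi,\eta}$ and applying Hermitian symmetry yields $\ip{S_K\xi, S_K^*\Xi} = \ip{\Xi,\xi}$, which says precisely that $S_K^*\Xi \in D(S_K^*)$ with $S_K^*(S_K^*\Xi)=\Xi$. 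Combined with the facts that adjoints are automatically closed and the adjoint of a conjugate-linear map is conjugate-linear, this shows $S_K^*$ is itself a closed conjugate-linear involution. Lemma~\ref{lem: std subspace vs involution} then produces a standard subspace $L$ of fixed points.

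Finally, I will identify $L$ with $K'$, which completes both implications in (ii). A vector $\Xi$ is a fixed point of $S_K^*$ iff $\ip{\Phi,\Xi}=\ip{\Xi,\Phi}$ for all $\Phi \in K$, i.e.\ iff $\Xi \in K'$ and $\Xi \in D(S_K^*)$; this gives $L = K' \cap D(S_K^*)$. Conversely, any $\Xi \in K'$ satisfies $\ip{\Phi,\Xi}\in\bbR$ for $\Phi\in K$, so setting $\Xi'=\Xi$ verifies the defining condition of $D(S_K^*)$, showing $K'\subseteq D(S_K^*)$ with $S_K^*$ acting as the identity there. Hence $L=K'$; this both forces $K'$ to be standard (when $S_K^*$ is densely defined) and shows, via Lemma~\ref{lem: std subspace vs involution}, that $S_K^* = S_{K'}$. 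The main delicacy, I expect, is the bookkeeping in the expansion of $\ip{S_K\eta,\Xi}=\ip{S_{K'}\Xi,\eta}$; all other steps are formal manipulations once the involution identity $(S_K^*)^2=\mathrm{Id}$ is in hand.
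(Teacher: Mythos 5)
Your proof is correct and follows essentially the same route as the paper: part (i) by testing $\ip{\Phi,\cdot}$ against the dense subspace $K+iK$ and invoking nondegeneracy, and part (ii) by showing $S_K^*$ is a closed conjugate-linear involution whose fixed-point set is exactly $K'$, then appealing to Lemma~\ref{lem: std subspace vs involution}. The only cosmetic differences are that you prove the second implication of (i) directly rather than via $K\subseteq K''$, and you spell out the involution identity $(S_K^*)^2=\mathrm{Id}_{D(S_K^*)}$ that the paper asserts with ``arguing as before.''
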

\begin{proof}
    We first prove (i). Suppose $K$ is cyclic and $\Phi \in K' \cap iK'$.
    It follows that $\ip{\Phi,\Psi} \in \bbR \cap i \bbR$ for all $\Psi \in K$.
    Hence $\ip{\Phi,\Psi} = 0$ for all $\Psi \in K$, and indeed for all $\Psi \in K + i K$.
    Since $K$ is cyclic and the form is nondegenerate, we conclude that $\Phi=0$ and $K'$ is separating.
    On the other hand, if $K'$ is cyclic, then $K''$ is separating by the previous work, and so $K \subseteq K''$ is separating as well.

    We now prove (ii).
    Assume $K$ is standard, so by part (i) $K'$ is separating.
    Since $S_K$ is an involution, $D(S_K^*)$ is invariant under $S_K^*$ and $S_K^*$ is an involution as well.
    Let $\tilde K = \operatorname{ker}(S_K^*-1)$, and arguing as before yields $D(S_K^*) = \tilde K + i \tilde K$.
    Thus to prove (ii) it suffices to prove that $\tilde K = K'$.
    Indeed for $\Phi \in K$ and $\Psi \in \cD$ we have
    \[
    \ip{S_K\Phi,\Psi} = \ip{\Phi,\Psi} = \overline{\ip{\Psi,\Phi}}.
    \]
    Hence if $\Psi \in K'$ we see $\ip{S_K\Phi,\Psi} = \ip{\Psi,\Phi}$, which shows that $\Psi \in D(S_K^*)$ and $S_K^*\Psi = \Psi$, i.e., $\Psi \in \tilde K$.
    Conversely, if $\Psi \in \tilde K$ then $\ip{S_K\Phi,\Psi} = \ip{\Psi,\Phi}$, and we see that $\ip{\Psi,\Phi} \in \bbR$, so that $\Psi \in K'$.
    We conclude $K' = \tilde K$, completing the proof of (ii).
\end{proof}

We note that even when $K$ and $K'$ are both standard, we have not ruled out the possibility that $K \subsetneq K''$ and $S_K \subsetneq S_K^{**}$.
Typically one would expect this if $K$ were a standard subspace for the strong topology on $\cD$ but not $\cD$-weakly closed, as $K''$ would contain the $\cD$-weak closure.

We now return to our study of involutive Wightman CFTs $(\cF,\cD,U,\Omega)$.
\begin{defn}\label{def: standard subspaces from wightman}
    Given an interval $I$, we denote by $\cP(I)_{sa} \subset \cP(I)$ the real subspace of self-adjoint elements, and the corresponding real subspace of $\cD$ is denoted $\cK(I) := \overline{\cP(I)_{sa}\Omega}$, where the closure is taken in the strong (or equivalently weak) topology.
    We denote by $\cK_w(I)$ the closure of $\cK(I)$ in the $\cD$-weak topology.
\end{defn}

The reader is reminded that if $I \subset S^1$ is an interval, then $I'$ denotes the complementary interval.

\begin{lem}\label{lem: KI standard}
    For each interval $I$, the subspaces $\cK(I)$ and $\cK(I)'$ are standard subspaces of $\cD$, and $\cK(I') \subset \cK(I)'$.
    The same holds for the subspaces $\cK_w(I)$ and $\cK_w(I')$.
\end{lem}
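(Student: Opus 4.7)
The plan is to prove cyclicity and separating for $\cK(I)$ and $\cK(I)'$, together with the inclusion $\cK(I')\subset\cK(I)'$, in an order that bootstraps each property from the earlier ones, and then observe that the same arguments carry over to the $\cD$-weakly closed subspaces $\cK_w(I)$.

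First I would handle cyclicity of $\cK(I)$: any $x \in \cP(I)$ decomposes as $\tfrac12(x+x^*) + i\cdot\tfrac{1}{2i}(x-x^*)$, so $\cP(I)\Omega \subseteq \cP(I)_{sa}\Omega + i\cP(I)_{sa}\Omega \subseteq \cK(I) + i\cK(I)$, and the Reeh--Schlieder theorem of \cite[App.~A]{CarpiRaymondTanimotoTener24ax} guarantees the left-hand side is dense in $\cD$. For the inclusion $\cK(I')\subset\cK(I)'$, by continuity of the form I only need $\ip{x\Omega, y\Omega} \in \bbR$ for $x\in\cP(I')_{sa}$ and $y\in\cP(I)_{sa}$; locality (which applies because $I\cap I'=\emptyset$) combined with self-adjointness of $x,y$ and invariance of $\Omega$ gives
\[
\ip{x\Omega,y\Omega}=\ip{\Omega,xy\Omega}=\ip{\Omega,yx\Omega}=\ip{y\Omega,x\Omega}=\overline{\ip{x\Omega,y\Omega}},
\]
as required. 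Cyclicity of $\cK(I)'$ is then immediate from $\cK(I')\subset\cK(I)'$ and the cyclicity of $\cK(I')$.

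The heart of the proof — and what I expect to be the main obstacle — is establishing separating for $\cK(I_+)$, from which the same property for general $\cK(I)$ follows by M\"obius covariance (using that $U(\gamma)$ preserves the Hermitian form and sends $\cP(I_+)_{sa}$ to $\cP(\gamma I_+)_{sa}$). Suppose $\Phi \in \cK(I_+)\cap i\cK(I_+)$ and choose nets $x_\alpha,y_\alpha \in \cP(I_+)_{sa}$ with $x_\alpha\Omega \to \Phi$ and $iy_\alpha\Omega\to\Phi$. Setting $z_\alpha = x_\alpha - iy_\alpha$, one has $z_\alpha\Omega\to 0$, while $z_\alpha^*\Omega = (x_\alpha+iy_\alpha)\Omega \to 2\Phi$; Theorem~\ref{thm: involutive BW} therefore gives $V_{i\pi}z_\alpha\Omega = \theta z_\alpha^*\Omega \to 2\theta\Phi$. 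Since $V_{i\pi}$ is self-adjoint by Theorem~\ref{thm: Vipi selfadjoint}, its graph is closed, so passing to the limit yields $2\theta\Phi = V_{i\pi}(0) = 0$ and hence $\Phi=0$. Separating for $\cK(I)'$ then follows from cyclicity of $\cK(I)$ via Lemma~\ref{lem: SK adjoint}(i).

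Finally, for the $\cD$-weakly closed versions: cyclicity of $\cK_w(I)$ is automatic since $\cK_w(I)\supset\cK(I)$; the inclusion $\cK_w(I')\subset\cK_w(I)'$ follows from $\cK(I')\subset\cK(I)'$ together with the observation that $K'$ depends only on the $\cD$-weak closure of $K$, so $\cK(I)' = \cK_w(I)'$; and separating for $\cK_w(I)$ requires exactly the argument of the previous paragraph with nets converging in the $\cD$-weak topology, which still works because $\theta$ is $\cD$-weakly continuous and the graph of the self-adjoint operator $V_{i\pi}$ remains closed in the $\cD$-weak topology. Separating for $\cK_w(I)'$ again comes from Lemma~\ref{lem: SK adjoint}(i). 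The delicate point throughout is using only closedness of the graph of $V_{i\pi}$ — not any form of continuity — when pushing the identity $V_{i\pi}z_\alpha\Omega = \theta z_\alpha^*\Omega$ through the limit.
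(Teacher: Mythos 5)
Your proposal is correct, but it takes a genuinely different route to the one step that carries any content beyond bookkeeping: the separating property of $\cK(I)$. The paper gets it essentially for free from duality: having shown $\cK(I')\subset\cK(I)'$ and that $\cK(I')$ is cyclic, it concludes that $\cK(I)'$ is cyclic and then invokes Lemma~\ref{lem: SK adjoint}(i) (``$K'$ cyclic $\Rightarrow$ $K$ separating''), whose proof uses nothing but nondegeneracy and continuity of the Hermitian form. You instead prove separating for $\cK(I_+)$ directly: for $\Phi\in\cK(I_+)\cap i\cK(I_+)$ you build $z_\alpha\in\cP(I_+)$ with $z_\alpha\Omega\to 0$ and $z_\alpha^*\Omega\to 2\Phi$, apply the Bisognano--Wichmann identity $V_{i\pi}z_\alpha\Omega=\theta z_\alpha^*\Omega$ from Theorem~\ref{thm: involutive BW}, and use the $\cD$-weak closedness of the graph of the self-adjoint operator $V_{i\pi}$ (Theorem~\ref{thm: Vipi selfadjoint}) to force $2\theta\Phi=V_{i\pi}0=0$; covariance then handles general $I$. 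This is sound --- there is no circularity, since both theorems are established before this lemma, adjoints are automatically $\cD$-weakly closed, and $\theta$ is $\cD$-weakly continuous by antiunitarity, so the argument survives the passage to $\cK_w$ exactly as you say. The trade-off is that you are deploying two of the paper's main theorems where an elementary nondegeneracy argument suffices; what you gain is a transparent illustration of the Tomita--Takesaki mechanism ``$S^0$ closable $\Rightarrow$ $\Omega$ separating'' transplanted to the non-unitary setting, with $\theta V_{i\pi}$ playing the role of the closed extension of $x\Omega\mapsto x^*\Omega$. The remaining steps (cyclicity via Reeh--Schlieder and the self-adjoint decomposition $x=\tfrac12(x+x^*)+i\cdot\tfrac{1}{2i}(x-x^*)$, the locality computation giving $\cK(I')\subset\cK(I)'$, the observation $\cK(I)'=\cK_w(I)'$, and Lemma~\ref{lem: SK adjoint}(i) for the separating property of the complements) coincide with the paper's.
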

\begin{proof}
    Observe that $\cK(I)$ contains $\cP(I)\Omega$, which is dense by the Reeh-Schlieder theorem, and so each $\cK(I)$ is cyclic.
    If $x \in \cP(I)_{sa}$ and $y \in \cP(I')_{sa}$, then locality for the net $\cP(I)$ implies that 
    \[
    \ip{x\Omega, y\Omega} = \ip{yx\Omega,\Omega} = \ip{xy\Omega,\Omega} = \ip{y\Omega,x\Omega} = \overline{\ip{x\Omega,y\Omega}}.
    \]
    Hence $\ip{x\Omega,y\Omega} \in \bbR$, and we see that $\cK(I') \subseteq \cK(I)'$.
    Since $\cK(I')$ is cyclic, so is $\cK(I)'$ whence $\cK(I)$ is separating by Lemma~\ref{lem: SK adjoint}.
    We have shown that $\cK(I)$ is standard and that $\cK(I)'$ is cyclic, and the fact that $\cK(I)'$ is separating again follows from Lemma~\ref{lem: SK adjoint}.

    In the case of $\cK_w$, note that $\cK(I)' = \cK_w(I)'$ and $\cK(I) \subset \cK_w(I)$.
    It immediately follows from the above that $\cK_w(I)'$ is standard, that $\cK_w(I') \subset \cK_w(I)'$, and that $\cK_w(I)$ is cyclic.
    The fact that $\cK_w(I)$ is separating follows again from Lemma~\ref{lem: SK adjoint}.
\end{proof}

From here it is straightforward to see that the spaces $\cK(I)$ and $\cK_w(I)$ each enjoy the following structure:

\begin{defn}\label{def: mob covariant net nonunitary}
    Let $\cD$ be a locally convex space equipped with a jointly continuous nondegenerate Hermitian form and a strongly continuous unitary representation $U:\Mob \to \End(\cD)$.
    A \textbf{(non-unitary) M\"obius-covariant net of standard subspaces} is a collection of standard subspaces $\cK(I) \subset \cD$ such that the following holds:
    \begin{enumerate}[1)]
        \item Isotony: If $I_1 \subseteq I_2$, then $\cK(I_1) \subseteq \cK(I_2)$.
        \item M\"obius covariance: $U(\gamma)\cK(I) = \cK(\gamma(I))$.
        \item Locality: $\cK(I') \subset \cK(I)'$.
    \end{enumerate}
\end{defn}

\begin{rem}
    Note that if the subspaces $\cK(I)$ are only assumed cyclic then standardness follows from locality, as in the proof of Lemma~\ref{lem: KI standard}.
    One could hope to replace cyclicity of each $\cK(I)$ with a weaker condition of being jointly cyclic. The question of whether this a priori weaker definition is equivalent is asking whether the Reeh-Schlieder phenomenon occurs, but we do not pursue this question.
    Without any positive energy assumption on $U$, one should not expect to deduce too much from axioms presented in Definition~\ref{def: mob covariant net nonunitary}; we include the definition to capture the structure we observe in our example.
\end{rem}

We have thus proven:

\begin{prop}
Let $(\cF,\cD,U,\Omega)$ be an involutive Wightman CFT on $S^1$.
Then the family of subspaces $\cK(I)=\overline{\cP(I)_{sa}\Omega}$ is a M\"obius-covariant net of standard subspaces of $\cD$ with respect to the representation $U$ of $\Mob$.
The same holds for the $\cD$-weak closures $\cK_w(I)$ of $\cK(I)$.
\end{prop}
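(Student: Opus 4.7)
The proposition has three axioms to verify (isotony, Möbius covariance, locality) on top of standardness of each $\cK(I)$. Lemma~\ref{lem: KI standard} has already done the two non-trivial ones: it established that each $\cK(I)$ and $\cK_w(I)$ is a standard subspace, and it proved the locality inclusion $\cK(I') \subset \cK(I)'$ (and similarly for $\cK_w$). So the plan is to verify the two remaining axioms, isotony and Möbius covariance, which are essentially formal consequences of the Wightman axioms once one unpacks how the adjoint interacts with conjugation by $U(\gamma)$.

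For isotony, I would simply observe that if $I_1 \subseteq I_2$, then every generator $\varphi(f)$ of $\cP(I_1)$ (with $\supp f \subset I_1$) is also a generator of $\cP(I_2)$, so $\cP(I_1) \subseteq \cP(I_2)$ and therefore $\cP(I_1)_{sa} \subseteq \cP(I_2)_{sa}$. Taking closures in either standard topology preserves inclusion and gives $\cK(I_1) \subseteq \cK(I_2)$ and $\cK_w(I_1) \subseteq \cK_w(I_2)$.

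For Möbius covariance, the key input is the Wightman axiom $U(\gamma)\varphi(f)U(\gamma)^{-1} = \varphi(\beta_{d_\varphi}(\gamma)f)$, which gives $U(\gamma)\cP(I)U(\gamma)^{-1} = \cP(\gamma(I))$ since $\beta_d(\gamma)$ sends functions supported in $I$ to functions supported in $\gamma(I)$. The new ingredient needed is that this intertwines the adjoint operation. I would deduce this from the invariance of the Hermitian form: for $x \in \cP(I)$ and $\Phi, \Psi \in \cD$,
\[
\ip{U(\gamma)xU(\gamma)^{-1}\Phi,\Psi} = \ip{xU(\gamma)^{-1}\Phi,U(\gamma)^{-1}\Psi} = \ip{U(\gamma)^{-1}\Phi,x^*U(\gamma)^{-1}\Psi} = \ip{\Phi,U(\gamma)x^*U(\gamma)^{-1}\Psi},
\]
so $(U(\gamma)xU(\gamma)^{-1})^* = U(\gamma)x^*U(\gamma)^{-1}$, and thus $U(\gamma)\cP(I)_{sa}U(\gamma)^{-1} = \cP(\gamma(I))_{sa}$. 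Using $U(\gamma)\Omega = \Omega$ yields $U(\gamma)\cP(I)_{sa}\Omega = \cP(\gamma(I))_{sa}\Omega$.

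Finally, I would pass to closures by noting that $U(\gamma)$ is a homeomorphism of $\cD$ in each of the standard topologies: it is strong-continuous because $U(\gamma) \in \End(\cD)$ with strong-continuous inverse $U(\gamma^{-1})$, and it is $\cD$-weak continuous because it preserves the Hermitian form. Applying the homeomorphism $U(\gamma)$ to $\overline{\cP(I)_{sa}\Omega}$ yields $\overline{\cP(\gamma(I))_{sa}\Omega}$, whether the closure is taken strongly or $\cD$-weakly, giving $U(\gamma)\cK(I) = \cK(\gamma(I))$ and $U(\gamma)\cK_w(I) = \cK_w(\gamma(I))$. No step here poses a real obstacle; the entire content of the proposition is packaged in Lemma~\ref{lem: KI standard}, and the remaining verifications are bookkeeping using axioms already in hand.
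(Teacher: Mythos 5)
Your proof is correct and follows the same route as the paper: both delegate standardness and locality to Lemma~\ref{lem: KI standard} and then check isotony and M\"obius covariance directly from the properties of the algebras $\cP(I)$ (the paper simply declares these last two ``immediate,'' while you spell out the intertwining of the adjoint with $U(\gamma)$ and the continuity of $U(\gamma)$ in each standard topology). No gaps.
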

\begin{proof}
    The standardness of the subspaces $\cK(I)$ and the locality axiom are Lemma~\ref{lem: KI standard}, and the other requirements are immediate from properties of the algebras $\cP(I)$.
\end{proof}

A key question which arises in this context, and has been extensively studied in various guises, is that of duality.

\begin{defn}
    Let $\cD$ be a locally convex space with nondegenerate Hermitian form.
    A M\"obius-covariant net of standard subspaces satisfies \textbf{Haag duality} if $\cK(I')=\cK(I)'$ for all intervals $I$.
    We call the net \textbf{weakly Haag dual} if the $\cD$-weak closures $\cK_w(I) = \overline{\cK(I)}$  are Haag dual.
\end{defn}

In the context of positive energy nets of standard subspaces of a Hilbert space, Haag duality is automatic \cite[Thm. 3.3.1]{LongoLectureNotesI}.
We now investigate the question of Haag duality for the nets of standard subspaces $\cK(I)$ and $\cK_w(I)$ arising from a Wightman CFT (as in Definition~\ref{def: standard subspaces from wightman}).

\begin{lem}\label{lem: dual standard subspaces}
\text{}

    \begin{enumerate}[(i)]
    \item As unbounded operators, $\theta V_{\pm i\pi} \theta = V_{\mp i \pi}$ and $V_{i\pi}^{-1} = V_{-i\pi}$.
    \item The operators $S_\pm = \theta V_{\pm i \pi}$ are closed conjugate-linear involutions (with respect to every standard topology) and $S_{\pm}^* = S_{\mp}$.
    \item If $\cL_{\pm}$ is the standard subspace corresponding to $S_{\pm}$, then $\cL_{\pm}' = \cL_{\mp}$, $\cK(I_\pm) \subset \cK_w(I_\pm) \subset \cL_{\pm}$, and $V_t\cL_{\pm} = \cL_{\pm}$ for all $t \in \bbR$.
    \item $\cK(I_{\pm})' = \cK_w(I_{\pm})' = \cL_\mp$
    \end{enumerate}
\end{lem}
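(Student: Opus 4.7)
The plan is to establish (i) by directly manipulating the holomorphic strip functions that define $V_{\pm i\pi}$, then deduce (ii) as a formal consequence of (i) combined with the self-adjointness of $V_{\pm i\pi}$ from Theorem~\ref{thm: Vipi selfadjoint} and the antiunitarity of $\theta$, and finally read off (iii) via the standard-subspace correspondence of Lemmas~\ref{lem: std subspace vs involution}--\ref{lem: SK adjoint}.

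For (i), given $\Phi \in D(V_{i\pi})$ and $\Phi' \in \cD$ with associated strip function $G_{\Phi'}$ on $\bbS_{i\pi}$, I would form $\widetilde G_{\Phi'}(\tau) := \overline{G_{\theta\Phi'}(\overline{\tau})}$. This is continuous on $\bbS_{-i\pi}$ and holomorphic in the interior, and using $\theta V_t = V_t\theta$ together with the antiunitary identity $\ip{\theta\xi,\eta}=\ip{\theta\eta,\xi}$, its boundary values are $\widetilde G_{\Phi'}(t)=\ip{V_t\theta\Phi,\Phi'}$ and $\widetilde G_{\Phi'}(t-i\pi)=\ip{V_t\theta V_{i\pi}\Phi,\Phi'}$. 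This witnesses $\theta\Phi \in D(V_{-i\pi})$ with $V_{-i\pi}\theta\Phi = \theta V_{i\pi}\Phi$, and the symmetric argument gives $\theta V_{\pm i\pi}\theta = V_{\mp i\pi}$. For $V_{i\pi}^{-1}=V_{-i\pi}$, the translated function $\sigma \mapsto G_{\Phi'}(\sigma+i\pi)$ on $\bbS_{-i\pi}$ directly exhibits $V_{i\pi}\Phi \in D(V_{-i\pi})$ with $V_{-i\pi}V_{i\pi}\Phi = \Phi$; the reverse is symmetric.

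For (ii), conjugate-linearity of $S_\pm$ is immediate from antilinearity of $\theta$ and linearity of $V_{\pm i\pi}$, and the involution identity $S_\pm^2 = \mathrm{id}$ on $D(S_\pm)$ is a direct substitution from (i). For the adjoint identity, I would apply the formula $(\theta A)^* = A^*\theta^*$ for $A$ linear and $\theta$ everywhere-defined antilinear, which transfers verbatim to our partially-defined setting because $\theta$ is bijective; combined with $V_{i\pi}^* = V_{i\pi}$ (Theorem~\ref{thm: Vipi selfadjoint}) and $\theta^* = \theta$ (as $\theta$ is antiunitary and $\theta^2=1$) this gives $S_+^* = V_{i\pi}\theta$, and (i) rewrites this as $\theta V_{-i\pi} = S_-$, with matching domains since $\theta$ bijects $D(V_{\pm i\pi})$. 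Closedness of $S_\pm$ in every standard topology then follows because the adjoint relation defining self-adjointness of $V_{\pm i\pi}$ is purely form-theoretic, and $\theta$ is continuous in each standard topology.

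For (iii), Lemma~\ref{lem: std subspace vs involution} applied to $S_\pm$ -- which is densely defined since $\cP(I_\pm)\Omega \subset D(V_{\pm i\pi})$ is dense by Reeh-Schlieder -- produces the standard subspaces $\cL_\pm$. The duality $\cL_+' = \cL_-$ follows from Lemma~\ref{lem: SK adjoint}(ii) and $S_+^* = S_-$. The inclusion $\cK_w(I_+)\subset\cL_+$ reduces to $\cP(I_+)_{sa}\Omega\subset\cL_+$: for self-adjoint $x\in\cP(I_+)$, Theorem~\ref{thm: involutive BW} gives $S_+(x\Omega) = \theta V_{i\pi}x\Omega = \theta\cdot\theta x^*\Omega = x\Omega$; one then takes $\cD$-weak closure, using closedness of $\cL_+$ in that topology from (ii). Invariance $V_t\cL_\pm = \cL_\pm$ follows from $V_tS_\pm V_{-t} = S_\pm$, itself a consequence of $\theta V_t = V_t\theta$ and $V_tV_{\pm i\pi}V_{-t} = V_{\pm i\pi}$ (a direct translation of strip functions). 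The main anticipated obstacle is cleanly establishing $\cD$-weak closedness of $V_{\pm i\pi}$ used in (ii) and (iii): while self-adjointness provides a form-theoretic closedness, verifying this carefully as closedness in the $\cD$-weak topology may require a Phragm\'en--Lindel\"of / Vitali-type argument to pass weak limits through the holomorphic extension on the strip, in the spirit of the Gaussian damping of Theorem~\ref{thm: Vipi selfadjoint}.
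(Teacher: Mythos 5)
Your proposal is correct and follows essentially the same route as the paper's proof: reflecting and translating the strip functions $G_{\Phi'}$ to get (i), deducing the involution and adjoint identities formally from (i) together with Theorem~\ref{thm: Vipi selfadjoint} for (ii), and reading off (iii) from Lemmas~\ref{lem: std subspace vs involution} and \ref{lem: SK adjoint} plus Theorem~\ref{thm: involutive BW}. The only obstacle you anticipate is not actually one: the graph of an adjoint is cut out by the relations $\ip{V_{i\pi}\Phi,\Psi}=\ip{\Phi,\Psi'}$, hence is automatically $\cD$-weakly closed, so self-adjointness already gives $\cD$-weak closedness of $V_{\pm i\pi}$ (and therefore closedness in every standard topology) with no further Phragm\'en--Lindel\"of-type argument needed.
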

\begin{proof}
    We first show that $\theta$ maps $D(V_{i \pi})$ into $D(V_{-i\pi})$.
    Indeed if $\Phi \in D(V_{i\pi})$ and $\Phi' \in \cD$, then we have a map $G_{\Phi'}:\bbS_{i\pi} \to \bbC$ that is continuous and holomorphic in the interior, and
    \begin{equation}\label{eqn: GPhiprime}
    G_{\Phi'}(t) = \ip{V_t\Phi, \Phi'}, \qquad G_{\Phi'}(t+i\pi) = \ip{V_tV_{i\pi} \Phi,\Phi'}.
    \end{equation}
    Define $H_{\Phi'}:\bbS_{-i\pi} \to \bbC$ by $H_{\Phi'}(\tau) = \overline{G_{\theta \Phi'}(\overline{\tau})}$.
    Since $\theta$ commutes with $V_t$, we have
    \[
    H_{\Phi'}(t) = \overline{\ip{V_t\Phi,\theta\Phi'}} = \ip{V_t\theta\Phi,\Phi'}
    \]
    and similarly
    \[
    H_{\Phi'}(t-i\pi) = \overline{\ip{V_tV_{i\pi}\Phi,\theta \Phi'}} = \ip{V_t\theta V_{i\pi} \Phi,\Phi'}.
    \]
    Hence $\theta\Phi \in D(V_{-i\pi})$ and $V_{-i\pi}\theta\Phi = \theta V_{i\pi}\Phi$.
    The same argument shows that if $\Phi \in D(V_{-i\pi})$ then $\theta \Phi \in D(V_{i\pi})$ and $\theta V_{-i\pi}\Phi = V_{i\pi}\theta \Phi$.
    Since $\theta$ is an involution and maps $D(V_{\pm i \pi})$ into $D(V_{\mp i \pi})$, we actually must have $\theta D(V_{\pm i \pi}) = D(V_{\mp i \pi})$, and so from our previous calculations $\theta V_{\pm i \pi} \theta = V_{\mp i \pi}$.

    We next show that $V_{i\pi}^{-1} = V_{-i\pi}$, or equivalently that $V_{\mp i\pi} V_{\pm i\pi} = \mathrm{Id}_{D(V_{\pm i\pi})}$. We will show $V_{-i\pi}V_{i\pi} = \mathrm{Id}_{D(V_{i\pi})}$, and the other identity is similar.
    Let $\Phi \in D(V_{i\pi})$ and $\Phi' \in \cD$, and let $G_{\Phi'}$ be as in \eqref{eqn: GPhiprime}.
    We now define $\widetilde H_{\Phi'}:\bbS_{-i\pi} \to \bbC$ by $\widetilde H_{\Phi'}(\tau) = G_{\Phi'}(\tau + i\pi)$.
    Then for $t \in \bbR$ we have
    \[
     \widetilde H_{\Phi'}(t) = \ip{V_tV_{i\pi} \Phi,\Phi'}, \qquad \widetilde H_{\Phi'}(t-i\pi) = \ip{V_t\Phi, \Phi'}.
    \]
    Since $\Phi'$ was arbitrary, this demonstrates that $V_{i\pi}\Phi \in D(V_{-i\pi})$ and that $V_{-i\pi}V_{i\pi}\Phi = \Phi$, as claimed. This completes the proof of (i).
    
    From part (i), we see that $S_{\pm}^2 = V_{\mp i \pi} V_{\pm i \pi} = \mathrm{Id}_{D(V_{\pm i \pi})} = \mathrm{Id}_{D(S_{\pm})}$, and so $S_{\pm}$ are conjugate-linear involutions.
    The maps $V_{\pm i\pi}$ are closed for the $\cD$-weak topology (and so for every standard topology) since adjoints are automatically $\cD$-weakly closed and $V_{\pm i\pi}$ are self-adjoint by Theorem~\ref{thm: Vipi selfadjoint}. 
    Hence $S_{\pm}$ are closed as well, and we have
    \[
    S_{\pm}^* = V_{\pm i \pi}\theta = \theta V_{\mp i \pi} = S_{\mp}
    \]
    where we note that the first equality is not immediate but easy to verify. This completes (ii).

    The identity $\cL_{\pm}' = \cL_{\mp}$ now follows from Lemma~\ref{lem: SK adjoint}.
    By definition $\cL_{\pm}$ consists of fixed points of $S_{\pm}$, and so $\cP(I_{\pm})_{sa}\Omega \subset \cL_{\pm}$ by Theorem~\ref{thm: involutive BW}.
    Taking closures yields $\cK(I_{\pm}) \subset \cL_{\pm}$.
    Finally note that $V_t$ commutes with both $\theta$ and $V_{\pm i \pi}$, and so with $S_{\pm}$.
    Hence $V_t$ leaves $\cL_{\pm}$ invariant for all $t$, which implies that $V_t\cL_{\pm} = \cL_{\pm}$, completing the proof of (iii).

    To prove (iv), it suffices to show that $\cK(I_\pm)' = \cL_\mp$, and by (iii) we only need to check $\cK(I_\pm)' \subseteq \cL_\mp$. Without loss of generality we only consider the inclusion $\cK(I_+)' \subseteq \cL_-$.
    Let $\Phi \in \cK(I_+)'$. Then for all $\Psi \in \cK(I_+)$ we have
    \[
    \ip{\Psi,\Phi} = \ip{\Phi,\Psi} = \ip{\Phi,S_+\Psi} = \ip{\Phi,\theta V_{i\pi}\Psi} = \ip{V_{i\pi}\Psi,\theta \Phi}
    \]
    where the first equality is the definition of $\cK(I_+)'$, the second is the fact that $\cK(I_+) \subset \cL_+$, the third is the definition of $S_+$, and the fourth is the fact that $\theta$ is an antiunitary involution.
    The identity 
    \begin{equation}\label{eqn: proof KIprime}
    \ip{\Psi,\Phi} = \ip{V_{i\pi}\Psi,\theta\Phi}
    \end{equation}
    extends by linearity to all $\Psi \in \cK(I_+) + i \cK(I_+)$, and in particular \eqref{eqn: proof KIprime} holds for all $\Psi \in \cP(I_+)\Omega$.
    Hence $\theta\Phi$ lies in the domain of $(V_{i\pi}|_{\cP(I_+)\Omega})^*$, and we recall from Theorem~\ref{thm: Vipi selfadjoint} that $(V_{i\pi}|_{\cP(I_+)\Omega})^* = V_{i\pi}$.
    Thus from \eqref{eqn: proof KIprime} we may deduce that $\theta \Phi \in D(V_{i\pi})$ and that $\ip{\Psi,\Phi} = \ip{\Psi,V_{i\pi}\theta\Phi}$ for all $\Psi \in \cP(I_+)\Omega$.
    We then have
    \[
    \Phi = V_{i\pi}\theta \Phi = \theta V_{-i\pi}\Phi = S_- \Phi.
    \]
    where the first equality uses the Reeh-Schlieder property and the nondegeneracy of the form, the second uses part (i) and implicitly asserts that $\Phi \in D(V_{-i\pi})$, and the third is the definition of $S_-$.
    Hence by definition we have $\Phi \in \cL_-$, as required.
\end{proof}

As a consequence of Lemma~\ref{lem: dual standard subspaces}, we can show that the `dual net' $I \mapsto \cK(I')'$ is local, and moreover satisfies Haag duality.

\begin{cor}\label{cor: haag dual subspace dual net}
The net of subspaces $\cL(I):= \cK(I')'$ is a M\"obius-covariant net of standard subspaces of $\cD$, and moreover this net satisfies Haag duality.
\end{cor}
\begin{proof}
    Isotony and M\"obius-covariance of the net $\cL(I)$ follow from the corresponding properties of the net $\cK(I)$.
    By M\"obius covariance, it suffices to check locality and Haag duality for the pair of intervals $I_\pm$.
    Observe that $\cL(I_\pm)$ is the space $\cL_\pm$ of Lemma~\ref{lem: dual standard subspaces}, by part (iv) of that lemma.
    We have $\cL_\pm' = \cL_\mp$ by part (iii) of the same lemma, which immediately implies locality and Haag duality.
\end{proof}

\begin{rem}\label{rem: inclusion of standard subspaces}
We have $\cK(I) \subseteq \cL(I)$ by the locality of $\cK$, and evidently the Haag duality of the net $\cK$ is equivalent to this inclusion being an equality.
By Haag duality of the net $\cL$, we have $\cL(I)=\cK(I)''$, so the question of Haag duality of the net $\cK$ rests on the difference between taking the closure of a subspace versus taking the double complement.
The question of Haag duality for $\cK$ can also be interpreted  in the context of inclusions of standard subspaces $\cK(I) \subset \cL(I)$.
The involution $S_{\pm}$ corresponding to $\cL(I_\pm)$ satisfies $S_{\pm} = \theta V_{\pm i\pi}$, and $\cK(I_{\pm})$ is invariant under $V_t$.
In the unitary setting the identity $\cK(I_{\pm}) = \cL(I_\pm)$ would follow (see e.g. \cite[Prop. 2.1.10]{LongoLectureNotesI}), but we do not see an immediate generalization to our context.
\end{rem}

In light of Lemma~\ref{lem: dual standard subspaces}, it is also possible to rephrase the question of Haag duality for $\cK$ in operator theoretic language.

\begin{cor}\label{cor: haag dual subspace net}
The net of standard subspaces $\cK(I) = \overline{\cP(I)_{sa}\Omega}$ satisfies Haag duality if and only if $\cP(I_+)\Omega$ is a core for $V_{i\pi}$ for the $\cF$-strong topology.
Similarly $\cK_w(I)$ satisfies Haag duality if and only if $\cP(I_+)\Omega$ is a core for $V_{i\pi}$ for the $\cD$-weak topology.
\end{cor}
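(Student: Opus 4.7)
The plan is to reduce both equivalences in the corollary to the single identity $\cK(I_+)=\cL_+$, where $\cL_\pm$ is the standard subspace from Lemma~\ref{lem: dual standard subspaces} associated to the closed involution $S_\pm = \theta V_{\pm i \pi}$. By that lemma, $\cL_\pm' = \cL_\mp$ and $\cK(I_\pm) \subset \cK_w(I_\pm) \subset \cL_\pm$, so all four subspaces ``sandwich'' between explicit known ones, and the question becomes when the sandwich collapses. I would first dispose of the Haag-duality side of the corollary, then the core side, and finally remark that the $\cK_w$ case is obtained by systematically replacing ``strong'' by ``$\cD$-weak''.

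For the Haag-duality equivalence, I would use the rotation $r_\pi \in \Mob$. The computation $r_\pi v_t r_\pi^{-1} = v_{-t}$ and the fact that $\theta$ commutes with $U$ (PCT commutes with rotations) give $U(r_\pi) S_+ U(r_\pi)^{-1} = S_-$, so $U(r_\pi)\cL_+ = \cL_-$; M\"obius covariance of $\cK$ gives $U(r_\pi)\cK(I_+) = \cK(I_-)$. Thus $\cK(I_+)=\cL_+$ is equivalent to $\cK(I_-)=\cL_-$. If both hold then $\cK(I_+)' = \cL_+' = \cL_- = \cK(I_-)$, and applying $U(\gamma)$ extends this to all intervals. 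Conversely, Haag duality forces the chain $\cK(I_-) \subset \cL_- = \cL_+' \subset \cK(I_+)' = \cK(I_-)$ to collapse, yielding $\cK(I_-) = \cL_-$ and hence $\cK(I_+)=\cL_+$.

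For the core equivalence, I would compute the strong closure of the graph of $S_+|_{\cP(I_+)\Omega}$ and show it equals the graph of the closed involution $S_{\cK(I_+)}$. By Theorem~\ref{thm: involutive BW} together with $\theta^2 = \mathrm{Id}$, one has $S_+ x\Omega = x^*\Omega$ for $x \in \cP(I_+)$, so in particular $S_+$ fixes $\cP(I_+)_{sa}\Omega$ pointwise. The inclusion $S_+|_{\cP(I_+)\Omega} \subset S_{\cK(I_+)}$ is immediate from $x\Omega = \tfrac{x+x^*}{2}\Omega + i\tfrac{x-x^*}{2i}\Omega$. For the reverse, any fixed point $\Phi$ of the strong closure arises as a strong limit of a net $\Phi_\alpha \in \cP(I_+)\Omega$ with $S_+\Phi_\alpha \to \Phi$; the self-adjoint averages $(\Phi_\alpha + S_+\Phi_\alpha)/2 \in \cP(I_+)_{sa}\Omega$ then converge strongly to $\Phi$, placing $\Phi$ in $\cK(I_+)$. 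Since $\theta$ is a strong homeomorphism, cores for $V_{i\pi} = \theta S_+$ and for $S_+$ coincide, and ``$\cP(I_+)\Omega$ is a core for $V_{i\pi}$ in the strong topology'' is precisely the assertion $S_{\cK(I_+)} = S_+$, i.e.\ $\cK(I_+) = \cL_+$. Combining with the previous paragraph yields the first equivalence, and the identical argument with all closures taken in the $\cD$-weak topology gives the $\cK_w$ version.

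The main obstacle I anticipate is the graph-closure computation: in the non-Hilbert setting one cannot appeal to orthogonal projections, and the identification of the fixed points of $\overline{S_+|_{\cP(I_+)\Omega}}$ with $\cK(I_+)$ rests on the continuity of the symmetrisation $\Phi \mapsto (\Phi + S_+\Phi)/2$ along graph-convergent nets. For the $\cD$-weak version one must additionally invoke Theorem~\ref{thm: Vipi selfadjoint} to know that $S_+$ is already $\cD$-weakly closed, so that the limiting object is genuinely a closed operator rather than merely a closed relation.
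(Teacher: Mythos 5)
Your proposal is correct and follows essentially the same route as the paper: both reduce the statement to the identity $\cK(I_\pm)=\cL_\pm$ via the sandwich $\cK(I_\pm)\subset\cL_\pm=\cL_\mp'\subset\cK(I_\mp)'$ from Lemma~\ref{lem: dual standard subspaces}, with Haag duality for general intervals then following from M\"obius covariance; your graph-closure identification of the core condition with $\cK(I_+)=\cL_+$ just makes explicit a step the paper treats as immediate. The only cosmetic differences are that you pass between $I_+$ and $I_-$ via $U(r_\pi)$ rather than $\theta$ (note $\theta$ does not commute with general rotations, only with $r_\pi$, so your parenthetical justification should be restricted accordingly), which changes nothing of substance.
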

\begin{proof}
If $\cP(I_+)\Omega$ is a $\cF$-strong core for $V_{i\pi}$ then $\cK(I_+) = \cL(I_+)$.
By M\"obius-covariance we have $\cK(I) = \cL(I)$ for all intervals $I$, and Haag duality follows.
For the converse, suppose the net $\cK$ is Haag dual, in which case $\cK(I_+) = \cK(I_-)' = \cL(I_+)$.
It follows that $\cP(I_+)\Omega$ is a $\cF$-strong core for $V_{i\pi}$.
The proof for the $\cD$-weak topology is similar.
\end{proof}

We conclude this section with a study of Haag duality for nets of algebras.
As with standard subspaces, we have a preliminary definition of a local net of algebras (for which one does not expect a robust general theory, in part because of the lack of a positive energy condition).

\begin{defn}\label{def: mob covariant net algebras nonunitary}
    Let $\cD$ be a locally convex space equipped with a jointly continuous nondegenerate Hermitian form, a strongly continuous unitary representation $U:\Mob \to \End(\cD)$, and a choice of non-zero vector $\Omega \in \cD$.
    A \textbf{(non-unitary) M\"obius-covariant net of algebras} is a collection of unital $*$-subalgebras $\cQ(I) \subset \End_*(\cD)$ such that the following holds:
    \begin{enumerate}[1)]
        \item Isotony: If $I_1 \subseteq I_2$, then $\cQ(I_1) \subseteq \cQ(I_2)$.
        \item M\"obius covariance: $U(\gamma)\cQ(I)U(\gamma)^{-1} = \cQ(\gamma(I))$.
        \item Locality: $\cQ(I)$ and $\cQ(I')$ commute for all intervals $I$.
        \item Vacuum: the vector $\Omega$ is fixed by $\Mob$ and is cyclic for each algebra $\cQ(I)$. 
    \end{enumerate}
    The net is called \textbf{Haag dual} if $\cQ(I)' = \cQ(I')$.
    Here we have used the notation that if $S \subset \End_*(\cD)$, then $S'$ denotes the commutant, consisting of all operators in $\End_*(\cD)$ which commute with $S$ elementwise.
\end{defn}

It is straightforward to check that if $\cQ(I)$ is a net of algebras as above, then $\cK(I) = \overline{\cQ(I)_{sa}\Omega}$ is a net of standard subspaces as in Definition~\ref{def: mob covariant net nonunitary}.

Returning to our Wightman CFT, the net of algebras $\cP(I)$ evidently satisfies the conditions of Definition~\ref{def: mob covariant net nonunitary}.
Indeed, the net $\cQ(I):=\cP(I)''$ is also a M\"obius-covariant  net of algebras (the only condition to check is locality, which follows formally from taking  two commutants of the inclusion $\cP(I) \subset \cP(I')'$).

We will shortly verify the Haag duality property for the net of algebras $\cQ(I)$. First, however, we isolate an important technical step.
The following lemma is analogous to standard computations in the context of Tomita-Takesaki theory and Haag duality for unitary theories. 

\begin{lem}\label{lem: algebra commutant inclusion}
Let $\cD$ be a locally convex vector space equipped with a jointly continuous nondegenerate Hermitian form and let $\Omega \in \cD$.
Let $A,B \subset \End_*(\cD)$ be unital $*$-subalgebras, and let $\theta$ be an antiunitary involution on $\cD$ such that $\theta \Omega = \Omega$.
Suppose that the following hold:
\begin{enumerate}[(i)]
\item Suppose that for all $X,Y \in A_{sa}$ we have $\ip{\theta X\Omega,Y\Omega} \in \bbR$.
\item Suppose $B\Omega$ is $\cD$-weakly dense in $\cD$,  that $B$ commutes with $A$, and that $\theta B \theta \subset A$.
\end{enumerate}
Then we have $\theta A\theta \subset A'$.
\end{lem}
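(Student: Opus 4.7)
The plan is to show $\theta a \theta \in A'$ for every $a \in A$, i.e.\ that $\theta a \theta$ commutes with every element of $A$. Using the $*$-algebra structure of $A$, every element decomposes as $X+iY$ with $X,Y \in A_{sa}$, and $\theta X \theta \in \End_*(\cD)$ is again self-adjoint (its adjoint is $\theta X^* \theta = \theta X \theta$). So it suffices to prove $Z := [\theta X \theta, Y] = 0$ for arbitrary $X, Y \in A_{sa}$.

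The first step is to reinterpret hypothesis (i) as a vacuum matrix element identity. Using $\theta\Omega = \Omega$, antiunitarity of $\theta$ in the form $\ip{\theta\Phi,\Psi} = \overline{\ip{\Phi,\theta\Psi}}$, and self-adjointness of $X$ and $Y$, direct manipulation shows that the reality of $\ip{\theta X \Omega, Y\Omega}$ is equivalent to
\[
\ip{[\theta X \theta, Y]\Omega, \Omega} = 0, \qquad X, Y \in A_{sa}.
\]
By decomposing arbitrary $a, a' \in A$ into self-adjoint parts and expanding $\theta(\,\cdot\,)\theta$ antilinearly, this extends to $\ip{[\theta a \theta, a']\Omega, \Omega} = 0$ for all $a, a' \in A$.

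The core of the proof is a direct computation of $\ip{Z b \Omega, b' \Omega}$ for arbitrary $b, b' \in B$. Set $c := (b')^* \in B$, and write $b = \theta \tilde b \theta$ and $c = \theta \tilde c \theta$ with $\tilde b, \tilde c \in A$ (using $\theta B \theta \subseteq A$). Using $cY = Yc$ (from commutation of $A$ with $B$), the key algebraic simplification is that pairs of $\theta$'s collapse:
\[
c\,\theta X \theta\, b = \theta(\tilde c X \tilde b)\theta.
\]
Peeling off one outer $\theta$ via antiunitarity, using $\theta\Omega = \Omega$ and $\theta Y \Omega = (\theta Y \theta)\Omega$, and invoking self-adjointness of $\theta Y \theta$, the two terms of the commutator combine to
\[
\ip{Zb\Omega, b'\Omega} = \overline{\ip{[\tilde c X \tilde b,\, \theta Y \theta]\Omega,\Omega}}.
\]
Since $\tilde c X \tilde b, Y \in A$, the extended form of hypothesis (i) forces the right-hand side to vanish.

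Finally, the vanishing $\ip{Zb\Omega, b'\Omega} = 0$ on $B\Omega \times B\Omega$ must be upgraded to $Z=0$. Since $Z \in \End_*(\cD)$ admits an adjoint $Z^*$, for each fixed $\Psi \in \cD$ the map $\Phi \mapsto \ip{Z\Phi,\Psi} = \ip{\Phi, Z^*\Psi}$ is $\cD$-weakly continuous; $\cD$-weak density of $B\Omega$ upgrades vanishing on $B\Omega$ to vanishing on all of $\cD$. Dualising on the other slot and using nondegeneracy together with $Z^{**} = Z$ then gives $Z = 0$. I expect the main obstacle to lie in the bookkeeping of the core computation: the antilinearity of $\theta$ introduces a complex conjugate, and one must arrange the manipulations so that all $\theta$'s collapse in pairs, producing a commutator of precisely the form with $\theta Y \theta$ that the extended form of (i) controls. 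The hypothesis $\theta B \theta \subseteq A$ turns out to be exactly what makes this collapse possible.
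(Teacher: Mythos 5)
Your proposal is correct and follows essentially the same route as the paper: both reformulate hypothesis (i) as the vanishing of the vacuum expectation of $[\theta a\theta,a']$ for $a,a'\in A$ (extended from $A_{sa}$ by (anti)linearity), both insert elements of $B$ on either side so that the $\theta$'s collapse in pairs via $\theta B\theta\subset A$ and the commutation of $B$ with $A$, and both finish by taking $\cD$-weak limits in each slot separately, using the existence of adjoints to make the relevant functionals continuous. The only cosmetic difference is that the paper phrases the middle step as the substitution $X\mapsto\theta T\theta X\theta S\theta$ in the extended identity rather than as a direct evaluation of $\ip{[\theta X\theta,Y]b\Omega,b'\Omega}$.
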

\begin{proof}
    Let $X,Y \in A_{sa}$, so that by assumption we have $\ip{\theta X\Omega, Y\Omega} = \ip{Y\Omega, \theta X\Omega}.$
Since $\theta\Omega = \Omega$, this becomes
\[
\ip{\theta X \theta Y \Omega, \Omega} = \ip{Y\theta X \theta \Omega, \Omega}.
\]
Both sides of the above identity are conjugate linear in $X$ and linear in $Y$, so it extends to all $X,Y \in A$.
If $S,T \in B$, then $\theta S \theta, \theta T \theta \in A$.
Thus we apply the above identity with $X$ replaced by $\theta T \theta X\theta S \theta$ to obtain
\[
\ip{T \theta X \theta S Y \Omega, \Omega} = \ip{YT\theta X \theta S \Omega, \Omega}
\]
for all $X,Y \in A$ and $S,T \in B$.
Since $S$ and $T$ commute with $Y$, this becomes
\[
\ip{(\theta X \theta)Y S\Omega, T^*\Omega} = \ip{Y(\theta X \theta)S\Omega, T^*\Omega}.
\]
Now for arbitrary $\Phi' \in \cD$ choose a net $T_m^*\Omega \in B\Omega$ such that $T_m^*\Omega \to \Phi'$ in the $\cD$-weak topology.
Taking the limit of the identity
\[
\ip{(\theta X \theta)Y S\Omega, T^*_m\Omega} = \ip{Y(\theta X \theta)S\Omega, T^*_m\Omega}.
\]
yields 
\[
\ip{(\theta X \theta)Y S\Omega, \Phi'} = \ip{Y(\theta X \theta)S\Omega, \Phi'}.
\]
Now for arbitrary $\Phi \in \cD$ choose a net $S_n\Omega \in B\Omega$ such that $S_n\Omega \to \Phi$.
We now have
\[
\ip{ S_n\Omega, [(\theta X \theta)Y]^*\Phi'} = \ip{(\theta X \theta)Y S_n\Omega, \Phi'} = \ip{Y(\theta X \theta)S_n\Omega, \Phi'} = \ip{S_n\Omega, [Y(\theta X \theta)]^*\Phi'}.
\]
Taking the limit yields
\[
\ip{(\theta X \theta)Y \Phi, \Phi'} = \ip{Y(\theta X \theta)\Phi, \Phi'}.
\]
Since $\Phi,\Phi' \in \cD$ were arbitrary, we see that $\theta X \theta$ and $Y$ commute.
Hence $\theta X \theta \in A'$, which completes the proof.
\end{proof}

We can now prove Haag duality of the net $\cQ(I)$:

\begin{thm}[Haag duality]\label{thm: algebra haag duality}
    Let $(\cF,\cD,U,\Omega)$ be an involutive Wightman CFT on $S^1$, and let $\cP(I) \subset \End_*(\cD)$ be the unital algebra generated by operators $\varphi(f)$, where $\varphi \in \cF$ and $\supp f \subset I$.
    Let $\cQ(I)$ be the M\"obius-covariant net of algebras $\cQ(I):=\cP(I)''$.
    Then $\cQ$ satisfies the Haag duality condition: $\cQ(I)' = \cQ(I')$ for all intervals $I$, where the commutant is taken in $\End_*(\cD)$.
    Moreover, $\cQ(I) = \cP(I')'$.
\end{thm}
\begin{proof}
By M\"obius covariance, it suffices to consider the interval $I=I_+$.
Observe that is suffices to prove the ``moreover'' statement $\cQ(I_+)=\cP(I_-)'$, as taking commutants in this identity immediately yields the Haag duality of $\cQ$.
By locality of the net $\cP$, we have $\cP(I_-) \subseteq \cP(I_+)'$, and taking commutants yields $\cQ(I_+) = \cP(I_+)'' \subseteq \cP(I_-)'$.
We therefore need only prove the reverse inclusion $\cP(I_-)' \subseteq \cP(I_+)''$.
We know that $\cP(I_-) = \theta \cP(I_+) \theta$, and taking commutants yields $\cP(I_-)' = \theta \cP(I_+)' \theta$.
Thus the theorem reduces to establishing that $\theta \cP(I_+)' \theta \subset \cP(I_+)''$.

We will establish this last inclusion by invoking Lemma~\ref{lem: algebra commutant inclusion} with $A=\cP(I_+)'$ and $B = \cP(I_+)$.
The conclusion of the lemma is exactly the required inclusion, so we just verify that the two hypotheses hold.
We first consider hypothesis (i) of the Lemma.
Note that if $X \in \cP(I_+)'$ and $X=X^*$, then for all $Y \in \cP(I_+)_{sa}$ we have $\ip{X\Omega,Y\Omega} \in \bbR$, and so $X\Omega \in \cK(I_+)'$.

If we set $\cL(I_\pm)=\cK(I_\mp)'$, then we have just shown that $(\cP(I_+)')_{sa}\Omega \subset \cL(I_-)$.
A similar calculation shows that $(\cP(I_-)')_{sa}\Omega \subset \cL(I_+)$.
By Corollary~\ref{cor: haag dual subspace dual net}, the net $\cL(I)$ is local, and so in particular the pair of subspaces $(\cP(I_+)')_{sa}\Omega$ and $(\cP(I_-)')_{sa}\Omega$ are local as well.
If $X,Y \in (\cP(I_+)')_{sa}$, then $\theta X \theta \in \cP(I_-)'$ and it now follows that
\[
\ip{\theta X\Omega, Y\Omega} \in \bbR
\]
which establishes condition (i) of the Lemma.
Condition (ii) is straightforward, so the Lemma says that $\theta \cP(I_+)'\theta \subset \cP(I_+)''$, which completes the proof.
\end{proof}

\section{Applications for unitary CFTs}\label{sec: unitary}

A \textbf{unitary} Wightman CFT $(\cF,\cD,U,\Omega)$ is an involutive Wightman CFT with antiunitary PCT operator $\theta$ (see the opening of Section~\ref{sec: sesquilinear}) for which the invariant Hermitian form is positive definite, i.e. an inner product, and normalized so that $\norm{\Omega}=1$.
In this case $\cD$ embeds continuously in its Hilbert space completion $\cH$ (see \cite[\S2.3]{RaymondTanimotoTener22}).

In addition to the standard topologies ($\cF$-strong, $\cF$-weak, and $\cD$-weak; see Definition~\ref{defn: standard topologies}), the space $\cD$ has a norm topology.
The relative strength of these topologies is summarized in the diagram below, with arrows going from stronger topologies to weaker topologies: 
\[
\begin{tikzpicture}[arrow/.style={-{Latex}, thick}]
  \node (E) at (0,0) {$\cF$-strong};
  \node (C) at (-1.5,-1.2) {norm};
  \node (D) at (1.5,-1.2) {$\cF$-weak};
  \node (B) at (0,-2.4) {$\cD$-weak};

  \draw[arrow] (E) -- (C);
  \draw[arrow] (E) -- (D);
  \draw[arrow] (C) -- (B);
  \draw[arrow] (D) -- (B);
\end{tikzpicture}
\]
As usual, let $\cP(I) \subset \End_*(\cD)$ be the unital $*$-algebra generated by $\varphi(f)$ with $\operatorname{supp} f \subset I$, and let $\cP(I)_{sa}$ be the real subspace of self-adjoint elements.
In Section~\ref{sec: nonunitary duality} we defined $\cK(I)$ and $\cK_w(I)$ to be the closure of $\cP(I)_{sa}\Omega$ in the $\cF$-strong and $\cD$-weak topologies, respectively.
Let $\cK_n(I)$ be the closure in $\cD$ of $\cP(I)_{sa}\Omega$ in the norm topology, and let $\hat \cK_n(I)$ be the closure of $\cP(I)_{sa}\Omega$ in $\cH$.
We have
\[
\cK(I) \subset \cK_n(I) \subset \cK_w(I), \qquad \cK_n(I) \subset \hat \cK_n(I).
\]

\begin{prop}\label{prop: unitary K weakly haag dual}
    The net $\hat \cK_n(I)$ is a Haag dual M\"obius covariant net of standard subspaces of $\cH$.
    The net $\cK_n(I)$ is a Haag dual M\"obius covariant net of standard subspaces of $\cD$, and $\cK_n(I)=\cK_w(I)=\cK(I)''$.
\end{prop}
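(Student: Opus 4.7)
The plan is to first establish the Hilbert space statement for $\hat\cK_n(I)$ by appealing to the general theory of M\"obius-covariant nets of standard subspaces on a Hilbert space, and then to bootstrap the $\cD$-level identifications using Haag duality in $\cH$ together with the density of $\cP(I)_{sa}\Omega$ in $\hat\cK_n(I)$.

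For the Hilbert space claim, I would verify that $\hat\cK_n(I)$ satisfies the axioms of Definition~\ref{defn: unitary net of subspaces}. Isotony and M\"obius covariance are immediate from the corresponding properties of the algebras $\cP(I)$, noting that $V_t$ is isometric on $\cD$ and hence extends to a strongly continuous unitary group on $\cH$. The spectrum condition is the unitary version of axiom (W\ref{itm: W spec}). Cyclicity of $\sum_I \bigl(\hat\cK_n(I)+i\hat\cK_n(I)\bigr)$ in $\cH$ follows because $\cP(S^1)\Omega$ is $\cF$-strong dense in $\cD$ by the Reeh-Schlieder property, hence norm-dense in $\cH$. Locality and standardness in $\cH$ are obtained by repeating the argument of Lemma~\ref{lem: KI standard} in the Hilbert space, using positivity of the inner product. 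With these axioms verified, Haag duality $\hat\cK_n(I')=\hat\cK_n(I)'$ is the content of the Bisognano-Wichmann theorem for one-particle nets, \cite[Thm.~3.3.1]{LongoLectureNotesI}.

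For the $\cD$-level statements, I would first observe that $\cK_n(I)=\hat\cK_n(I)\cap\cD$, which is immediate from the fact that the norm on $\cD$ is inherited from $\cH$ and that $\cP(I)_{sa}\Omega\subset\cD$. The inclusion $\cK_n(I)\subset\cK_w(I)$ is automatic from comparison of topologies. For the reverse inclusion, Hilbert-space Haag duality gives
\[
\hat\cK_n(I)=\{\xi\in\cH \, : \, \ip{\xi,y\Omega}\in\bbR \text{ for all } y\in\cP(I')_{sa}\},
\]
so $\cK_n(I)$ is cut out of $\cD$ by a family of $\cD$-weakly continuous real-linear conditions and is therefore $\cD$-weakly closed, yielding $\cK_w(I)\subset\cK_n(I)$. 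The identification $\cK(I)''=\cK_n(I)$ then follows from complementation: $\cK(I)''$ is $\cD$-weakly closed and contains $\cP(I)_{sa}\Omega$, hence contains $\cK_w(I)$; conversely, locality gives $\cK(I')\subset\cK(I)'$, so $\cK(I)''\subset\cK(I')'$, which equals $\cK_n(I)$ by the same description. Finally, for Haag duality $\cK_n(I)'=\cK_n(I')$, the inclusion $\cK_n(I')\subset\cK_n(I)'$ is from locality, while for $\xi\in\cK_n(I)'$ we have $\ip{\xi,y\Omega}\in\bbR$ for all $y\in\cP(I)_{sa}$, and by the $\cH$-norm density of $\cP(I)_{sa}\Omega$ in $\hat\cK_n(I)$ combined with continuity of $\ip{\xi,\cdot}$ on $\cH$, this extends to $\ip{\xi,\eta}\in\bbR$ for all $\eta\in\hat\cK_n(I)$, so $\xi\in\hat\cK_n(I)'=\hat\cK_n(I')$ and therefore $\xi\in\hat\cK_n(I')\cap\cD=\cK_n(I')$.

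The main obstacle is the Hilbert space half of the argument, namely verifying cleanly that $\hat\cK_n(I)$ fits into the framework of Definition~\ref{defn: unitary net of subspaces} and then invoking \cite[Thm.~3.3.1]{LongoLectureNotesI} for Haag duality; once that result is in hand, the remaining $\cD$-level identifications are essentially formal Hahn-Banach manipulations that work because the critical functionals $\ip{\cdot,y\Omega}$ with $y\in\cP(I)_{sa}$ are continuous in every standard topology on $\cD$.
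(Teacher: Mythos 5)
Your proposal is correct and follows essentially the same route as the paper: establish the Hilbert-space statement for $\hat\cK_n$ via the axioms of a M\"obius-covariant net of standard subspaces and \cite[Thm.~3.3.1]{LongoLectureNotesI}, then transfer to $\cD$ using $\cK_n(I)=\hat\cK_n(I)\cap\cD$ and density of $\cP(I)_{sa}\Omega$. The only (harmless) variation is in the bookkeeping for $\cK_n=\cK_w=\cK''$: you show $\cK_n(I)$ is $\cD$-weakly closed because Haag duality in $\cH$ cuts it out by $\cD$-weakly continuous real-linear conditions, whereas the paper reaches the same conclusion by comparing complements, using that complementation only sees the $\cD$-weak closure.
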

\begin{proof}
    The locality, M\"obius covariance, and isotony of the net $\hat \cK_n$ is inherited from $\cK$.
    Haag duality for this net then follows by general theory \cite[Thm. 3.3.1]{LongoLectureNotesI}.
    Haag duality for $\cK_n$ then follows from the fact that $\cK_n(I) = \hat \cK_n(I) \cap \cD$ and that $\cK_n(I)$ is dense in $\hat \cK_n(I)$ for each interval $I$.
    Since $\cK_n(I)$ is $\cD$-weakly dense in $\cK_w(I)$, we have $\cK_w(I)' = \cK_n(I)' = \cK_n(I') \subseteq \cK_w(I')$. Locality for $\cK_w$ furnishes the opposite inclusion $\cK_w(I') \subset \cK_w(I)'$, and we conclude that $\cK_w$ is Haag dual and $\cK_n = \cK_w$.
    Since $\cK(I)$ is $\cD$-weakly dense in $\cK_w(I)$, we have $\cK(I)' = \cK_w(I)' = \cK_w(I')$, and by Haag duality for $\cK_w$ it follows that $\cK(I)'' = \cK_w(I)$.
\end{proof}

Recall from Section~\ref{sec: background unitary} that given a standard subspace $K$ of a Hilbert space $\cH$, we have a corresponding conjugate linear closed involution $S_K$ with polar decomposition $S_K = J\Delta^{1/2}$.
Given a M\"obius covariant net of standard subspaces $\cK(I) \subset \cH$, covariant with respect to the unitary representation $U$ of $\Mob$ on $\cH$, then the Bisognano-Wichmann property for $\cK$ \cite[Thm. 3.3.1]{LongoLectureNotesI} states that the modular automorphisms $\Delta_\pm^{it}$ for $\cK(I_\pm)$ are given by
\begin{equation}\label{eqn: Delta pm it}
\Delta_{\pm}^{it} = U(v_{\mp 2\pi t}),
\end{equation}
where $v_t$ is the one-parameter subgroup of $\Mob$ fixing $I_{\pm}$ defined in Section~\ref{sec: background unitary}.
In Definition~\ref{defn: Vs involutive}, we considered densely defined operators $V_{\pm i \pi}$ on $\cD$ which were analytic continuations of the group $U(v_{t})|_{\cD}$.
It is therefore not surprising that the modular operator $\Delta_{\pm}^{1/2}$ is connected to $V_{\pm i\pi}$, and in fact we have an identity of unbounded operators.

\begin{thm}\label{thm: modular for unitary}
    Let $(\cF,\cD,U,\Omega)$ be a unitary Wightman CFT with PCT operator $\theta$, and let $\cH$ be the Hilbert space completion of $\cD$.
    Let $\hat \cK_n(I) \subset \cH$ be the corresponding net of standard subspaces defined as the norm closure of $\cP(I)_{sa}\Omega$ in $\cH$.
    Let $V_{\pm i\pi}$ be the densely defined operators on $\cD$ defined in Definition~\ref{defn: Vs involutive}.
    Let 
    \[
    \cE_\pm = \{ \Phi \in D(\Delta^{1/2}_{\pm}) \, : \, \Phi \in \cD \text{ and } \Delta_{\pm}^{1/2}\Phi \in \cD\}.
    \]
    Then the modular operator and conjugation for $\hat \cK_n(I_{\pm})$ satisfy
    \[
    \Delta_{\pm}^{1/2}|_{\cE_{\pm}} = V_{\pm i\pi}, \qquad J_{\pm} = \theta.
    \]
\end{thm}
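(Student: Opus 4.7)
My plan is to combine the Bisognano-Wichmann property for the net of standard subspaces $\hat\cK_n$ with Theorem~\ref{thm: involutive BW} and the KMS condition of Corollary~\ref{cor: KMS}.

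First I would invoke Proposition~\ref{prop: unitary K weakly haag dual}: the net $\hat\cK_n(I)$ is a Haag dual M\"obius-covariant net of standard subspaces of $\cH$. The Bisognano-Wichmann theorem for such nets \cite[Thm.~3.3.1]{LongoLectureNotesI} then gives $\Delta_\pm^{it} = V_{\mp 2\pi t}$ as unitary groups on $\cH$. By the spectral theorem, $\Phi \in D(\Delta_+^{1/2})$ precisely when $t \mapsto V_t\Phi$ extends to a norm-continuous $\cH$-valued function on the strip $\{0 \le \Im z \le \pi\}$, holomorphic in the interior, with boundary $V_t \Delta_+^{1/2}\Phi$ on $\bbR + i\pi$. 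For the inclusion $\Delta_+^{1/2}|_{\cE_+} \subseteq V_{i\pi}$, given $\Phi \in \cE_+$ I would pair this Hilbert-space extension against any $\Phi' \in \cD \subset \cH$: since the Hermitian form on $\cD$ is the restriction of the Hilbert inner product, the function $z \mapsto \langle V_z\Phi, \Phi'\rangle_\cH$ is exactly the $G_{\Phi'}$ required by Definition~\ref{defn: Vs involutive}, with companion vector $\Delta_+^{1/2}\Phi \in \cD$, so $\Phi \in D(V_{i\pi})$ and $V_{i\pi}\Phi = \Delta_+^{1/2}\Phi$.

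To identify $J_+ = \theta$, I would exploit Corollary~\ref{cor: KMS}. For $x, y \in \cP(I_+)$, the KMS function $F_{x,y}$ extends $\langle V_t y\Omega, x^*\Omega\rangle_\cH$ holomorphically to the strip $\bbS_{2\pi i}$. Note that $y\Omega \in \cP(I_+)_{sa}\Omega + i\cP(I_+)_{sa}\Omega \subseteq D(S_+) = D(\Delta_+^{1/2})$, so one can compute $F_{x,y}(i\pi)$ in two ways using uniqueness of analytic continuation. On the Wightman side, Theorem~\ref{thm: involutive BW} yields $F_{x,y}(i\pi) = \langle V_{i\pi}y\Omega, x^*\Omega\rangle = \langle \theta y^*\Omega, x^*\Omega\rangle$. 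On the Hilbert side, the spectral extension gives $F_{x,y}(i\pi) = \langle \Delta_+^{1/2} y\Omega, x^*\Omega\rangle$, which equals $\langle J_+ y^*\Omega, x^*\Omega\rangle$ using $S_+ = J_+ \Delta_+^{1/2}$, $S_+ y\Omega = y^*\Omega$, and $J_+^2 = 1$. Equating gives $\langle (J_+ - \theta)y^*\Omega, x^*\Omega\rangle = 0$; letting $x$ vary and invoking Reeh--Schlieder density of $\cP(I_+)\Omega$ in $\cH$ forces $(J_+ - \theta)y^*\Omega = 0$ for every $y \in \cP(I_+)$, and density again combined with the continuity of both antiunitaries gives $J_+ = \theta$ on $\cH$.

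Finally, for the reverse inclusion $D(V_{i\pi}) \subseteq \cE_+$: with $J_+ = \theta$ now in hand, $\Delta_+^{1/2}x\Omega = J_+ x^*\Omega = \theta x^*\Omega \in \cD$ for all $x \in \cP(I_+)$, so $\cP(I_+)\Omega \subseteq \cE_+$. A standard-subspace density argument shows $\cP(I_+)\Omega$ is a core for $S_+$ (and hence for $\Delta_+^{1/2}$, the graph norms being equal), so $\cE_+$ is a core for $\Delta_+^{1/2}$. For $\Phi \in D(V_{i\pi})$ and $\Phi' \in \cE_+$, combining the self-adjointness $V_{i\pi}^* = V_{i\pi}$ on $\cD$ from Theorem~\ref{thm: Vipi selfadjoint} with the agreement $V_{i\pi}\Phi' = \Delta_+^{1/2}\Phi'$ yields
\[
\langle V_{i\pi}\Phi, \Phi'\rangle_\cH = \langle \Phi, V_{i\pi}\Phi'\rangle_\cH = \langle \Phi, \Delta_+^{1/2}\Phi'\rangle_\cH.
\]
Extending $\Phi'$ from the core $\cE_+$ to all of $D(\Delta_+^{1/2})$ in the graph norm identifies $\Phi$ as an element of $D((\Delta_+^{1/2})^*) = D(\Delta_+^{1/2})$ with $\Delta_+^{1/2}\Phi = V_{i\pi}\Phi \in \cD$, so $\Phi \in \cE_+$. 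The $I_-$ case is analogous. The main obstacle I anticipate is reconciling the Wightman and Hilbert analytic extensions of $\langle V_t y\Omega, x^*\Omega\rangle_\cH$ in paragraph two: both are a priori different extensions from $\bbR$ to $\bbS_{2\pi i}$ that must be identified with the unique KMS $F_{x,y}$, which will require combining the growth estimates of Lemma~\ref{lem: growth estimate on S} with the spectral boundedness of $V_z y\Omega$ on $\bbS_\pi$ via a Phragm\'en--Lindel\"of argument.
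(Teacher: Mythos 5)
Your proposal is correct, and the first inclusion $\Delta_+^{1/2}|_{\cE_+}\subseteq V_{i\pi}$ is obtained exactly as in the paper (spectral extension of $\Delta_+^{\tau/2\pi i}\Phi$ paired against $\Phi'\in\cD$). The two later steps diverge from the paper's route in an interesting way. For $J_+=\theta$, the paper argues more directly: for $x\in\cP(I_+)_{sa}$ one has $J_+\Delta_+^{1/2}x\Omega=\hat S_+x\Omega=x\Omega=\theta V_{i\pi}x\Omega=\theta\Delta_+^{1/2}x\Omega$, which already identifies $J_+$ with $\theta$ on the dense set $\cP(I_-)\Omega$; your detour through the KMS function reaches the same identity $\langle\Delta_+^{1/2}y\Omega,x^*\Omega\rangle=\langle\theta y^*\Omega,x^*\Omega\rangle$ but is heavier than necessary, and the Phragm\'en--Lindel\"of step you flag as the main obstacle is a non-issue: two functions continuous on the closed strip, holomorphic in the interior, and agreeing on the boundary line $\bbR$ coincide by Schwarz reflection, with no growth control required (the paper's Definition~\ref{defn: Vs involutive} already asserts this uniqueness). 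The genuinely different part is the reverse inclusion $D(V_{i\pi})\subseteq\cE_+$: the paper routes through the standard-subspace machinery, showing $\cL_\pm=\cK_w(I_\pm)=\hat\cK_n(I_\pm)\cap\cD$ via Lemma~\ref{lem: dual standard subspaces} and the weak Haag duality of Proposition~\ref{prop: unitary K weakly haag dual}, so that $D(V_{\pm i\pi})=\cL_\pm+i\cL_\pm\subset D(\Delta_\pm^{1/2})$; you instead use the self-adjointness $V_{i\pi}^*=V_{i\pi}$ of Theorem~\ref{thm: Vipi selfadjoint} together with the fact that $\cP(I_+)\Omega$ is a core for $\hat S_+$ (hence for $\Delta_+^{1/2}$, the graph norms agreeing) to exhibit any $\Phi\in D(V_{i\pi})$ as an element of $D((\Delta_+^{1/2})^*)=D(\Delta_+^{1/2})$ with $\Delta_+^{1/2}\Phi=V_{i\pi}\Phi\in\cD$. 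Both arguments are valid; yours trades the subspace-duality input for the operator-theoretic input of Theorem~\ref{thm: Vipi selfadjoint}, and arguably isolates more cleanly where positivity of the form enters (through the existence of the Hilbert-space adjoint and the core argument), while the paper's version reuses machinery it needs anyway for Corollary~\ref{cor: unitary algebra haag duality}.
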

\begin{proof}
    We first show that $\cE_{\pm}$ is contained in the domain of $V_{i\pi}$, and that $V_{i\pi}$ and $\Delta_+^{1/2}$ agree on this space.
    Without loss of generality we only consider $\cE_+$.
    Given $\Phi \in D(\Delta_+^{1/2})$, the map $F:\bbS \to \cH$ given by
    \[
    F(\tau) = \Delta_+^{\tfrac{\tau}{2\pi i}} \Phi
    \]
    is continuous on the closed strip and holomorphic on the interior.
    Hence for each $\Phi' \in \cD$, the map $F_{\Phi'}:\bbS \to \bbC$ given by $F_{\Phi'}(\tau) = \ip{F(\tau),\Phi'}$ is again continuous on the closed strip and holomorphic in the interior.
    Since $\Delta^{it}_+ = V_{-2\pi t}$ by \eqref{eqn: Delta pm it}, it follows that
    \[
    F_{\Phi'}(t) = \ip{V_t \Phi, \Phi'}, \qquad F_{\Phi'}(t + i \pi) = \ip{V_t \Delta_+^{1/2}\Phi,\Phi'}.
    \]
    Hence if $\Phi, \Delta^{1/2}_+\Phi \in \cD$, then it follows that $\Phi \in D(V_{i\pi})$ and $V_{i\pi}\Phi = \Delta_+^{1/2}\Phi$.

    It remains to check that $\cE_{\pm} = D(V_{\pm i\pi})$, and since we have shown $\cE_{\pm} \subset D(V_{\pm i \pi})$, it only remains to check the opposite inclusion.
    As in Lemma~\ref{lem: dual standard subspaces}, let $\cL_{\pm} \subset \cD$ be the standard subspace corresponding to the involution $S_{\pm} = \theta V_{\pm i \pi}$, so that $D(V_{\pm i \pi}) = \cL_{\pm} + i \cL_{\pm}$.
    By the same lemma, $\cL_{\pm}' = \cL_{\mp}$ and $\cK_w(I_{\pm}) \subset \cL_{\pm}$.
    By Proposition~\ref{prop: unitary K weakly haag dual}, the net $\cK_w(I_{\pm})$ is Haag dual, so taking complements yields $\cL_{\mp} \subset \cK_w(I_{\mp})$.
    Thus we have shown $\cL_{\pm} = \cK_w(I_{\pm})$.
    By Proposition~\ref{prop: unitary K weakly haag dual} we have $\cK_w(I_{\pm}) = \cK_n(I_{\pm}) = \hat \cK_n(I_{\pm}) \cap \cD$.
    Putting it all together, we conclude that $\cL_{\pm} = \hat \cK_n(I_{\pm}) \cap \cD$.
    We thus have:
    \[
    D(V_{\pm i\pi}) = \cL_{\pm} + i \cL_{\pm} \subset \hat \cK_n(I_{\pm}) + i \hat \cK_n(I_{\pm}) = D(\Delta^{1/2}_{\pm}).
    \]
    Arguing as in the first half of the proof, we see that if $\Phi \in D(V_{\pm i\pi})$ then $\Delta^{1/2}_{\pm}\Phi = V_{\pm i\pi}\Phi \in \cD$.
    Hence we have that $D(V_{\pm i\pi}) \subset \cE_{\pm}$, completing our proof that $D(V_{\pm i\pi}) = \cE_{\pm}$.

    We finally compute the modular conjugation $J_+=J_-$. Let $\hat S_+$ be the involution corresponding to $\hat \cK_n(I_+)$, so that $\hat S_+ = J_+ \Delta_+^{1/2}$.
    If $x \in \cP(I_+)_{sa}$, then since $x\Omega \in \cK(I_+) \subset \hat \cK_n(I_+)$ we have
    \[
    J_+ \Delta_+^{1/2} x\Omega = \hat S_+ x\Omega = x\Omega = \theta V_{i\pi} x \Omega = \theta \Delta_+^{1/2} x\Omega.
    \]
    The first equality is the definition of $J_+$ and $\Delta_+^{1/2}$, the second is the fact that $x\Omega \in \hat \cK_n(I_+)$, the third is Theorem~\ref{thm: involutive BW}, and the fourth follows from the previous part of this proof.
    Hence $J_+$ and $\theta$ agree on $\Delta^{1/2}_+ x\Omega$ when $x=x^*$, and by antilinearity this holds for all $x \in \cP(I_+)$.
    Hence $\theta$ and $J_+$ agree on $\cP(I_-)\Omega$, which is dense by Reeh-Schlieder.
    Since both are antiunitary, we conclude $\theta = J_+$.
\end{proof}

We now turn our attention to the correspondence between vertex algebras (i.e. Wightman CFTs) and conformal nets, which were introduced in Section~\ref{sec: background unitary}.
Given a unitary M\"obius vertex algebra $\cV$, we embed $\cV$ in its Hilbert space completion $\cH$, and for $v \in \cV$ with conformal dimension $d$ and $f \in C^\infty(S^1)$ we define the smeared field operator $Y(v,f)$ to be the closure of the operator $Y^0(v,f)$ defined on $\cV$ by
\[
Y^0(v,f)u = \sum_{n \in \bbZ} \hat f(n) v_{(n+d-1)}u, \qquad u \in \cV.
\]
For analytic details, see \cite[\S2.4]{RaymondTanimotoTener22} (and also \cite{CKLW18} for a treatment in the presence of energy bounds).
We now let $\cF$ be the Wightman CFT associated with the generating set of \emph{all} quasiprimary vectors $v$, i.e. 
\[
\cF = \{ Y(v, \cdot) \, : \, v \text{ quasiprimary} \},
\]
whose domain $\cD$ embeds continuously in the Hilbert space completion $\cH$ of $\cV$.
Since $\cV$ is unitary, it is semisimple as a representation of $\mathfrak{sl}_2(\bbC)$ and therefore generated by quasiprimary fields.
Following \cite{RaymondTanimotoTener22} (and, more generally, \cite{CKLW18} when the vertex algebra is energy bounded) we have the following notion of integrability for $\cV$.
Recall that if $S$ is a set of closed operators on a Hilbert space $\cH$, the von Neumann algebra  generated by $S$ is the von Neumann algebra generated by the polar partial isometries and spectral projections of all operators $X \in S$.

\begin{defn}\label{defn: AQFT-local]}
    Let $\cV$ be a unitary M\"obius vertex algebra with Hilbert space completion $\cH$.
    The local algebra $\cA_\cV(I)$ associated with an interval $I \subset S^1$ is the von Neumann algebra generated by all operators $Y(v,f)$ with $v$ quasiprimary and $f \in C^\infty(S^1)$ with $\operatorname{supp} f \subset I$.
    We say that $\cV$ is \textbf{AQFT-local} if $\cA_\cV(I)$ and $\cA_\cV(J)$ commute element-wise when $I$ and $J$ are disjoint.
\end{defn}

We use the term AQFT-local rather than the term `strongly local' from \cite{CKLW18} as the latter definition also assumes `energy bounds,' which are not required here.
It is reasonably straightforward to show that if $\cV$ is AQFT-local, then $\cA_\cV$ is a M\"obius-covariant Haag-Kastler net \cite[Prop. 4.8]{RaymondTanimotoTener22} (which is a reproof of a result of \cite[Thm. 6.8]{CKLW18}, without energy bounds).

It is believed that every unitary M\"obius vertex algebra is AQFT-local \cite[Conj. 8.18]{CKLW18}.
There is no criterion for determining when the von Neumann algebras generated by two families of closed operators commute which is general enough to cover this situation, and this kind of commutativity can fail even under very nice circumstances (see \cite{Nelson59} for a famous example).
This challenge has presented itself from the early days of Haag-Kastler's algebraic approach to QFT, and we do not offer a general solution.
However, we can use the standard subspace approach to vertex algebra nets $\cP(I)$ to give a conceptual proof that a simpler-looking statement is equivalent AQFT-locality.
Similar results were shown by Bisognano-Wichmann in the context considered in their original article \cite[Thm. 3(f)]{BisognanoWichmann75}.

Define a family of von Neumann subalgebras $\cC^\cH(I) \subset \cB(\cH)$ by $\cC^\cH(I) = \cA_\cV(I')'$.
Let $\cH_\cC \subset \cH$ be the Hilbert space
\begin{equation}\label{eqn: HC}
\cH_\cC = \overline{\left(\bigvee_{I \subset S^1} \cC^\cH(I)\right)\Omega},
\end{equation}
where $\bigvee_I \cC^\cH(I)$ is the von Neumann algebra generated by the algebras $\cC^\cH(I)$.
Let $p_\cC$ be the projection of $\cH$ onto $\cH_\cC$, and let $\cC(I) = \cC^\cH(I)p_\cC$, the cutdown of the net of algebras $\cC^\cH$ to its vacuum Hilbert space $\cH_\cC$.

We will show that the algebras $\cC(I)$ form a Haag-Kastler net on $\cH_\cC$, but it is convenient to first check the `one-particle' version.

\begin{lem}\label{lem: standard subspace net for dual net}
    The real subspaces $\overline{\cC(I)_{sa}\Omega}$ are a M\"obius-covariant net of standard subspaces of the Hilbert space $\cH_\cC$.
\end{lem}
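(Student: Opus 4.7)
The plan is to check, in order, that each $\cK_\cC(I) := \overline{\cC(I)_{sa}\Omega}$ is a standard subspace of $\cH_\cC$, that the axioms of Definition~\ref{defn: unitary net of subspaces} hold for the family $\{\cK_\cC(I)\}$, and that the modular conjugation $J_{\cK_\cC(I_+)}$ equals $\theta|_{\cH_\cC}$. The key preliminary observation is that $\cH_\cC$ is invariant under every $\cC^\cH(I)$ (by construction), under $U(\gamma)$ (since $U(\gamma)\cC^\cH(I)U(\gamma)^{-1} = \cC^\cH(\gamma I)$ and $U(\gamma)\Omega = \Omega$), under $\theta$ (since $\theta \cC^\cH(I)\theta = \cC^\cH(\overline{I})$ and $\theta\Omega = \Omega$), and under the modular unitaries $\Delta_+^{it} = V_{-2\pi t}$ of $\hat\cK_n(I_+)$ from Theorem~\ref{thm: modular for unitary}; in particular $p_\cC$ commutes with each of these.

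For standardness, I would first establish $\cK_\cC(I) \subseteq \hat\cK_n(I)$, from which separating is inherited. The inclusion uses only $\cC^\cH(I) = \cA_\cV(I')'$: for $x \in \cC^\cH(I)_{sa}$ and $y \in \cA_\cV(I')_{sa}$ one has $[x,y]=0$, so $\ip{x\Omega,y\Omega} = \overline{\ip{x\Omega,y\Omega}} \in \bbR$, placing $\cK_\cC(I)$ in $\hat\cK_n(I')'$, which equals $\hat\cK_n(I)$ by Haag duality (Proposition~\ref{prop: unitary K weakly haag dual}). Cyclicity of $\cK_\cC(I)$ in $\cH_\cC$ would come from a standard Reeh--Schlieder argument for the dual net $\cC^\cH$ acting on $\cH_\cC$: the representation $U|_{\cH_\cC}$ has positive energy, $\Omega$ is jointly cyclic for $\bigvee_I \cC^\cH(I)$ by construction, and M\"obius covariance together with analytic continuation in the rotation parameter yield cyclicity of $\Omega$ for each individual $\cC^\cH(I)$, so $\cK_\cC(I) + i\cK_\cC(I) \supseteq \cC^\cH(I)\Omega$ is dense.

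Isotony and M\"obius covariance are inherited from the corresponding properties of the algebras $\cC^\cH$. Locality $\cK_\cC(I') \subseteq \cK_\cC(I)'$ in $\cH_\cC$ follows at once from $\cK_\cC(I), \cK_\cC(I') \subseteq \hat\cK_n(I), \hat\cK_n(I')$ together with Haag duality $\hat\cK_n(I)' = \hat\cK_n(I')$, since reality of the pairing $\ip{\xi,\eta}$ is an ambient statement in $\cH$.

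For the modular conjugation, $\cK_\cC$ is now a positive-energy M\"obius-covariant net of standard subspaces on $\cH_\cC$, so Longo's Bisognano--Wichmann theorem \cite[Thm.~3.3.1]{LongoLectureNotesI} gives $\Delta_{\cK_\cC(I_+)}^{it} = U(v_{-2\pi t})|_{\cH_\cC}$. The same unitary group also coincides with the restriction $\Delta_+^{it}|_{\cH_\cC}$ (using $\Delta_+^{it} = V_{-2\pi t}$ and invariance of $\cH_\cC$), so spectral uniqueness gives $\Delta_{\cK_\cC(I_+)}^{1/2} = \Delta_+^{1/2}|_{\cH_\cC}$. For $\xi \in \cK_\cC(I_+) \subseteq \hat\cK_n(I_+)$, the identities $\xi = S_+\xi = \theta\Delta_+^{1/2}\xi$ (invoking $J_+ = \theta$ from Theorem~\ref{thm: modular for unitary}) and $\xi = S_{\cK_\cC(I_+)}\xi = J_{\cK_\cC(I_+)}\theta\xi$ together show that the $\bbC$-linear unitary $J_{\cK_\cC(I_+)}\theta$ fixes every element of $\cK_\cC(I_+)$; by $\bbC$-linearity it fixes the dense subspace $\cK_\cC(I_+) + i\cK_\cC(I_+)$ of $\cH_\cC$, so $J_{\cK_\cC(I_+)}\theta = \mathrm{id}_{\cH_\cC}$, i.e.\ $J_{\cK_\cC(I_+)} = \theta|_{\cH_\cC}$. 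The hard part will be precisely this final identification: since AQFT-locality of $\cA_\cV$ is not assumed, $\cK_\cC(I_+)$ may be strictly smaller than $\hat\cK_n(I_+) \cap \cH_\cC$, so one cannot naively restrict the modular conjugation of $\hat\cK_n(I_+)$; it is essential to fix $\Delta_{\cK_\cC(I_+)}^{1/2}$ intrinsically via BW on $\cH_\cC$ and then match to the ambient modular data on $\cH$ by spectral uniqueness.
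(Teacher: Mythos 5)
Your overall architecture matches the paper's: the paper proves the ``net of standard subspaces'' assertion in the proof of Lemma~\ref{lem: standard subspace net for dual net} (with the cyclicity step packaged via an auxiliary space $\tilde\cH_\cC$ spanned by the $\cC(I)\Omega$, shown to equal $\cH_\cC$ -- equivalent to your Reeh--Schlieder argument), and it establishes $J=\theta|_{\cH_\cC}$ inside the proof of Theorem~\ref{thm: commutant net locality} from the inclusion $\overline{\cC(I_+)_{sa}\Omega}\subset\hat\cK_n(I_+)$, invariance under $\Delta_+^{it}=U(v_{-2\pi t})$, and \cite[Cor.~2.1.8]{LongoLectureNotesI}; your route via Longo's Bisognano--Wichmann theorem applied directly to the net on $\cH_\cC$ followed by matching of the modular groups is an equivalent packaging of the same idea, and your locality argument via the inclusion into $\hat\cK_n$ plus Haag duality (Proposition~\ref{prop: unitary K weakly haag dual}) is exactly what the paper uses implicitly.

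There is, however, one genuine gap in the load-bearing step. To prove $\overline{\cC(I)_{sa}\Omega}\subseteq\hat\cK_n(I)$ you pair $\cC^\cH(I)_{sa}\Omega$ against $\cA_\cV(I')_{sa}\Omega$; this only places $\overline{\cC(I)_{sa}\Omega}$ in the complement of $\overline{\cA_\cV(I')_{sa}\Omega}$, and to deduce membership in $\hat\cK_n(I')'=\hat\cK_n(I)$ you would additionally need $\hat\cK_n(I')=\overline{\cP(I')_{sa}\Omega}\subseteq\overline{\cA_\cV(I')_{sa}\Omega}$. That containment is not available here: without energy bounds the smeared fields need not be essentially self-adjoint, so one cannot approximate a self-adjoint polynomial in smeared fields applied to $\Omega$ by vectors $b\Omega$ with $b$ a self-adjoint element of the von Neumann algebra. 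The repair is to pair directly against $\cP(I')_{sa}\Omega$: an operator $x\in\cC^\cH(I)=\cA_\cV(I')'$ commutes with the spectral projections and polar isometries of each closed operator $Y(v,f)$ with $\supp f\subset I'$, hence $xY(v,f)\subseteq Y(v,f)x$, and iterating gives $\ip{x\Omega,y\Omega}=\overline{\ip{\Omega,xy\Omega}}=\overline{\ip{x\Omega,y\Omega}}\in\bbR$ for $x\in\cC^\cH(I)_{sa}$ and $y\in\cP(I')_{sa}$. This is precisely the step the paper takes (``$\cC^\cH(I_+)$ commutes with $\cP(I_-)$'' in the proof of Theorem~\ref{thm: commutant net locality}). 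With this substitution the rest of your argument, including the identification of the modular conjugation with $\theta|_{\cH_\cC}$ via Theorem~\ref{thm: modular for unitary}, goes through.
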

\begin{proof}
    We verify that these subspaces satisfy Definition~\ref{defn: unitary net of subspaces}.
    Isotony of the net is clear, as is the spectrum condition.
    The algebras $\cC^\cH(I)$ are M\"obius covariant with respect to the given unitary representation on $\cH$. It follows that $\cH_\cC$ is invariant under the M\"obius group, and that the algebras $\cC(I)$ are M\"obius covariant, and indeed that the subspaces $\overline{\cC(I)_{sa}\Omega}$ are as well.
    
    We next check locality.
    By definition $\cC^\cH(I)$ commutes with $\cP(I')$, and therefore if $x \in \cC^\cH(I)_{sa}$ and $Y \in \cP(I')_{sa}$ we have $\ip{x\Omega,Y\Omega} \in \bbR$.
    Recall that $\hat \cK_n(I')$ is the norm closure of $\cP(I')_{sa}\Omega$ in $\cH$, and observe that it now follows that
    \[
    \cC(I)_{sa}\Omega = \cC^\cH(I)_{sa}\Omega \subset \hat \cK_n(I')' = \hat \cK_n(I),
    \]
    with the last equality by Proposition~\ref{prop: unitary K weakly haag dual}.
    The locality of the subspaces $\overline{\cC(I)_{sa}\Omega}$ now follows from the locality of the net $\hat \cK_n(I)$.
    
    It remains to check the cyclicity condition.
    Let $\tilde \cH_\cC \subset \cH_\cC$ be the closed span of the subspaces $\cC(I)\Omega$.
    Then evidently the given subspaces form a M\"obius-covariant net of subspaces of $\tilde \cH_\cC$.
    Then by the Reeh-Schlieder property \cite[Thm. 3.2.1]{LongoLectureNotesI} we have $\tilde \cH_\cC = \overline{\cC(I)\Omega}$ for each interval $I$.
    Hence $\tilde \cH_\cC$ is invariant under each algebra $\cC(I)$.
    It follows that $\tilde \cH_\cC$ is invariant under $\bigvee_I \cC(I)$, from which it follows that $\cH_\cC \subset \tilde \cH_\cC$.
    Hence $\cH_\cC = \tilde \cH_\cC$, which completes the proof of the cyclicity axiom.
\end{proof}

\begin{thm}\label{thm: commutant net locality}
Let $\cV$ be a unitary M\"obius vertex algebra with PCT operator $\theta$, let $\cC^\cH(I) = \cA_\cV(I')'$ be the dual net of von Neumann algebras, and let $\cC(I)$ be the cutdown of $\cC^\cH(I)$ to the vacuum Hilbert space $\cH_\cC$, as described above.
Then $\cC$ is a M\"obius-covariant Haag-Kastler net on $\cH_\cC$.
The modular conjugation associated with $\cC(I_+)$ and $\Omega$ is $\theta|_{\cH_\cC}$.
\end{thm}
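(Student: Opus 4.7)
The plan is to reduce the Haag-Kastler axioms---and in particular locality---to an application of Tomita-Takesaki theory to $\cC(I_+)$, using the standard subspace net $\cK_\cC(I) = \overline{\cC(I)_{sa}\Omega}$ produced in Lemma~\ref{lem: standard subspace net for dual net}. As a preliminary, I would observe that $\cC^\cH(I) \subset \bigvee_J \cC^\cH(J)$ forces each $\cC^\cH(I)$ to preserve $\cH_\cC$, so $p_\cC \in \cC^\cH(I)'$ and the cutdown $\cC(I) = \cC^\cH(I) p_\cC$ is a von Neumann algebra on $\cH_\cC$. Isotony for $\cC$ then follows from $I_1 \subset I_2 \Rightarrow I_2' \subset I_1'$ and isotony of $\cA_\cV$; M\"obius covariance and positivity of $L_0$ descend from $\cH$ to $\cH_\cC$; and the vacuum axiom is covered by Lemma~\ref{lem: standard subspace net for dual net} together with $\Mob$-invariance of $\Omega$.

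The only remaining axiom is locality, which will follow once Haag duality $\cC(I_+)' = \cC(I_-)$ is established. To that end, I would first check that $\Omega$ is cyclic and separating for each $\cC(I)$: cyclicity from that of the standard subspace $\cK_\cC(I)$, and separation by invoking the Reeh-Schlieder property for $\cP$ (\cite[App. A]{CarpiRaymondTanimotoTener24ax}) and approximating each $x\Omega$ with $x \in \cP(I)$ by $a_n\Omega$, where $a_n$ is a bounded element of $\cA_\cV(I)$ obtained from spectral cutoffs of the closures of the smeared fields composing $x$. This would show that $\Omega$ is cyclic for $\cA_\cV(I)$ on $\cH$, hence separating for $\cC^\cH(I') = \cA_\cV(I)'$, and the separating property descends to $\cC(I')$ on $\cH_\cC$.

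With cyclic and separating $\Omega$ in hand, Tomita-Takesaki theory applied to $\cC(I_+)$ yields $J^{\mathrm{alg}}_+ \cC(I_+) J^{\mathrm{alg}}_+ = \cC(I_+)'$, where $J^{\mathrm{alg}}_+$ is the modular conjugation. Since the modular data of a von Neumann algebra with cyclic and separating vector agree with those of its associated standard subspace, $J^{\mathrm{alg}}_+$ coincides with the modular conjugation $J_+$ of $\cK_\cC(I_+)$, which Lemma~\ref{lem: standard subspace net for dual net} identifies with $\theta|_{\cH_\cC}$. On the other hand, the PCT identity $\theta \cA_\cV(I) \theta = \cA_\cV(\overline{I})$ combined with antiunitarity of $\theta$ gives $\theta \cC^\cH(I_+) \theta = \cC^\cH(I_-)$, which descends to $\theta \cC(I_+) \theta = \cC(I_-)$ on $\cH_\cC$. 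Combining,
\[
\cC(I_+)' = \theta \cC(I_+) \theta = \cC(I_-),
\]
i.e.\ Haag duality for the pair $(I_+, I_-)$. M\"obius covariance would then upgrade this to $\cC(I)' = \cC(I')$ for every interval $I$, and locality follows by isotony: if $I \cap J = \emptyset$ then $J \subset I'$, so $\cC(J) \subset \cC(I') = \cC(I)'$.

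The hard part is the identification $J_+ = \theta|_{\cH_\cC}$ provided by Lemma~\ref{lem: standard subspace net for dual net}, which itself rests on Bisognano-Wichmann for positive-energy M\"obius-covariant nets of standard subspaces and on the uniqueness of the modular conjugation implementing the reflection exchanging $I_+$ and $I_-$. Once that is in hand, the rest of the argument is an essentially formal combination of Tomita-Takesaki theory, Reeh-Schlieder, and the PCT identity $\theta \cA_\cV(I)\theta = \cA_\cV(\overline{I})$.
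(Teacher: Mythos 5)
Your overall architecture matches the paper's: every axiom except locality is routine, and locality is obtained from $\cC(I_+)' = J_\cN\,\cC(I_+)\,J_\cN = \theta\,\cC(I_+)\,\theta = \cC(I_-)$, where the middle equality is Tomita--Takesaki and the last is the PCT covariance $\theta\cA_\cV(I)\theta = \cA_\cV(\overline{I})$ of the dual net. The gap is in the one step you yourself flag as ``the hard part'' and then dispose of incorrectly: the identification of the modular conjugation with $\theta|_{\cH_\cC}$. You attribute it to Lemma~\ref{lem: standard subspace net for dual net} together with ``Bisognano--Wichmann for positive-energy nets of standard subspaces and the uniqueness of the modular conjugation implementing the reflection exchanging $I_+$ and $I_-$.'' There is no such uniqueness: abstract Bisognano--Wichmann tells you that $J_\cN$ is an antiunitary involution fixing $\Omega$ and conjugating the M\"obius representation by the reflection $z \mapsto \overline{z}$, but $\theta$ is another operator with exactly these properties, and two such operators may differ by a nontrivial unitary commuting with $U$ and fixing $\Omega$ (think of an internal symmetry); nothing in the standard-subspace machinery alone ties $J_\cN$ to the adjoint operation on the fields. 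Moreover, citing Lemma~\ref{lem: standard subspace net for dual net} here is effectively circular: the written proof of that lemma only verifies the net-of-standard-subspaces axioms, and the modular-conjugation claim stated there is in fact proved inside the proof of Theorem~\ref{thm: commutant net locality} itself.

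The missing argument is the comparison of $\cN := \overline{\cC(I_+)_{sa}\Omega}$ with the ambient field-theoretic standard subspace. Since $\cC^\cH(I_+) = \cA_\cV(I_-)'$ commutes with the smeared fields localized in $I_-$, one gets $\cC(I_+)_{sa}\Omega \subset \hat\cK_n(I_-)' = \hat\cK_n(I_+)$ by Haag duality for the net of standard subspaces (Proposition~\ref{prop: unitary K weakly haag dual}), hence $\cN \subset \hat\cK_n(I_+)$. As $\cN$ is invariant under $\Delta_+^{it} = U(v_{-2\pi t})$, the modular operator of $\cN$ is the restriction of $\Delta_+^{1/2}$ (\cite[Cor. 2.1.8]{LongoLectureNotesI}), so $J_\cN x^*\Omega = \Delta_\cN^{1/2}x\Omega = \Delta_+^{1/2}x\Omega = \theta x^*\Omega$ for $x \in \cC(I_+)$, the last equality being Theorem~\ref{thm: modular for unitary} --- which is where the paper's hard analytic input (Theorem~\ref{thm: involutive BW}, $V_{i\pi}x\Omega = \theta x^*\Omega$) finally enters. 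Without this chain the identification $J_\cN = \theta|_{\cH_\cC}$ does not follow. A secondary remark: your route to the separating property of $\Omega$ via spectral cutoffs is also more delicate than you indicate, since approximating $\varphi_1(f_1)\cdots\varphi_k(f_k)\Omega$ by $a\Omega$ with $a \in \cA_\cV(I)$ bounded is not a one-line argument for $k \ge 2$; but that is not where the proof lives or dies.
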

\begin{proof}
The only axiom which needs to be checked is locality.
This will follow from the assertion regarding the modular conjugation, which we now verify.
Let $\cN = \overline{\cC(I_+)_{sa}\Omega}$, which is a standard subspace of $\cH_\cC$ by Lemma~\ref{lem: standard subspace net for dual net}.
Recall from the proof of Lemma~\ref{lem: standard subspace net for dual net} that $\cN \subset \hat \cK_n(I_+)$.
From the theory of M\"obius covariant nets of subspaces we know that the modular automorphism group of $\hat \cK_n(I_+)$ is $\Delta_+^{it} = U(v_{-2 \pi t})$ \cite[Thm. 3.3.1]{LongoLectureNotesI}.
As $\cN$ is invariant under $\Delta_+^{it}$, the modular automorphism group for $\cN$ is $\Delta_+^{it}|_{\cH_\cC}$ \cite[Cor. 2.1.8]{LongoLectureNotesI}.
Let $S_\cN$ be the involution on $\cH_\cC$ corresponding to $\cN$, and let $S_\cN = J_\cN\Delta_\cN^{1/2}$ be the polar decomposition.
For $\xi \in D(\Delta_\cN^{1/2})$, the map $\Delta_\cN^{it}\xi$ extends to a continuous function on the strip $\{-\tfrac12 \le \Im(\tau) \le 0\}$, holomorphic on the interior, mapping $\tau=-i/2$ to $\Delta^{1/2}_\cN\xi$.
As the same holds for $\Delta_+^{it}$, we have $\Delta^{1/2}_\cN\xi = \Delta^{1/2}_+\xi$ for $\xi \in \cN + i \cN$, and in particular for $\xi$ of the form $x\Omega$ with $x \in \cC(I_+)$.
Thus if $x \in \cC(I_+)$ we have
\[
J_\cN x^*\Omega = \Delta_\cN^{1/2} x \Omega = \Delta_+^{1/2}x\Omega = \theta x^* \Omega
\]
with the last equality by Theorem~\ref{thm: modular for unitary}.
Hence by Reeh-Schlieder (which in this case follows from Lemma~\ref{lem: standard subspace net for dual net}) we have $J_\cN = \theta|_{\cH_\cC}$, proving our claim about the modular conjugation for $\cC(I_+)$.

We now complete the proof of locality, for which it suffices to show that $\cC(I_+)$ and $\cC(I_-)$ commute.
By the main theorem of Tomita-Takesaki theory, we have $\theta \cC(I_+) \theta = \cC(I_+)'$.
On the other hand, since $\theta \cA_\cV(I_{\pm}) \theta = \cA_\cV(I_\mp)$, it follows that $\theta \cC^\cH(I_+) \theta = \cC^\cH(I_-)$.
Since $\theta$ leaves $\cH_\cC$ invariant we have $\theta \cC(I_+) \theta = \cC(I_-)$.
Hence $\cC(I_+)' = \cC(I_-)$, completing the proof of locality.
\end{proof}

\begin{cor}\label{cor: separating implies local}
    Let $\cV$ be a unitary M\"obius vertex algebra with vacuum vector $\Omega$ and PCT operator $\theta$. Then the following are equivalent:
    \begin{enumerate}[1)]
        \item $\cV$ is AQFT-local
        \item $\Omega$ is separating for $\cA_\cV(I)$, for some (or any) interval $I$
    \end{enumerate}
    If the above hold, then $\theta$ is the modular conjugation for $\cA(I_+)$ with respect to the vacuum state.
\end{cor}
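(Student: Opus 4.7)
The plan is to reduce both implications to Theorem~\ref{thm: commutant net locality} and apply Tomita-Takesaki to the dual net once its vacuum Hilbert space exhausts $\cH$. The ``some versus any'' ambiguity is handled at the outset: since $U(\gamma)\cA_\cV(I)U(\gamma)^{-1} = \cA_\cV(\gamma I)$ and $U(\gamma)\Omega = \Omega$, separating $\Omega$ for any single $\cA_\cV(I_0)$ automatically implies separating $\Omega$ for every $\cA_\cV(I)$, independently of AQFT-locality.

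For $(1) \Rightarrow (2)$ I would simply invoke that AQFT-locality makes $\cA_\cV$ a M\"obius-covariant Haag-Kastler net by \cite[Prop. 4.8]{RaymondTanimotoTener22}, so Reeh-Schlieder gives $\Omega$ separating for each $\cA_\cV(I)$.  The substantive direction is $(2) \Rightarrow (1)$.  Assuming $\Omega$ is separating for $\cA_\cV(I_+)$, it is cyclic for $\cA_\cV(I_+)' = \cC^\cH(I_-)$, so $\bigl(\bigvee_I \cC^\cH(I)\bigr)\Omega$ is dense in $\cH$ and hence $p_\cC = 1$, i.e.\ $\cH_\cC = \cH$.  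Theorem~\ref{thm: commutant net locality} then promotes $\cC(I) = \cC^\cH(I)$ to a M\"obius-covariant Haag-Kastler net on all of $\cH$ whose modular conjugation for $\cC(I_+)$ is $\theta$.

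With this in hand, I would apply Tomita-Takesaki to $\cC(I_+)$ with its cyclic/separating vacuum, obtaining $\theta \cC(I_+)\theta = \cC(I_+)'$.  The PCT relation $\theta \cP(I)\theta = \cP(\bar I)$ recalled at the start of Section~\ref{sec: sesquilinear} lifts to the generated von Neumann algebras by antiunitarity of $\theta$, giving $\theta\cA_\cV(I_\pm)\theta = \cA_\cV(I_\mp)$.  Combining,
\[
\theta\cC(I_+)\theta = \theta\cA_\cV(I_-)'\theta = \cA_\cV(I_+)',
\]
so $\cC(I_+)' = \cA_\cV(I_+)'$, and taking commutants yields $\cA_\cV(I_+) = \cC(I_+) = \cA_\cV(I_-)'$.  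In particular $\cA_\cV(I_+)$ and $\cA_\cV(I_-)$ commute; the general case of disjoint $I,J$ follows by choosing $\gamma \in \Mob$ with $\gamma(I_+) = I$ (so $\gamma(I_-) = I'$) and using $\cA_\cV(J) \subset \cA_\cV(I')$ by isotony.  The modular conjugation assertion is immediate from $\cA_\cV(I_+) = \cC(I_+)$ and Theorem~\ref{thm: commutant net locality}.

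The real work has already been done in Theorem~\ref{thm: commutant net locality} (which in turn depends on Theorem~\ref{thm: modular for unitary} to identify $\theta$ as the modular conjugation for the Wightman standard subspace net).  Within this corollary the only technical point to verify is that the PCT identity lifts from unbounded smeared fields to the generated von Neumann algebras, which is routine from antiunitarity of $\theta$ and preservation of polar decompositions and spectral projections under antiunitary conjugation.
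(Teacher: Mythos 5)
Your proposal is correct and follows essentially the same route as the paper: the forward direction is Reeh--Schlieder for the Haag--Kastler net, and the converse reduces to Theorem~\ref{thm: commutant net locality} after observing that separability of $\Omega$ for $\cA_\cV(I)$ makes it cyclic for $\cC^\cH(I')$ and hence forces $\cH_\cC = \cH$. The only cosmetic difference is the concluding step: the paper obtains $\cC(I)=\cA_\cV(I)$ by citing Haag duality for the Haag--Kastler net $\cC$, whereas you rederive it from $J_+=\theta$ via Tomita--Takesaki together with $\theta\cA_\cV(I_\pm)\theta=\cA_\cV(I_\mp)$ --- which is exactly the mechanism already used inside the proof of Theorem~\ref{thm: commutant net locality} to establish locality of $\cC$, so the two arguments coincide in substance.
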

\begin{proof}
    If $\cV$ is AQFT-local, then $\cA_\cV(I)$ is a M\"obius-covariant Haag-Kastler net, and $\Omega$ is separating for $\cA_\cV(I)$ by the Reeh-Schlieder property.
    Conversely, if $\Omega$ is separating, then it is cyclic for $\cC^\cH(I) = \cA_\cV(I')'$, and $\cH_\cC = \cH$.
    Hence by Theorem~\ref{thm: commutant net locality}, $\cC(I)$ is a M\"obius-covariant Haag-Kastler net on $\cH$, and by Haag duality for that net \cite[Thm. 2.19]{GabbianiFrohlich93} we have $\cC(I)=\cA_\cV(I)$, completing the proof.
    The computation of the modular conjugation follows from Theorem~\ref{thm: commutant net locality} as well.
\end{proof}
We note that the claim about the modular conjugation  could have alternatively been proven directly from \cite[Thm. B.6]{CKLW18}, under the assumption of energy bounds.
We also note that the proof of Corollary~\ref{cor: separating implies local} also shows that $\cV$ is AQFT-local when $\Omega$ is separating for $\bigcap_I \cA_\cV(I)$.

\def\lfhook#1{\setbox0=\hbox{#1}{\ooalign{\hidewidth
  \lower1.5ex\hbox{'}\hidewidth\crcr\unhbox0}}}

\end{document}